\tikzset{
    ncbar angle/.initial=90,
    ncbar/.style={
        to path=(\tikztostart)
        -- ($(\tikztostart)!#1!\pgfkeysvalueof{/tikz/ncbar angle}:(\tikztotarget)$)
        -- ($(\tikztotarget)!($(\tikztostart)!#1!\pgfkeysvalueof{/tikz/ncbar angle}:(\tikztotarget)$)!\pgfkeysvalueof{/tikz/ncbar angle}:(\tikztostart)$)
        -- (\tikztotarget)
    },
    ncbar/.default=0.5cm,
}
\tikzset{square left brace/.style={ncbar=0.5cm}}
\tikzset{square right brace/.style={ncbar=-0.5cm}}
\tikzset{small square left brace/.style={ncbar=0.2cm}}
\tikzset{small square right brace/.style={ncbar=-0.2cm}}
\title{Parameterized Intractability of Even Set and\\ Shortest Vector Problem from Gap-ETH}
\author{
Arnab Bhattacharyya\thanks{Indian Institute of Science, India. Email: \texttt{arnabb@iisc.ac.in}. This work was partially supported by Ramanujan Fellowship DSTO 1358.}
\and
Suprovat Ghoshal\thanks{Indian Institute of Science, India. Email: \texttt{suprovat@iisc.ac.in}.}
\and
Karthik C.\ S.\thanks{Weizmann Institute of Science, Israel. Email: \texttt{karthik.srikanta@weizmann.ac.il}. This work was supported by  Irit Dinur's ERC-CoG grant 772839.}
\and
Pasin Manurangsi\thanks{University of California, Berkeley, USA. Email: \texttt{pasin@berkeley.edu}. Some part of this work was done while the author was visiting Indian Institute of Science and supported by the Indo-US Joint Center for Pseudorandomness in Computer Science.}
}
\date{}
\begin{document}

\maketitle
\thispagestyle{empty}

\begin{abstract}
The \emph{$k$-Even Set} problem is a parameterized variant of the \emph{Minimum Distance Problem} of linear codes over $\mathbb F_2$, which can be stated as follows: given a generator matrix $\mathbf A$ and an integer $k$, determine whether the code generated by $\mathbf A$ has distance at most $k$. Here, $k$ is the parameter of the problem. The question of whether $k$-Even Set is fixed parameter tractable (FPT) has been repeatedly raised in literature and has earned its place in Downey and Fellows' book (2013) as one of the ``most infamous'' open problems in the field of Parameterized Complexity. \vspace{0.1cm}

In this work, we show that $k$-Even Set does not admit FPT algorithms under the (randomized) Gap Exponential Time Hypothesis (Gap-ETH) [Dinur'16, Manurangsi-Raghavendra'16]. In fact, our result rules out not only exact FPT algorithms, but also any constant factor FPT approximation algorithms for the problem. Furthermore, our result holds even under the following weaker assumption, which is also known as the \emph{Parameterized Inapproximability Hypothesis } (PIH) [Lokshtanov et al.'17]: no (randomized) FPT algorithm can distinguish a satisfiable 2CSP instance from one which is only $0.99$-satisfiable (where the parameter is the number of variables).\vspace{0.1cm}

We also consider the parameterized  \emph{$k$-Shortest Vector Problem (SVP)}, in which we are given a lattice whose basis vectors are integral and an integer $k$, and the goal is to determine whether the norm of the shortest vector (in the $\ell_p$ norm for some fixed $p$) is at most $k$. Similar to $k$-Even Set, this problem is also a long-standing open problem in the field of Parameterized Complexity. We show that, for any $p > 1$, $k$-SVP is hard to approximate (in FPT time) to some constant factor, assuming PIH. Furthermore, for the case of $p = 2$, the inapproximability factor can be amplified to any constant. 
%
\end{abstract}

{\small {\bf Keywords: } Parameterized Complexity, Inapproximability, Even Set, Minimum Distance Problem, Shortest Vector Problem}

\newpage

\section{Introduction} \label{sec:intro}

The study of error-correcting codes gives rise to many interesting computational problems. One of the most fundamental among these is the problem of computing the distance of a linear code. In this problem, which is commonly referred to as the \emph{Minimum Distance Problem (\MDP)}, we are given as input a generator matrix $\bA \in \mathbb{F}_2^{n \times m}$ of a binary\footnote{Note that \MDP\ can be defined over larger fields as well; we discuss more about this in Section~\ref{sec:open}.} linear code and an integer $k$. The goal is to determine whether the code has distance at most $k$. Recall that the distance of a linear code is $\underset{\vzero \ne \bx \in \mathbb{F}_2^m}{\min}\ \|\bA\bx\|_0$ where $\| \cdot \|_0$ denote the 0-norm (aka the Hamming norm).

The study of this problem dates back to at least 1978 when Berlekamp \etal~\cite{BMT78} conjectured that it is \NP-hard. This conjecture remained open for almost two decades until it was positively resolved by Vardy~\cite{Var97a,Var97b}. Later, Dumer \etal~\cite{DMS03} strengthened this intractability result by showed that, even \emph{approximately} computing the minimum distance of the code is hard. Specifically, they showed that, unless $\NP = \RP$, no polynomial time algorithm can distinguish between a code with distance at most $k$ and one whose distance is greater than $\gamma \cdot k$ for any constant $\gamma \geqs 1$. Furthermore, under stronger assumptions, the ratio can be improved to superconstants and even almost polynomial. Dumer \etal's result has been subsequently derandomized by Cheng and Wan~\cite{CW12} and further simplified by Austrin and Khot~\cite{AK14} and Micciancio~\cite{Mic14}.

While the aforementioned intractability results rule out not only efficient algorithms but also efficient approximation algorithms for \MDP, there is another popular technique in coping with \NP-hardness of problems which is not yet ruled out by the known results: \emph{parameterization}.

In parameterized problems, part of the input is an integer that is designated as the parameter of the problem, and the goal is now not to find a polynomial time algorithm but a \emph{fixed parameter tractable (FPT)} algorithm. This is an algorithm whose running time can be upper bounded by some (computable) function of the parameter in addition to some polynomial in the input length. Specifically, for \MDP, its parameterized variant\footnote{Throughout Sections~\ref{sec:intro}  and \ref{sec:overview}, for a computational problem $\Pi$, we denote its parameterized variant by $k$-$\Pi$, where $k$ is the parameter of the problem.} $k$-\MDP\ has $k$ as the parameter and the question is whether there exists an algorithm that can decide if the code generated by $\bA$ has distance at most $k$ in time $T(k) \cdot \poly(mn)$ where $T$ can be any computable function that depends only on $k$.

The parameterized complexity of $k$-\MDP\ was first questioned by Downey \etal~\cite{DFVW99}\footnote{$k$-\MDP\ is formulated slightly differently in~\cite{DFVW99}. There, the input contains a parity-check matrix instead of the generator matrix, but, since we can efficiently compute one given the other, the two formulations are equivalent.}$^,$\footnote{$k$-\MDP\ is commonly referred to in the area of parameterized complexity as the \emph{$k$-Even Set} problem due to its graph theoretic interpretation (see ~\cite{DFVW99}).} who showed that parameterized variants of several other coding-theoretic problems, including the Nearest Codeword Problem and the Nearest Vector Problem\footnote{The Nearest Vector Problem is also referred to in the literature as the Closest Vector Problem.} which we will discuss in more details in Section~\ref{sec:ncp-cvp},  are $\W[1]$-hard. Thereby, assuming the widely believed $\W[1] \ne \FPT$ hypothesis, these problems are rendered intractable from the parameterized perspective. Unfortunately, Downey \etal~fell short of proving such hardness for $k$-\MDP\ and left it as an open problem: 

\begin{question}\label{openq1}
Is $k$-\MDP\ fixed parameter tractable? 
\end{question}

Although almost two decades have passed, the above question remains unresolved to this day, despite receiving significant attention from the community. In particular, the problem was listed as an open question in the seminal book of Downey and Fellows~\cite{DF99} and has been reiterated numerous times over the years~\cite{DGMS07,FGMS12,GKS12,DF13,CFJKLMPS14,CFKLMPPS15,BGGS16,CFHW17,Maj17}. In fact, in their second book~\cite{DF13}, Downey and Fellows even include this problem as one of the six\footnote{So far, two of the six problems have been resolved: that of parameterized complexity of $k$-Biclique~\cite{Lin15} and that of parameterized approximability of $k$-Dominating Set~\cite{
KLM18}.} ``most infamous'' open questions in the area of Parameterized Complexity.

Another question posted in Downey \etal's work~\cite{DFVW99} that remains open is the parameterized \emph{Shortest Vector Problem ($k$-\SVP)} in lattices. The input of $k$-\SVP\ (in the $\ell_p$ norm) is an integer $k \in \N$ and a matrix $\bA \in \Z^{n \times m}$ representing the basis of a lattice, and we want to determine whether the shortest (non-zero) vector in the lattice has length at most $k$, i.e., whether $\underset{\vzero \ne \bx \in \Z^{m}}{\min}\ \|\bA\bx\|_p \leqs k$. Again, $k$ is the parameter of the problem. It should also be noted here that, similar to~\cite{DFVW99}, we require the basis of the lattice to be integer-value, which is sometimes not enforced in literature (e.g.~\cite{VEB,Ajt98}). This is because, if $\bA$ is allowed to be any matrix in $\mathbb{R}^{n \times m}$, then parameterization is meaningless because we can simply scale $\bA$ down by a large multiplicative factor.

The (non-parameterized) Shortest Vector Problem (\SVP) has been intensively studied, motivated partly due to the fact that both algorithms and hardness results for the problem have numerous applications. Specifically, the celebrated LLL algorithm for \SVP~\cite{LLL} can be used to factor rational polynomials, and to solve integer programming (parameterized by the number of unknowns)~\cite{Len83} and many other computational number-theoretic problems (see e.g.~\cite{NV10}). Furthermore, the hardness of (approximating) \SVP\ has been used as the basis of several cryptographic constructions~\cite{Ajt98,AD97,Reg03,Reg05}. Since these topics are out of scope of our paper, we refer the interested readers to the following surveys for more details:~\cite{Reg06,MR-survey,NV10,Reg10}.

On the computational hardness side of the problem, van Emde-Boas~\cite{VEB} was the first to show that \SVP\ is \NP-hard for the $\ell_{\infty}$ norm, but left open the question of whether \SVP\ on the $\ell_p$ norm for $1 \leqs p < \infty$ is \NP-hard. It was not until a decade and a half later that Ajtai~\cite{Ajt96} showed, under a randomized reduction, that \SVP\ for the $\ell_2$ norm is also \NP-hard; in fact, Ajtai's hardness result holds not only for exact algorithms but also for $(1 + o(1))$-approximation algorithms as well. The $o(1)$ term in the inapproximability ratio was then improved in a subsequent work of Cai and Nerurkar~\cite{CN99}. Finally, Micciancio~\cite{Mic00} managed to achieve a factor that is bounded away from one. Specifically, Micciancio~\cite{Mic00} showed (again under randomized reductions) that \SVP\ on the $\ell_p$ norm is \NP-hard to approximate to within a factor of $\sqrt[p]{2}$ for every $1 \leqs p < \infty$. Khot~\cite{Khot05} later improved the ratio to any constant, and even to $2^{\log^{1/2 - \varepsilon}(nm)}$ under a stronger assumption. Haviv and Regev~\cite{HR07} subsequently simplified the gap amplification step of Khot and, in the process, improved the ratio to almost polynomial. We note that both Khot's and Haviv-Regev reductions are also randomized and it is still open to find a deterministic \NP-hardness reduction for \SVP\ in the $\ell_p$ norms for $1 \leqs p < \infty$ (see~\cite{Mic12}); we emphasize here that such a reduction is not known even for the exact (not approximate) version of the problem. For the $\ell_\infty$ norm, the following stronger result due to Dinur is known~\cite{Din02}: \SVP\ in the $\ell_\infty$ norm is \NP-hard to approximate to within $n^{\Omega(1/\log \log n)}$ factor (under a \emph{deterministic} reduction).

Very recently, fine-grained studies of \SVP\ have been initiated~\cite{BGS17,AD17}. The authors of~\cite{BGS17,AD17} showed that \SVP\ for any $\ell_p$ norm cannot be solved (or even approximated to some constant strictly greater than one) in subexponential time assuming the existence of a certain family of lattices\footnote{This additional assumption is only needed for $1 \leqs p \leqs 2$. For $p > 2$, their hardness is conditional only on Gap-ETH.} and the (randomized) \emph{Gap Exponential Time Hypothesis (Gap-ETH)}~\cite{D16,MR16}, which states that no randomized subexponential time algorithm can distinguish between a satisfiable 3CNF formula and one which is only 0.99-satisfiable. 
(See Hypothesis~\ref{hyp:gap-eth}.)

As with \MDP, Downey \etal~\cite{DFVW99} were the first to question the parameterized tractability of $k$-\SVP\ (for the $\ell_2$ norm). Once again, Downey and Fellows included $k$-\SVP\ as one of the open problems in both of their books~\cite{DF99,DF13}, albeit, in their second book, $k$-\SVP\ was in the ``tough customers'' list instead of the ``most infamous'' list that $k$-\MDP\ belonged to. And again, as with Open Question~\ref{openq1}, this question remains unresolved to this day:

\begin{question}\label{openq2}
Is $k$-\SVP\ fixed parameter tractable? 
\end{question}

\subsection{Our Results}\label{sec:results}

The main result of this paper is a resolution to the previously mentioned Open~Question~\ref{openq1}~and~\ref{openq2}: more specifically, we prove that $k$-\MDP\ and $k$-\SVP\  (on $\ell_p$ norm for any $p > 1$) do not admit any FPT algorithm, assuming the aforementioned (randomized) Gap-ETH (Hypothesis~\ref{hyp:gap-eth}). In fact, our result is slightly stronger than stated here in a couple of ways:
\begin{enumerate}[(1)]
\item First, our result rules out not only exact FPT algorithms but also FPT approximation algorithms as well.
\item Second, our result works even under the so-called \emph{Parameterized Inapproximability Hypothesis (PIH)}~\cite{LRSZ17}, which asserts that no (randomized) FPT algorithm can distinguish between a satisfiable 2CSP instance and one which is only 0.99-satisfiable, where the parameter is the number of variables (See Hypothesis~\ref{hyp:pih}). It is known (and simple to see) that Gap-ETH implies PIH; please refer to Section~\ref{subsec:pih}, for more details regarding the two assumptions.
\end{enumerate}

With this in mind, we can state our results starting with the parameterized intractability of $k$-\MDP, more concretely (but still informally), as follows:

\begin{theorem}[Informal; see Theorem~\ref{thm:MDPmain}]
Assuming PIH, for any $\gamma \geqs 1$ and any computable function $T$, no $T(k) \cdot \poly(nm)$-time algorithm, on input $(\bA, k) \in \F^{n \times m} \times \N$, can distinguish between
\begin{itemize}
\item the distance of the code generated by $\bA$ is at most $k$  $\left(\text{i.e., } \underset{\vzero \ne \bx \in \F^m}{\min}\ \|\bA\bx\|_0 \leqs k\right)$, and,
\item the distance of the code generated by $\bA$ is more than $\gamma \cdot k$  $\left(\text{i.e., }\underset{\vzero \ne \bx \in \F^m}{\min}\ \|\bA\bx\|_0 > \gamma \cdot k\right)$.
\end{itemize} 
\end{theorem}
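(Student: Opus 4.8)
The plan is to exhibit, for every constant $\gamma \geqs 1$, a (possibly randomized) FPT reduction from the gap-2CSP problem underlying PIH (Hypothesis~\ref{hyp:pih}) to the problem of distinguishing a binary linear code of minimum distance at most $k$ from one of minimum distance more than $\gamma \cdot k$, in which the new parameter $k$ is bounded by a computable function of the old parameter, namely the number of variables $n$. Combined with the assumed absence of an FPT algorithm for gap-2CSP, any such reduction immediately yields Theorem~\ref{thm:MDPmain}.

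The first step is to preprocess the 2CSP instance $\Psi$ (value $1$ versus $< 0.99$ on $n$ variables): by standard local modifications I would make its (bipartite) constraint graph bi-regular, and then amplify its soundness by taking $\ell$-fold parallel repetition for a sufficiently large constant $\ell = \ell(\gamma)$, obtaining a 2CSP $\Psi'$ of value $1$ versus $< \varepsilon$ for an arbitrarily small constant $\varepsilon > 0$. Parallel repetition raises both the number of variables and the alphabet size only to a constant power, so $\Psi'$ still has a parameter bounded by a function of $n$ and this remains an FPT reduction; alternatively, one could defer this amplification and instead apply it at the end by taking a constant tensor power of the final code.

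The heart of the reduction is an arithmetization of $\Psi' = (V, E, \Sigma, \{R_e\}_{e \in E})$ as a linear code. I would lay out the codeword coordinates as one \emph{variable block} per $v \in V$ and one \emph{test block} per $e \in E$, fix an inner binary code $\mathcal{C}_{\mathrm{in}}$ carrying $|\Sigma|$ distinguished codewords $\{c_\sigma\}_{\sigma \in \Sigma}$ each of weight exactly $w$, chosen so that every other nonzero codeword of $\mathcal{C}_{\mathrm{in}}$ has weight much larger than $w$, and constrain (via parity checks) the $v$-block of every codeword to lie in a copy of $\mathcal{C}_{\mathrm{in}}$ --- so the only cheap behaviour on a variable block is to honestly name a value of $v$. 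For each $e = (u, v)$ I would wire parity checks linking the $u$-, $v$- and $e$-blocks so that, whenever the $u$- and $v$-blocks equal $c_{\sigma_u}$ and $c_{\sigma_v}$, the $e$-block has weight $0$ if $(\sigma_u, \sigma_v) \in R_e$ and weight $\Omega(w)$ otherwise. Then a satisfying assignment of $\Psi'$ yields a codeword of weight exactly $k := |V| \cdot w$ (completeness); and in the soundness case every \emph{honest} codeword --- one naming a value on every variable block --- violates at least $(1 - \varepsilon)|E|$ constraints, so by bi-regularity $|E| = \Theta(|V|)$ its weight is at least $k + \Omega((1-\varepsilon)|E| w) > \gamma k$ once $\varepsilon$ is small in terms of $\gamma$.

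The main obstacle, and the technical core of the whole argument, is to rule out \emph{dishonest} codewords, especially sparse ones: a nonzero codeword need not touch all variable blocks, and if it touches only a handful its weight could plummet well below $k$, destroying the gap. I would deal with this as in the structured intermediate problems used for coding-theoretic hardness --- routing the reduction through a label-cover / MaxCover-style instance whose combinatorics force any feasible object to act on every left-vertex --- and I would engineer $\mathcal{C}_{\mathrm{in}}$ and the test gadgets (possibly via a randomized choice of coordinates, in the spirit of Dumer \etal~\cite{DMS03}) so that (i) any codeword that is nonzero but malformed on some block already pays far more than $w$ there by the distance of $\mathcal{C}_{\mathrm{in}}$, and (ii) any codeword that abstains on many variable blocks straddles many constraints and pays in their test blocks, using expansion/regularity of the constraint graph. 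Making the three weight contributions --- honest block weight, the distance penalty for malformed blocks, and the test penalty for straddled constraints --- balance so that the completeness value $k$ and the soundness value $\gamma k$ genuinely separate is exactly where the work lies; once it is done, the reduction is complete, and $\gamma$ can be pushed to any constant by taking $\varepsilon$ (equivalently, the repetition count $\ell$ or the final tensor power) large enough.
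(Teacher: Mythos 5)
Your overall plan---a direct arithmetization of the PIH gap-2CSP as a single homogeneous linear code, with variable blocks, test blocks, and an inner code of light ``value'' codewords---is genuinely different from the paper's route, but it has a real gap exactly at the point you yourself flag as the technical core, and I do not believe the gadgetry you sketch can close it. The obstruction is homogeneity. Parity checks can never force a nonzero codeword to be active on every (or even on more than one) variable block: setting a block to zero is always locally feasible in a linear code, so in the soundness case there are nonzero codewords supported on a single variable block together with its incident test blocks. With a bi-regular constraint graph of size $|E| = \Theta(|V|)$, such a codeword has weight $O(w)$ (degree times the per-edge penalty), which is far below $k = |V|\cdot w$, so the NO instances you produce are in fact YES instances of \MDP\ and the reduction collapses---unless you make the per-edge ``abstention'' penalty of order $|V| w$, which linear test gadgets cannot do for arbitrary relations $C_{uv}$ (the zero set of a linear map on pairs of blocks is a subspace, not an arbitrary relation). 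A second manifestation of the same linearity obstruction is your inner code: if $c_\sigma$ and $c_{\sigma'}$ are codewords of weight exactly $w$, then $c_\sigma + c_{\sigma'}$ is a codeword of weight at most $2w$, so you cannot have ``every other nonzero codeword much heavier than $w$'' in a linear $\mathcal{C}_{\mathrm{in}}$. Invoking ``randomized coordinates in the spirit of Dumer \etal'' does not supply the missing mechanism: the DMS machinery is not a way to penalize dishonest blocks inside a CSP arithmetization, it is a way to homogenize an already-hard \emph{inhomogeneous} instance.

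This is precisely why the paper never attempts the direct route. It first reduces the gap-2CSP to the inhomogeneous problem (maximum likelihood decoding / nearest codeword), where the target vector $\by$ is what forces at least one indicator column per vertex and per edge to be selected---the step your homogeneous encoding cannot replicate; it then amplifies that gap by a composition operator to a factor above $2$, passes to a ``sparse'' nearest codeword variant, and only then homogenizes via a new randomized gadget, sparse covering codes built from BCH codes near the sphere-packing bound (the parameterized replacement for locally dense codes, needed because here the distance must be a function of $k$ alone and the covered set can only be a Hamming ball of radius $k$, not all of $\F^m$). The final tensoring step you propose matches the paper. So the proposal would need, at minimum, a concrete substitute for this homogenization gadget---as written, the step ruling out sparse dishonest codewords is not just unproven but, for bounded-degree constraint graphs and linear gadgets, false.
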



Notice that our above result rules out FPT approximation algorithms with \emph{any} constant approximation ratio for $k$-\MDP. In contrast, we can only prove FPT inapproximability with \emph{some} constant ratio for $k$-\SVP\ in $\ell_p$ norm for $p > 1$, with the exception of $p = 2$ for which the inapproximability factor in our result can be amplified to any constant. These are stated more precisely below.

\begin{theorem}[Informal; see Theorem~\ref{thm:SVPmain}]
For any $p > 1$, there exists a constant $\gamma_p > 1$ such that, assuming PIH, for any computable function $T$, no $T(k) \cdot \poly(nm)$-time algorithm, on input $(\bA, k) \in \Z^{n \times m} \times \N$, can distinguish between
\begin{itemize}
\item the $\ell_p$ norm of the shortest vector of the lattice generated by $\bA$ is $\leqs k$  $\left(\text{i.e., }\underset{\vzero \ne \bx \in \Z^m}{\min}\ \|\bA\bx\|_p \leqs k\right)$, and,
\item the $\ell_p$ norm of the shortest vector of the lattice generated by $\bA$ is $> \gamma_p \cdot k$  $\left(\text{i.e., }\underset{\vzero \ne \bx \in \Z^m}{\min}\ \|\bA\bx\|_p > \gamma_p \cdot k\right)$.
\end{itemize} 
\end{theorem}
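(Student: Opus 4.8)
The plan is to reduce from a gap version of the Nearest Vector Problem in the $\ell_p$ norm ($k$-NVP over $\mathbb Z$), which we first establish to be FPT-hard to approximate under PIH --- by directly encoding the PIH-hard gap 2CSP instance as an integer linear system, arranging that the ``good'' distance $k$ is bounded by a computable function of the number of variables. Thus we may assume it is hard, given a basis $\mathbf B$, a target $\mathbf t$ and a radius $k$ (itself a function of the parameter), to distinguish $\min_{\mathbf x}\|\mathbf B\mathbf x-\mathbf t\|_p\leqs k$ from $\min_{\mathbf x}\|\mathbf B\mathbf x-\mathbf t\|_p>\gamma_p'\cdot k$ for some constant $\gamma_p'>1$. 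We want to \emph{homogenize} this into a \SVP\ instance whose shortest nonzero vector reflects $\min_{\mathbf x}\|\mathbf B\mathbf x-\mathbf t\|_p$. The naive attempt --- the lattice generated by the columns of $\left(\begin{smallmatrix}\mathbf B & -\mathbf t\\ \mathbf 0 & \beta\end{smallmatrix}\right)$, with vectors $(\mathbf B\mathbf x-a\mathbf t,\,a\beta)$ --- copes with $|a|\geqs2$ (take $\beta$ a large multiple of $k$) and with $|a|=1$ (the norm is then $(\|\mathbf B\mathbf x\mp\mathbf t\|_p^p+\beta^p)^{1/p}$, which separates the two cases), but breaks down at $a=0$: vectors $\mathbf B\mathbf x$ can be arbitrarily short irrespective of the NVP instance, so one would need $\lambda_1^{(p)}$ of the homogeneous block to be large, which the natural NVP lattice does not guarantee.

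The remedy, following Dumer--Micciancio--Sudan (for codes) and Micciancio and Khot (for lattices), is to route the NVP witness through a \emph{locally dense lattice} $\mathcal L$: a lattice with a polynomial-size basis, a center $\mathbf s$, and a radius $\rho$ such that (i) the ball $B_p(\mathbf s,\rho)$ contains a lattice point $\mathbf z_{\mathbf x}$ for every ``address'' $\mathbf x$ that the NVP instance needs to range over, with $\mathbf x$ recoverable from $\mathbf z_{\mathbf x}$ by a fixed linear map $\Phi$, while (ii) $\lambda_1^{(p)}(\mathcal L)\geqs\mu\cdot\rho$ for a constant $\mu>1$. We combine into one lattice $\mathcal M$ the basis of $\mathcal L$ (padded with a zero homogenizing coordinate and with the composed check $\mathbf x\mapsto\alpha\,\mathbf B\Phi(\mathbf x)$ in a middle block), together with a single ``target'' vector $(-\mathbf s,-\alpha\mathbf t,\beta)$, where $\alpha,\beta$ and (after scaling $\mathcal L$) $\rho$ are all taken to be $\Theta(k)$. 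Completeness: in the YES case, $\mathbf z_{\mathbf x}$ for a good $\mathbf x$ together with $a=1$ gives a nonzero vector of $\ell_p$-norm at most some $K=\Theta(k)$. Soundness: by the penalty coordinate $\beta$ together with $\lambda_1^{(p)}(\mathcal L)$ and a suitably large $\|\mathbf s\|_p$, any nonzero vector of $\mathcal M$ of norm $\leqs \gamma_p K$ is forced to use $a=\pm1$ and to lie within $\rho$ of $\pm\mathbf s$ in the $\mathcal L$-block, so $\Phi$ extracts a genuine NVP witness, whose check-block contribution is then at least $\alpha\gamma_p'k$ in the NO case --- and balancing the $\Theta(k)$-sized quantities $\rho,\alpha,\beta$ makes this exceed $\gamma_p K$ for a constant $\gamma_p>1$ depending only on $p,\gamma_p'$ and $\mu$.

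The heart of the argument --- and the reason $p=2$ does better than general $p>1$ --- is the construction of the locally dense lattice. For $p=2$ there are strong explicit constructions (e.g.\ lattices built from BCH codes, as in Khot's work), which suffice to get some fixed $\gamma_2>1$; and since the $\ell_2$ norm tensorizes ($\lambda_1^{(2)}(\mathcal M^{\otimes t})=\lambda_1^{(2)}(\mathcal M)^t$ in the relevant regime, by Haviv--Regev), replacing $\mathcal M$ by a constant tensor power amplifies $\gamma_2$ to an arbitrary constant while keeping both the rank polynomial and the \SVP\ parameter a function of the original parameter. For general $p>1$ one still gets \emph{some} $\mu>1$, hence some $\gamma_p>1$, from a direct construction in the spirit of Micciancio's $\sqrt[p]{2}$-hardness reduction; and --- in contrast to the subexponential-time reductions of~\cite{BGS17,AD17} --- no unproven ``nice lattice family'' assumption is needed, because the FPT setting tolerates a gadget of rank polynomial in the instance size; but the $\ell_p$ norm does not tensorize cleanly, so this is all the amplification one gets. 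Finally one checks the reduction is FPT: the \SVP\ radius $K$ is a fixed function of $k,\alpha,\beta,\rho$, each chosen as a function of $k$ alone, and $k$ is itself a computable function of the number of variables of the starting 2CSP. I expect the two genuinely technical points to be (a) the explicit locally dense $\ell_p$-lattice with $\mu$ bounded away from $1$ together with a linearly decodable dense ball, and (b) the bookkeeping that keeps $\rho,\alpha,\beta$ and $k$ depending only on the parameter --- exactly the issue the classical non-parameterized proofs never had to track.
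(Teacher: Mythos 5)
Your first step (FPT-hardness of a gap Nearest Vector Problem under PIH, with the radius a function of the parameter) matches the paper's Theorem on \snvp\ hardness, and your overall two-step architecture is the classical one. The genuine gap is in the homogenization step: you hang everything on a ``locally dense lattice'' gadget $\mathcal L$ with a parameter-bounded radius $\rho=\Theta(k)$, a center $\mathbf s$, a \emph{linear decoding map} $\Phi$ taking the lattice points of $B_p(\mathbf s,\rho)$ onto all witnesses the NVP instance ranges over, and crucially $\lambda_1^{(p)}(\mathcal L)\geqs\mu\rho$ for a constant $\mu>1$, so that the $a=0$ (homogeneous) case is killed deterministically by a minimum-distance argument. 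No such gadget is known, and the sources you invoke do not supply it. Khot's BCH lattice is not of this type: it contains short homogeneous vectors of $\ell_p^p$-norm as small as $2^p$ (e.g.\ $2\mathbf{e}_i$ padded appropriately), so it has no $\lambda_1\geqs\mu\rho$ guarantee, it comes with no decoding map $\Phi$, and its center is only found by a randomized procedure; Khot's reduction is structured precisely to avoid needing your property, by instead \emph{counting} ``good'' versus ``annoying'' vectors (the all-even, low-weight homogeneous vectors your distance bound would have to exclude) and then annihilating the annoying ones with a random homogeneous constraint modulo a prime. That counting-plus-random-sparsification route is what the paper follows, and it is why the paper's reduction is randomized and why the NVP hardness statement is strengthened to rule out $\|\mathbf B\mathbf x - w\mathbf y\|_p^p\leqs\gamma t$ for \emph{all} nonzero integers $w$ (your $|a|\geqs 2$ case) and to give Boolean, Hamming-weight-bounded YES witnesses.

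Micciancio-style constructions do not rescue the gadget either: in the non-parameterized setting their dense-ball radius and minimum distance grow with the lattice dimension, which is exactly what is forbidden here (the paper makes the analogous point for codes, and had to invent sparse covering codes with a merely samplable center to get around it for \mdp). In the parameterized regime the ball $B_p(\mathbf s,\rho)\cap\Z^h$ with $\rho^p=O(k)$ contains only $h^{O_k(1)}$ integer points, so at best one could hope to cover a sparse witness set (an analogue of the paper's \sncp/sparse-covering-code step, which you do not set up; your phrase ``every address the NVP instance needs to range over'' silently requires this restriction), and even then no construction with a constant gap $\mu>1$ in the $\ell_p$ metric together with a linear decoder is known --- this is your flagged ``technical point (a)'', but it is the crux of the theorem rather than a deferrable detail, and the paper's proof is designed specifically so that it never has to produce such an object. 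Your parenthetical claim that the FPT setting removes the need for the unproven lattice assumptions of the fine-grained works for the same reason is likewise unsubstantiated: the obstacle there is quantitative control of the gadget, not its rank.
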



\begin{theorem}[Informal; see Theorem~\ref{thm:SVPmain-boost}]
	Assuming PIH, for any computable function $T$ and constant $\gamma \geqs 1$, no $T(k) \cdot \poly(nm)$-time algorithm, on input $(\bA, k) \in \Z^{n \times m} \times \N$, can distinguish between
	\begin{itemize}
		\item the $\ell_2$ norm of the shortest vector of the lattice generated by $\bA$ is $\leqs k$  $\left(\text{i.e., }\underset{\vzero \ne \bx \in \Z^m}{\min}\ \|\bA\bx\|_2 \leqs k\right)$, and,
		\item the $\ell_2$ norm of the shortest vector of the lattice generated by $\bA$ is $> \gamma \cdot k$  $\left(\text{i.e., }\underset{\vzero \ne \bx \in \Z^m}{\min}\ \|\bA\bx\|_2 > \gamma \cdot k\right)$.
	\end{itemize} 
\end{theorem}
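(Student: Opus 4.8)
The plan is to amplify the constant-factor $\ell_2$ hardness given by Theorem~\ref{thm:SVPmain} to an arbitrary constant by \emph{tensoring lattices}, following the approach of Khot~\cite{Khot05} and Haviv--Regev~\cite{HR07}. By Theorem~\ref{thm:SVPmain} with $p = 2$, assuming PIH there is an FPT reduction that, from a gap 2CSP instance, outputs $(\bA, k)$ with $\lambda_1(L) \leqs k$ in the YES case and $\lambda_1(L) > \gamma_2 \cdot k$ in the NO case, where $L$ is the lattice generated by $\bA$, distances are in the $\ell_2$ norm, and $\gamma_2 > 1$ is a fixed constant. Given a target constant $\gamma$, fix a constant $t$ with $\gamma_2^t \geqs \gamma$, and let the new instance be $(\bA^{\otimes t}, k^t)$. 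Then $\bA^{\otimes t}$ is again integral, has dimensions $n^t \times m^t$, and generates the lattice $L^{\otimes t}$; since the $\ell_2$ norm of a pure tensor of vectors equals the product of their norms, in the YES case the vector $\bx^{\otimes t}$ (for $\bx$ a shortest vector of $L$) satisfies $\|\bA^{\otimes t}\bx^{\otimes t}\|_2 = \|\bA\bx\|_2^t \leqs k^t$, so $\lambda_1(L^{\otimes t}) \leqs k^t$. Since $t$ is constant, the new parameter $k^t$ is still bounded by a computable function of the 2CSP parameter, so composing the tensoring step with the reduction of Theorem~\ref{thm:SVPmain} is an FPT reduction; it remains only to analyze the NO case.

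The crux is the reverse inequality. It is \emph{not} true in general that $\lambda_1(L^{\otimes t}) \geqs \lambda_1(L)^t$: tensoring can create nonzero lattice vectors that are not (close to) pure tensors and whose $\ell_2$ norm is much smaller than $\lambda_1(L)^t$, which is exactly the obstruction already encountered by Khot~\cite{Khot05}. This is also why the amplification is special to $p = 2$ --- where tensor products of lattices are known to be tractable --- and why for general $p > 1$ Theorem~\ref{thm:SVPmain} only yields \emph{some} constant gap. To push the argument through, the NO instances must be \emph{tensor-robust}: the lattice has to be structured so that any vector of $L^{\otimes t}$ of $\ell_2$ norm below $\lambda_1(L)^t$ is forced to be essentially a pure tensor, and hence cannot exist once $\lambda_1(L)$ is large. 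The standard way to arrange this, following~\cite{Khot05,HR07}, is to ensure that in the NO case all short vectors of $L$ lie in (or project onto) a small, explicitly understood sublattice --- a rigidity property that is preserved under tensoring --- and the canonical device for enforcing it is an auxiliary gadget built from BCH codes, as in~\cite{HR07}.

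Accordingly, the steps are: (i) invoke Theorem~\ref{thm:SVPmain} with $p = 2$ to obtain the base instance with gap $\gamma_2$, composing, if necessary, with the Haviv--Regev BCH-code gadget so that the NO instances enjoy the rigidity property above; (ii) take the $t$-fold tensor power and reset the parameter to $k^t$; (iii) check that YES maps to $\lambda_1(L^{\otimes t}) \leqs k^t$ (immediate via $\bx^{\otimes t}$) and that the NO rigidity property propagates through the $t$ tensorings to give $\lambda_1(L^{\otimes t}) > (\gamma_2 k)^t \geqs \gamma \cdot k^t$; (iv) observe that the whole reduction runs in time $T'(k)\cdot\poly(nm)$ for a computable $T'$. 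I expect step~(iii) --- showing that the tensor power of the NO lattice has no short vector, i.e., establishing tensor-robustness/rigidity of the instances from step~(i) --- to be the main obstacle; the remaining steps are routine bookkeeping of dimensions, norms, and parameters.
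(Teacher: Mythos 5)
Your overall route is the same as the paper's (tensoring in $\ell_2$, with rigidity of NO instances coming from Khot's BCH-lattice construction and the Haviv--Regev tensoring lemma), but as written the proposal has a genuine gap: the decisive step is exactly the one you defer. You cannot obtain the needed rigidity by invoking Theorem~\ref{thm:SVPmain} as a black box and then ``composing, if necessary, with the Haviv--Regev BCH-code gadget'': the black-box statement of Theorem~\ref{thm:SVPmain} only says that no short vector exists in the NO case, which, as you yourself note, is not enough to control $\lambda_2$ of a tensor power, and no post-hoc composition step is described or analyzed. The paper's resolution is that no extra gadget is needed at the amplification stage: the reduction that proves Theorem~\ref{thm:SVPmain} (Lemma~\ref{lem:snvp-to-svp}) already comes with a \emph{strengthened soundness guarantee} --- in the NO case, with good probability no nonzero vector of $\cL(\bB_{\rm svp})$ is ``annoying'' in the sense of Definition~\ref{def:annoying-vectors}, i.e., every nonzero lattice vector either has Hamming weight $\geqs l$, or is even with Hamming weight $\geqs l/2^p$, or is even with an entry of magnitude $\geqs l^{10l}$. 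For $p=2$ this is precisely the hypothesis of the Haviv--Regev lemma (Lemma~\ref{lem:lattice-gap-amp}), which then gives $\lambda_2(\cL_1 \otimes \cL_2) \geqs \sqrt{l}\cdot \lambda_2(\cL_2)$ whenever $\cL_1$ has this property. So the correct proof must use the intermediate lemma (the reduction's output with its annoying-vector-free guarantee), not the bare statement of the constant-gap theorem; this is the missing idea in your step (i), and without it step (iii) cannot be carried out.

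Two smaller points. First, you do not need the rigidity to ``propagate through the $t$ tensorings'': Lemma~\ref{lem:lattice-gap-amp} only requires \emph{one} factor to satisfy the three conditions, so one repeatedly tensors the rigid base lattice $\cL(\bB_{\rm svp})$ with the running power, giving $\lambda_2(\cL^{\otimes t})^2 \geqs l^{\,t-1}\lambda_2(\cL)^2 \geqs l^t = (\gamma_2 k)^t$ in the NO case without ever arguing that the tensor power itself is rigid (a claim that would need its own proof). Second, the paper's gap quantities are stated in $\|\cdot\|_2^2$, so a single tensoring squares the gap and $\lceil \log_{\gamma_2}\gamma\rceil$ rounds suffice; this is bookkeeping, consistent with your choice of a constant $t$, and the parameter $k^{2^t}$ (or $k^t$ in your normalization) indeed remains a computable function of the original parameter, so the composed reduction is FPT as you say.
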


We remark that our results do not yield hardness for \SVP\ in the $\ell_1$ norm and this remains an interesting open question. Section~\ref{sec:open} contains discussion on this problem. We also note that, for Theorem~\ref{thm:SVPmain} and onwards, we are only concerned with $p \ne \infty$; this is because, for $p = \infty$, the problem is \NP-hard to approximate even when $k = 1$~\cite{VEB}!

\subsubsection{Nearest Codeword Problem and Nearest Vector Problem}
\label{sec:ncp-cvp}

As we shall see in Section~\ref{sec:overview}, our proof proceeds by first showing FPT hardness of approximation of the non-homogeneous variants of $k$-\MDP\ and $k$-\SVP\ called the \emph{$k$-Nearest Codeword Problem} ($k$-\NCP) and the \emph{$k$-Nearest Vector Problem } ($k$-\CVP) respectively. For both $k$-\NCP\ and $k$-\CVP, we are given a target vector $\by$ (in $\F^n$ and $\Z^n$, respectively) in addition to $(\bA, k)$, and the goal is to find whether there is any $\bx$ (in $\F^m$ and $\Z^m$, respectively) such that the (Hamming and $\ell_p$, respectively) norm of $\bA\bx - \by$ is at most $k$. 

As an intermediate step of our proof, we show that the $k$-\NCP\ and $k$-\CVP\ problems are hard to approximate\footnote{While our $k$-\MDP\ result only applies for $\F$, it is not hard to see that our intermediate reduction for $k$-\NCP\ actually applies for any finite field $\mathbb{F}_q$ too.} (see Theorem~\ref{thm:MLDmain} and Theorem~\ref{thm:CVPmain} respectively). This should be compared to~\cite{DFVW99}, in which the authors show that both problems are $\W[1]$-hard. The distinction here is that our result rules out not only exact algorithms but also approximation algorithms, at the expense of the stronger assumption than that of~\cite{DFVW99}. Indeed, if one could somehow show that $k$-\NCP\ and $k$-\CVP\ are $\W[1]$-hard to approximate (to some constant strictly greater than one), then our reduction would imply $\W[1]$-hardness of $k$-\MDP\ and $k$-\SVP\ (under randomized reductions). Unfortunately, no such $\W[1]$-hardness of approximation of $k$-\NCP\ and $k$-\CVP\ is known yet.

We end this section by remarking that the computational complexity of both (non-parameterized) \NCP\ and \CVP\ are also thoroughly studied (see e.g.~\cite{Mic01,DKRS03,Ste93,ABSS97,GMSS99} in addition to the references for \MDP\ and \SVP), and indeed the inapproximability results of these two problems form the basis of hardness of approximation for \MDP\ and \SVP.

\subsection{Organization of the paper}

In the next section, we give an overview of our reductions and proofs. After that, in Section~\ref{sec:prelim}, we define additional notations and preliminaries needed to fully formalize our proofs. In Section~\ref{sec:csp-to-gapvec}  we show the constant inapproximability of $k$-\NCP. Next, in Section~\ref{sec:main-reduction}, we establish the constant inapproximability of $k$-\MDP. Section~\ref{sec:svp} provides the constant inapproximability of $k$-\CVP\ and $k$-\SVP. Finally, in Section~\ref{sec:open}, we conclude with a few open questions and research directions.

\section{Proof Overview}
\label{sec:overview}

In the non-parameterized setting, all the aforementioned inapproximability results for both \MDP\ and \SVP\ are shown in two steps: first, one proves the inapproximability of their inhomogeneous counterparts (i.e. \NCP\ and \CVP), and then reduces them to \MDP\ and \SVP. We follow this general outline. That is, we first show, via relatively simple reductions from PIH, that both $k$-\NCP\ and $k$-\CVP\ are hard to approximate. Then, we reduce $k$-\NCP\ and $k$-\CVP\ to $k$-\MDP\ and $k$-\SVP\ respectively. In this second step, we employ Dumer \etal's reduction~\cite{DMS03} for $k$-\MDP\ and Khot's reduction~\cite{Khot05} for $k$-\SVP. While the latter works almost immediately in the parameterized regime, there are several technical challenges in adapting Dumer \etal's reduction to our setting. The remainder of this section is devoted to presenting all of our reductions and to highlight such technical challenges and changes in comparison with the non-parameterized settings.

The starting point of all the hardness results in this paper is Gap-ETH (Hypothesis~\ref{hyp:gap-eth}). As mentioned earlier, it is well-known that Gap-ETH implies PIH (Hypothesis~\ref{hyp:pih}), i.e., PIH is weaker than Gap-ETH. Hence, for the rest of this section, we may start from PIH instead of Gap-ETH.

\subsection{Parameterized Intractability of $k$-\MDP\ from PIH}

We start this subsection by describing the Dumer \etal's (henceforth DMS) reduction~\cite{DMS03}. The starting point of the DMS reduction is the \NP-hardness of approximating \NCP\ to any constant factor \cite{ABSS97}. Let us recall that in \NCP\ we are given a matrix $\bA\in\F^{n\times m}$, an integer $k$, and a target vector $\by\in\F^n$, and the goal is to determine whether there is any $\bx\in\F^m$ such that $\|\bA\bx - \by\|_0$ is at most $k$.  Arora et al. \cite{ABSS97} shows that for any constant $\gamma\geqs 1$, it is \NP-hard to distinguish the case when there exists $\bx$ such that $\|\bA\bx - \by\|_0\leqs k$ from the case when for all $\bx$ we have that $\|\bA\bx - \by\|_0> \gamma k$. Dumer at al.\ introduce the notion of ``locally dense codes'' to enable a gadget reduction from $\NCP$ to $\MDP$. Informally, a locally dense code is a linear code $\bL$ with minimum distance $d$
admitting a ball $\cB(\bs, r)$ centered at $\bs$ of radius\footnote{Note that for the ball to contain more than a single codeword, we must have $r\geqs \nicefrac{d}{2}$.} $r < d$ and containing a large (exponential in the dimension) number of codewords. Moreover, for the gadget reduction to \MDP\ to go through, we require not only the knowledge of the code, but also the center $\bs$ and  a linear transformation $\bT$ used to index the codewords in $\cB(\bs, r)$, i.e., $\bT$ maps $\cB(\bs, r)\cap \bL$ onto a smaller subspace. Given an instance $(\bA,\by,k)$ of \NCP, and a  locally dense code $(\bL,\bT,\bs)$ whose parameters (such as dimension and distance) we will fix later, Dumer et al.\ build the following matrix:
\begin{align}
\bB =
\begin{bmatrix}
\bA \bT \bL & - \by \\
\vdots & \vdots \\
\bA \bT \bL & - \by \\
\bL & - \bs \\
\vdots & \vdots \\
\bL & - \bs
\end{bmatrix}
\begin{tikzpicture}[baseline=11ex]
\draw [decorate,decoration={brace,amplitude=10pt}]
(-1.5,1.8) -- (-1.5,0) node [black,midway,xshift=35pt]{$b$ copies};
\draw [decorate,decoration={brace,amplitude=10pt}]
(-1.5,4.0) -- (-1.5,2.2) node [black,midway,xshift=35pt]{$a$ copies};
\end{tikzpicture}
,
\label{eq:LDCGadget}
\end{align}

where $a,b$ are some appropriately chosen positive integers. If there exists $\bx$ such that $\|\bA\bx-\by\|_0\leqs k$ then consider $\bz'$ such that $\bT\bL \bz' = \bx$ (we choose the parameters of $(\bL,\bT,\bs)$, in particular the dimensions of $\bL$ and $\bT$ such that all these computations are valid). Let $\bz=\bz'\circ 1$, and note that $\|\bB\bz\|_0=a\|\bA\bx-\by\|_0+b\|\bL\bz-\bs\|_0\leqs ak+br$. In other words, if $(\bA,\by,k)$ is a YES instance of \NCP\ then $(\bB,ak+br)$ is a YES instance of \MDP. On the other hand if we had that for all $\bx$, the norm of $\|\bA\bx-\by\|_0$ is more than $\gamma k$ for some constant\footnote{Note that in the described reduction, we need the inapproximability  of \NCP\ to a factor greater than two, even to just reduce to the \emph{exact} version of \MDP.} $\gamma>2$, then it is possible to show that for all $\bz$ we have that $\|\bB\bz\|_0>\gamma' (ak+br)$ for any $\gamma'<\frac{2\gamma}{2+\gamma}$. The proof is based on a case analysis of the last coordinate of $\bz$. If that coordinate is 0, then, since $\bL$ is a code of distance $d$, we have $\|\bB\bz\|_0 \geqs bd > \gamma'(ak+br)$; if that coordinate is 1, then the assumption that $(\bA,\by,k)$ is a NO instance of \NCP\ implies that $\|\bB\bz\|_0 > a\gamma k >\gamma'(ak+br)$. Note that this gives an inapproximability for \MDP\ of ratio $\gamma' < 2$; this gap is then further amplified by a simple tensoring procedure. 


We note that Dumer at al.\ were not able to find a deterministic construction of locally dense code with all of the above described properties. Specifically, they gave an efficient deterministic construction of a code $\bL$, but only gave a randomized algorithm that finds a linear transformation $\bT$ and a center $\bs$ w.h.p. Therefore, their hardness result relies on the assumption that $\NP\neq\RP$, instead of the more standard $\NP\neq \P$ assumption. Later, Cheng and Wan \cite{CW12} and Micciancio \cite{Mic14} provided constructions for such (families of) locally dense codes with an explicit center, and thus showed the constant ratio inapproximability of \MDP\ under the assumption of $\NP\neq\P$.

Trying to follow the DMS reduction in order to show the parameterized intractability of $k$-\MDP, we face the following three immediate obstacles. First, there is no inapproximability result known for  $k$-\NCP, for any constant factor greater than 1. Note that to use the DMS reduction, we need the parameterized inapproximability of $k$-\NCP, for an approximation factor which is greater than two. Second, the construction of locally dense codes of Dumer et al. only works when the distance is linear in the block length (which is a function of the size of the input). However, we need codes whose distance are bounded above by a function of the parameter of the problem (and not depend on the input size). This is because the DMS reduction converts an instance $(\bA,\by,k)$ of $k$-\NCP\ to an instance $(\bB,ak+br)$ of $(ak+br)$-\MDP, and for this reduction to be an FPT reduction, we need $ak+br$ to be a function only depending on $k$, i.e., $d$, the distance of the code $\bL$ (which is at most $2r$), must be a function only of $k$. Third, recall that the DMS reduction needs to identify the vectors in the ball $\cB(\bs, r) \cap \bL$ with all the potential solutions of $k$-\NCP. Notice that the number of vectors in the ball is at most $(nm)^{O(r)}$ but the number of potential solutions of $k$-\NCP\ is exponential in $m$ (i.e. all $\bx \in \F^m$). However, this is impossible since $r \leqs d$ is bounded above by a function of $k$!


We overcome the first obstacle by proving the constant inapproximability of $k$-\NCP\ under PIH. Specifically, assuming PIH, we first show the parameterized inapproximability of $k$-\NCP\ for some constant factor greater than 1, and then boost the gap using a composition operator (self-recursively). Note that in order to follow the DMS reduction, we need the inapproximability of $k$-\NCP\ for some constant factor greater than 2; in other words, the gap amplification for $k$-\NCP\ is necessary, even if we are not interested in showing the inapproximability of $k$-\NCP\ for all constant factors.

We overcome the third obstacle by introducing an intermediate problem in the DMS reduction, which we call the \emph{sparse nearest codeword problem}. The sparse nearest codeword problem is a promise problem which differs from  $k$-\NCP\ in only one way: in the YES case, we want to find $\bx\in \cB(\mathbf{0},k)$ (instead of the entire space $\F^m$), such that $\|\bA\bx-\by\|_0\leqs k$. In other words, we only allow sparse $\bx$ as a solution. We show that $k$-\NCP\ can be reduced to the sparse nearest codeword problem.

Finally, we overcome the second obstacle by introducing a variant of locally dense codes, which we call \emph{sparse covering codes}. Roughly speaking, we show that any code which nears the sphere-packing bound (aka Hamming bound) in the high rate regime is a sparse covering code. Then we follow the DMS reduction with the new ingredient of sparse covering codes (replacing locally dense codes) to reduce the sparse nearest codeword problem to $k$-\MDP.

We remark here that overcoming the second and third obstacles are our main technical contributions. In particular, our result on sparse covering codes might be of independent interest.

The full reduction goes through several intermediate steps, which we will describe in more detail in the coming paragraphs. The high-level summary of these steps is also provided in Figure~\ref{fig:overviewMDC}. Throughout this section, for any gap problem, if we do not specify the gap in the subscript, then it implies that the gap can be any arbitrary constant.
For every $\varepsilon\geqs 0$, we denote by $\csp_\varepsilon$ the gap problem where we have to determine if a given 2CSP instance $\Gamma$, i.e., a graph $G = (V, E)$ and a set of constraints  $\{C_{uv}\}_{(u, v) \in E}$ over an alphabet set $\Sigma$,  has an assignment to its vertices that satisfies all the constraints or if every assignment violates more than $\varepsilon$ fraction of the constraints. Here each $C_{uv}$ is simply the set of all $(\sigma_u, \sigma_v) \in \Sigma \times \Sigma$ that satisfy the constraint. The parameter of the problem is $|V|$. PIH asserts that there exists some constant $\varepsilon>0$ such that  no randomized FPT algorithm can solve $\csp_{\varepsilon}$. (See Hypothesis~\ref{hyp:pih} for a more formal statement.)

\begin{figure}[h!]
    \centering
    \resizebox{\textwidth}{!}{\begin{tikzpicture}

\node (gapeth) [draw=red!80!black,thick] at (0,0) {Gap-ETH$_{\delta}$};

\node (PIH) [draw=red!80!black,thick] at (5, 0) {PIH$_{ \varepsilon}$};

\node (gapvec) [draw=red!80!black,thick] at (9, 0) {\gapvec$_{ \gamma}$};
\node (ampgapvec) [draw=red!80!black,thick] at (14, 0) {\gapvec};

\node (gapsnc) [draw=red!80!black,thick] at (1.5, -3) {\sncp};

\node (gapmdc) [draw=red!80!black,thick] at (7.5, -3) {\mdp$_{1.99}$};
\node (ampgapmdc) [draw=red!80!black,thick] at (13, -3) {\mdp};

\draw [-{Latex[length=1.5mm, width=1.5mm]}] (gapeth) -- (PIH);
\draw [-{Latex[length=1.5mm, width=1.5mm]}] (PIH) -- (gapvec);
\draw [-{Latex[length=1.5mm, width=1.5mm]}] (gapvec) -- (ampgapvec);
\draw [-{Latex[length=1.5mm, width=1.5mm]}] (ampgapvec) -- (gapsnc);
\draw [-{Latex[length=1.5mm, width=1.5mm]}] (gapsnc) -- (gapmdc);
\draw [-{Latex[length=1.5mm, width=1.5mm]}] (gapmdc) -- (ampgapmdc);

\node [above, align=center] at (2.75, 0.5) {\footnotesize Folklore Reduction};
\node [above, align=center] at (2.75, 0) {\footnotesize (Section~\ref{sec:gapETHtoPIH})};

\node [above, align=center] at (6.7, 0) {\footnotesize Section~\ref{sec:base-reduction}};
\node [above, align=center,rotate=12] at (7.5, -1.52) {\footnotesize Section~\ref{sec:gapvec-to-snc}};

\node [above, align=center] at (11.55,0.45) {\footnotesize Gap Amplification};
\node [above, align=center] at (11.55, 0) {\footnotesize (Section~\ref{sec:gap-amplification})};
\node [above, align=center] at (4.3, -3.6) {\footnotesize Introducing \SCC to};
\node [above, align=center] at (4.31, -4) {\footnotesize DMS reduction};
\node [above, align=center] at (4.3, -4.45) {\footnotesize (Sections~\ref{sec:dense-code} and \ref{sec:main-red})};

\node [above, align=center] at (10.4,-3.6) {\footnotesize Gap Amplification};
\node [above, align=center] at (10.4, -4.05) {\footnotesize (Proposition~\ref{prop:gap-amplification})};

\end{tikzpicture}}
    \caption{An overview of the reduction from Gap-ETH to the parameterized Minimum Distance problem. First, PIH follows from Gap-ETH due to known reductions (see Section~\ref{sec:gapETHtoPIH}). 
Next, we reduce $\csp_\varepsilon$ to $\gapvec_\gamma$ in Section~\ref{sec:base-reduction}, and then amplify the gap using a composition operator in Section~\ref{sec:gap-amplification}. Via a simple reduction from \gapvec, in Section~\ref{sec:gapvec-to-snc} we obtain the constant parameterized inapproximability of \sncp. In Section~\ref{sec:dense-code}, we formally introduce sparse covering codes and show how to efficiently (but probabilistically) construct them. These codes are then used in Section~\ref{sec:main-reduction} to obtain the parameterized innapproximability of $\mdp_{1.99}$. The final step is a known gap amplification by tensoring (Proposition~\ref{prop:gap-amplification}).
} \label{fig:overviewMDC}
\end{figure}

\noindent\textbf{ Reducing $\csp_\epsilon$ to $\gapvec_{\gamma}$.}
 We start by showing the parameterized inapproximability of $k$-\NCP\ for some constant ratio. Instead of working with $k$-\NCP, we work with its equivalent formulation (by converting the generator matrix given as input into a parity-check matrix) which  in the literature is referred to as the \emph{maximum likelihood decoding problem}\footnote{The two formulations are equivalent but we use different names for them to avoid confusion when we use \emph{Sparse Nearest Codeword Problem} later on.}. 
We define the gap version of this problem (i.e., a promise problem), denoted by $\gapvec_{\gamma}$ (for some constant $\gamma\geqs 1$) as follows: on input $(\bA,\by,k)$, distinguish between the YES case where there exists $\bx\in \cB(\mathbf{0},k)$ such that $\bA\bx=\by$, and the NO case where for all $\bx\in \cB(\mathbf{0},\gamma k)$ we have $\bA\bx\neq\by$. It is good to keep in mind that this is equivalent to asking whether there exist $k$ columns of $\bA$ whose sum is equal to $\by$ or whether any $\leqs \gamma k$ columns of $\bA$ do not sum up to $\by$.

Now, we will sketch the reduction from an instance $\Gamma=(G=(V,E),\Sigma,\{C_{uv}\}_{(u, v) \in E})$ of $\csp_{\varepsilon}$ to an instance $(\bA,\by,k)$ of $\gapvec_{1+\varepsilon/3}$. The matrix $\bA$ will have $|V||\Sigma|+\underset{(u,v)\in E}{\sum}\ |C_{uv}|$ columns and $|V|+|E|+2|E||\Sigma|$ rows. The first $|V||\Sigma|$ columns of $\bA$ are labelled with $(u, \sigma_u) \in V \times \Sigma$, and the remaining columns of $\bA$ are labeled by $(e, \sigma_u, \sigma_v)$ where $e = (u, v) \in E$ and $(\sigma_u, \sigma_v) \in C_{uv}$.

Before we continue with our description of $\bA$, let us note that, in the YES case where there is an assignment $\phi: V \to \Sigma$ that satisfies every constraint, our intended solution for our $\gapvec$ instance is to pick the $(u, \phi(u))$-column for every $u \in V$ and the $((u, v), \phi(u), \phi(v))$-column for every $(u, v) \in E$. Notice that $|V| + |E|$ columns are picked, and indeed we set $k = |V| + |E|$. Moreover, we set the first $|V| + |E|$ coordinates of $\by$ to be one and the rest to be zero.

We also identify the first $|V|$ rows of $\bA$ with $u \in V$, the next $|E|$ rows of $\bA$ with $e \in E$, and the remaining $2|E||\Sigma|$ rows of $\bA$ with $(e, \sigma, b) \in E \times \Sigma \times \{0, 1\}$. Figure~\ref{fig:mld-matrix} provides an illustration of the matrix $\bA$. The rows of $\bA$ will be designed to serve the following purposes: the first $|V|$ rows will ensure that, for each $u \in V$, at least one column of the form $(u, \cdot)$ is picked, the next $|E|$ rows will ensure that, for each $e \in E$, at least one column of the form $(e, \cdot, \cdot)$ is picked, and finally the remaining $2|E||\Sigma|$ rows will ``check'' that the constraint is indeed satisfied.

Specifically, each $u$-row for $u \in V$ has only $|\Sigma|$ non-zero entries: those in column $(u, \sigma_u)$ for all $\sigma_u \in \Sigma$. Since our target vector $\by$ has $\by_u = 1$, we indeed have that at least one column of the form $(u, \cdot)$ must be selected for every $u \in V$. Similarly, each $e$-row for $e = (u, v) \in E$ has $|C_{uv}|$ non-zero entries: those in column $(e, \sigma_u, \sigma_v)$ for all $(\sigma_u, \sigma_v) \in C_{uv}$. Again, these make sure that at least one column of the form $(e, \cdot, \cdot)$ must be picked for every $e \in E$.

Finally, we will define the entries of the last $2|E||\Sigma|$ rows. To do so, let us recall that, in the YES case, we pick the columns $(u, \phi(u))$ for all $u \in V$ and $((u, v), \phi(u), \phi(v))$ for all $(u, v) \in E$. The goal of these remaining rows is to not only accept such a solution but also prevent any solution that picks columns $(u, \sigma_u), (v, \sigma_v)$ and $((u, v), \sigma'_u, \sigma'_v)$ where $\sigma_u \ne \sigma'_u$ or $\sigma_v \ne \sigma'_v$. In other words, these rows serve as a ``consistency checker'' of the solution. Specifically, the $|\Sigma|$ rows of the form $((u, v), \cdot, 0)$ will force $\sigma_u$ and $\sigma'_u$ to be equal whereas the $|\Sigma|$ rows of the form $((u, v), \cdot, 1)$ will force $\sigma_v$ and $\sigma'_v$ to be equal. For convenience, we will only define the entries for the $((u, v), \cdot, 0)$-rows; the $((u, v), \cdot, 1)$-rows can be defined similarly. Each $((u, v), \sigma, 0)$-row has only one non-zero entry within the first $|V||\Sigma|$ rows: the one in the $(u, \sigma)$-column. For the remaining columns, the entry in the $((u, v), \sigma, 0)$-row and the $(e, \sigma_0, \sigma_1)$-column is non-zero if and only if $e = (u, v)$ and $\sigma_0 = \sigma$.

It should be clear from the definition that our intended solution for the YES case is indeed a valid solution because, for each $((u, v), \phi(u), 0)$-row, the two non-zero entries from the columns $(u, \phi(u))$ and $((u, v), \phi(u), \phi(v))$ cancel each other out. On the other hand, for the NO case, the main observation is that, for each edge $(u, v) \in E$, if only one column of the form $(u, \cdot)$, one of the form $(v, \cdot)$ and one of the form $((u, v), \cdot, \cdot)$ are picked, then the assignment corresponding to the picked columns satisfy the constraint $C_{uv}$. In particular, it is easy to argue that, if we can pick $(1 + \varepsilon/3)(|V| + |E|)$ columns that sum up to $\by$, then all but $\varepsilon$ fraction of all constraints fulfill the previous conditions, meaning that we can find an assignment that satisfies $1 - \varepsilon$ fraction of the constraints. Thus, we have also proved the soundness of the reduction.

\begin{figure}
\begin{align*}
\ \ \ \ \ \ \ \ \ \ \bA=\begin{tikzpicture}[baseline=10ex]
\fill [color=black!10] (1.7,2.9) rectangle (5,4);
\fill [color=black!10] (1.7,2.9) rectangle (-0.5,1.7);
\fill [pattern=north west lines, pattern color=blue] (1.7,2.9) rectangle (-0.5, 4);
\fill [pattern=north west lines, pattern color=blue] (1.7,2.9) rectangle (5, 1.7);
\fill [pattern=bricks, pattern color=orange!70!black] (1.7,0) rectangle (-0.5,1.65);
\fill [pattern=fivepointed stars, pattern color=red!50] (1.7,0) rectangle (5, 1.65);
\draw [ thick] (0,0) to [square left brace ] (0,4);
\draw [ thick] (4.5,0) to [square right brace ] (4.5,4);
\draw [decorate,decoration={brace,amplitude=10pt},rotate=90] (4.15,0.5)--
(4.15,-1.7) node [black,midway,yshift=17pt] { $|V|\times |\Sigma|$};
\draw [decorate,decoration={brace,amplitude=10pt},rotate=90](4.15,-1.7)--
(4.15,-5)  node [black,midway,yshift=20pt] { $\underset{(u, v) \in E}{\sum}|C_{uv}|$};
\draw [decorate,decoration={brace,amplitude=10pt}]
(5.15,1.7) -- (5.15,0) node [black,midway,xshift=55pt] { $|E|\times |\Sigma|\times \{0,1\}$};
\draw [decorate,decoration={brace,amplitude=9pt}]
(5.15,2.9) -- (5.15,1.7) node [black,midway,xshift=20pt] { $|E|$};
\draw [decorate,decoration={brace,amplitude=8pt}]
(5.15,4) -- (5.15,2.9) node [black,midway,xshift=20pt] { $|V|$ };
\draw [thick, dotted] (-0.5,2.9) -- (5,2.9);
\draw [thick, dotted] (-0.5,1.7) -- (5,1.7);
\draw [thick, dashed] (1.7,0) -- (1.7,4);
\end{tikzpicture},
\by = \begin{tikzpicture}[baseline=10ex]
\draw [ thick] (0,0) to [small square left brace ] (0,4);
\draw [ thick] (0.5,0) to [small square right brace ] (0.5,4);
\draw [decorate,decoration={brace,amplitude=10pt}]
(0.85,1.7) -- (0.85,0) node [black,midway,xshift=55pt] { $|E|\times |\Sigma|\times \{0,1\}$};
\draw [decorate,decoration={brace,amplitude=10pt}] (0.85,4)--
(0.85,1.7) node [black,midway,xshift=35pt] { $|V| + |E|$};
\node at (0.25, 3.75) {$1$};
\node at (0.25, 3) {$\vdots$};
\node at (0.25, 2.10) {$1$};
\node at (0.25, 1.5) {$0$};
\node at (0.25, 1) {$\vdots$};
\node at (0.25, 0.3) {$0$};  
\end{tikzpicture}
\end{align*}
\caption{An illustration of $\bA$ and $\by$. All entries in the shaded area are zero. Each \emph{row} in the brick pattern area has one non-zero entry in that area, and each \emph{column} in the star pattern area has two non-zero entries in the area. Finally, each column has one non-zero entry in the lines pattern area.}
\label{fig:mld-matrix}
\end{figure}
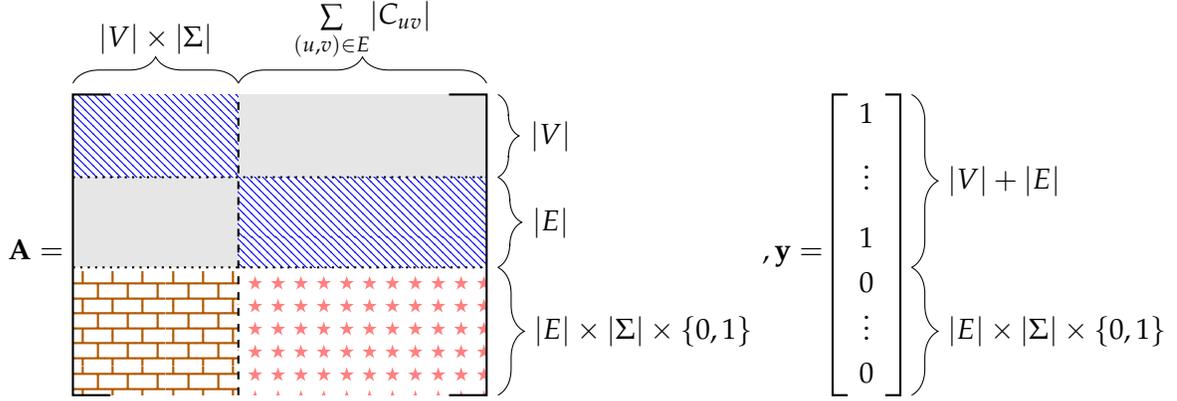

\noindent\textbf{Gap Amplification for $\gapvec_{\gamma}$.} We have sketched the proof of the hardness of $\gapvec_\gamma$ for \emph{some} constant $\gamma\geqs 1$, assuming PIH. The next step is to amplify the gap and arrive at the hardness for $\gapvec_\gamma$ for \emph{every} constant $\gamma \geqs 1$. To do so, we define an operator $\oplus$ over every pair of instances of $\gapvec_\gamma$ with the following property: if two instances $(\bA_1,\by_1,k_1)$ and $(\bA_2,\by_2,k_2)$ are both YES instances, then $(\bA,\by,k):=(\bA_1,\by_1,k_1)\oplus(\bA_2,\by_2,k_2)$ is a YES instance for $\gapvec_{\gamma'}$ where $\gamma' \approx \gamma^2$. On the other hand, if both $(\bA_1,\by_1,k_1)$ and $(\bA_2,\by_2,k_2)$ are NO instances, then $(\bA,\by,k)$ is a NO instance for $\gapvec_{\gamma'}$. Hence, we can apply $\oplus$ repeatedly to the $\gapvec_\gamma$ instance from the previous step (with itself) and amplify the gap to be any arbitrarily large constant. The definition of $\oplus$, while simple, is slightly tedious to formalized and we defer it to Section~\ref{sec:gap-amplification}.

\noindent\textbf{Reducing $\gapvec$ to $\sncp$.}
Now we introduce the \emph{sparse nearest codeword problem} that we had briefly talked about. We define the gap version of this problem, denoted by $\sncp_{\gamma}$ (for some constant $\gamma\geqs 1$) as follows: on input $(\bA',\by',k)$, distinguish between the YES case where there exists $\bx\in \cB(\mathbf{0},k)$ such that $\|\bA'\bx-\by'\|_0\leqs k$, and the NO case where for all $\bx$ (in the entire space), we have $\|\bA'\bx-\by'\|_0>\gamma k$. We highlight that the difference between $k$-\NCP\ and $\sncp_{\gamma}$ is that, in the YES case of the latter, we are promised that $\bx\in\cB(\mathbf 0,k)$. We sketch below the reduction from an instance $(\bA,\by,k)$ of $\gapvec_{\gamma}$ to an instance $(\bA',\by',k)$ of $\sncp_{\gamma}$. Given $\bA,\by$, let
$$
\bA'=\begin{bmatrix} 
    \bA  \\
    \vdots \\
    \bA\\
    \Id 
    \end{bmatrix}
    \begin{tikzpicture}[baseline=2.2ex]
\draw [decorate,decoration={brace,amplitude=10pt}]
(-1.5,1.7) -- (-1.5,0) node [black,midway,xshift=45pt] { $\gamma k + 1$ copies};
\end{tikzpicture},\\
\by'=\begin{bmatrix} 
    \by  \\
    \vdots \\
    \by\\
    \bzero 
    \end{bmatrix}
    \begin{tikzpicture}[baseline=2.2ex]
\draw [decorate,decoration={brace,amplitude=10pt}]
(-1.5,1.7) -- (-1.5,0) node [black,midway,xshift=45pt] { $\gamma k + 1$ copies};
\end{tikzpicture}
.$$

Notice that for any $\bx$ (in the entire space), we have
$$\|\bA'\bx - \by'\|_0 = (\gamma k + 1)\|\bA\bx-\by\|_0+\|\bx\|_0,$$ 
and thus both the completeness and soundness of the reduction easily follow.

\noindent\textbf{Sparse Covering Codes.}
Before reducing $\sncp$\ to $\mdp_{1.99}$ we need to introduce in more detail (but still informally) the notion of \emph{sparse covering codes} that we previously mentioned. 

A sparse covering code (SCC) is a linear code $\bL$ of block length $h$ with minimum distance $d$ admitting a ball $\cB(\bs, r)$ centered at $\bs$ of radius $r < d$ and containing a large (i.e., about $h^{k}$, where $k=\Omega(d)$) number of codewords. Moreover, for the reduction to $k$-\MDP\ to go through, we require not only the knowledge of the code, but also the center $\bs$ and  a linear transformation $\bT$ used to index the codewords in $\cB(\bs, r)$, i.e., $\bT(\cB(\bs, r)\cap \bL)$ needs to contains the ball of radius $k$ centered at $\mathbf 0$. Similar to how Dumer et al.\ only managed to show the probabilistic existence of the center, we too cannot find an explicit $\bs$ for the SCCs that we construct, but instead provide an efficiently samplable distribution such that, for any $\mathbf{x} \in \cB(\bzero,k)$, the 
probability (over $\bs$ sampled from the distribution) that $\mathbf{x} \in \bT(\mathcal{B}(\bs,r) \cap \bL)$ is non-negligible.
This is what makes our reduction from \sncp\ to \mdp$_{1.99}$ randomized. We will not  elaborate more on this issue here, but focus on the (probabilistic) construction of such codes. For convenience, we will assume throughout this overview that $k$ is much smaller than $d$, i.e., $k = 0.001d$.

Recall that the sphere-packing bound (aka Hamming bound) states that a binary code of block length $h$ and distance $d$ can have at most $2^h/|\cB(\vzero, \lceil \frac{d - 1}{2} \rceil)|$ codewords; this is simply because the balls of radius $\lceil \frac{d - 1}{2} \rceil$ at the codewords do not intersect. Our main theorem regarding the existence of sparse covering code is that any code that is ``near'' the sphere-packing bound is a sparse covering code with $r = \lceil \frac{d - 1}{2} \rceil + k \approx 0.501d$. Here ``near'' means that the number of codewords must be at least $2^h/|\cB(\vzero, \lceil \frac{d - 1}{2} \rceil)|$ divided by $f(d) \cdot \poly(h)$ for some function $f$ that depends only on $d$. (Equivalently, this means that the message length must be at least $h - (d/2 + O(1))\log h$.) The BCH code over binary alphabet is an example of a code satisfying such a condition.

While we will not sketch the proof of the existence theorem here, we note that the general idea is to set $\bT$ and the distribution over $\bs$ in such a way that the probability that $\bx$ lies in $\bT(\cB(\bs, r) \cap \bL)$ is at least the probability that a random point in $\F^h$ is within distance $r - k = \lceil \frac{d - 1}{2} \rceil$ of some codeword. The latter is non-negligible from our assumption that $\bL$ nears the sphere-packing bound.

Finally, we remark that our proof here is completely different from the DMS proof of existence of locally dense codes. Specifically, DMS uses a group-theoretic argument to show that, when a code exceeds the Gilbert–Varshamov bound, there must be a center $\bs$ such that $\cB(\bs, r)$ contains many codewords. Then, they pick a random linear map $\bT$ and show that w.h.p. $\bT(\cB(\bs, r) \cap \bL)$ is the entire space. Note that this second step does not use any structure of $\cB(\bs, r) \cap \bL$; their argument is simply that, for any sufficiently large subset $Y$, a random linear map $\bT$ maps $Y$ to an entire space w.h.p. However, such an argument fails for us, due to the fact that, in SCC, we want to cover a ball $\cB(\vzero, k)$ rather than the whole space, and it is not hard to see that there are very large subsets $Y$ such that no linear map $\bT$ satisfies $\bT(Y) \supseteq \cB(\vzero, k)$. A simple example of this is when $Y$ is a subspace of $\F^h$; in this case, even when $Y$ is as large as $\exp(\poly(h))$, no desired linear map $\bT$ exists.

\noindent\textbf{Reducing $\sncp_\gamma$ to  \mdp$_{1.99}$.} Next, we prove the hardness of $\mdp_{\gamma'}$ for all constant $\gamma'\in [1,2)$, assuming PIH, using a gadget constructed from sparse covering codes.

Given an instance $(\bA,\by,k)$ of $\sncp_{\gamma}$ for some $\gamma>2$ and a sparse covering code $(\bL,\bT,\bs)$ we build an instance $(\bB,ak+br)$ of $\mdp_{\gamma'}$ where $\gamma'<\frac{2\gamma}{2+\gamma}$, by following the DMS reduction (which was previously described, and in particular see \eqref{eq:LDCGadget}).
If there exists $\bx\in \cB(\mathbf 0,k)$ such that $\|\bA\bx-\by\|_0\leqs k$ then consider $\bz'$ such that $\bT\bL \bz' = \bx$. Note that the existence of such a $\bz'$ is guaranteed by the definition of SCC. Consider $\bz=\bz'\circ 1$, and note that $\|\bB\bz\|_0=a\|\bA\bx-\by\|_0+b\|\bL\bz-\bs\|_0\leqs ak+br$. In other words, as in the DMS reduction, if $(\bA,\by,k)$ is a YES instance of \NCP, then $(\bB,ak+br)$ is a YES instance of \MDP. On the other hand, similar to the DMS reduction, if we had that $\|\bA\bx-\by\|_0 > \gamma k$ for all $\bx$, then $\|\bB\bz\|_0>\gamma' (ak+br)$ for all $\bz$. The parameterized intractability of \mdp$_{1.99}$ is obtained by setting $\gamma=400$ in the above reduction.

\noindent\textbf{Gap Amplification for $\mdp_{1.99}$.} It is  well known that the distance of the tensor product of two linear codes is the product of the distances of the individual codes (see Proposition~\ref{prop:gap-amplification} for a formal statement). We can use this proposition to reduce $\mdp_{\gamma}$ to $\mdp_{\gamma^2}$ for any $\gamma\geqs 1$. In particular, we can obtain, for any constant $\gamma$, the intractability of $\mdp_\gamma$ starting from $\mdp_{1.99}$ by just recursively tensoring the input code $\lceil \log_{1.99} \gamma\rceil $ times.

\subsection{Parameterized Intractability of $k$-\SVP\ from PIH}

We begin this subsection by briefly describing Khot's reduction. The starting point of Khot's reduction is the \NP-hardness of approximating \CVP\ in every $\ell_p$ norm to any constant factor \cite{ABSS97}. Let us recall that in \CVP\ in the $\ell_p$ norm, we are given a matrix $\bA \in \Z^{n\times m}$, an integer $k$, and a target vector $\by\in\Z^n$, and the goal is to determine whether there is any $\bx\in\Z^m$ such that\footnote{Previously, we use $\|\bA\bx - \by\|_p$ instead of $\|\bA\bx - \by\|_p^p$. However, from the fixed parameter perspective, these two versions are equivalent since the parameter $k$ is only raised to the $p$-th power, and $p$ is a constant in our setting.} $\|\bA\bx - \by\|_p^p$ is at most $k$.  The result of Arora et al. \cite{ABSS97} states that for any constant $\gamma\geqs 1$, it is \NP-hard to distinguish the case when there exists $\bx$ such that $\|\bA\bx - \by\|_p^p\leqs k$ from the case when for all (integral) $\bx$ we have that $\|\bA\bx - \by\|_p^p> \gamma k$. 
Khot's reduction proceeds in four steps. First, he constructs a gadget lattice called the  ``BCH Lattice'' using BCH Codes. Next, he reduces \CVP\ in the $\ell_p$ norm (where $p\in(1,\infty)$) to an instance of \SVP\ on an intermediate lattice by using the BCH Lattice.
This intermediate lattice has the following property. For any YES instance of \CVP\ the intermediate lattice contains multiple copies of the witness of the YES instance; For any NO instance of \CVP\ there are also many ``annoying vectors'' (but far less than the total number of YES instance witnesses)  which look like witnesses of a YES instance. However, since the annoying vectors are outnumbered, Khot reduces this intermediate lattice to a proper \SVP\ instance, by randomly picking a sub-lattice via a random homogeneous linear constraint on the coordinates of the lattice vectors (this annihilates all the annoying vectors while retaining at least one witness for the YES instance). Thus he obtains some constant factor hardness for \SVP. Finally, the gap is amplified via ``Augmented Tensor Product''. It is important to note that Khot's reduction is randomized, and thus his result of inapproximability of \SVP\ is based on $\NP\neq \RP$.

Trying to follow Khot's reduction, in order to show the parameterized intractability of $k$-\SVP, we face only one obstacle: there is no known parameterized inapproximability of $k$-\CVP\  for any constant factor greater than 1. Let us denote by $\snvp_{p,\eta}$ for any constant $\eta\geqs 1$ the gap version of $k$-\CVP\ in the $\ell_p$ norm. Recall that in $\snvp_{p,\eta}$ we are given a matrix $\bA \in \Z^{n\times m}$, a target vector $\by\in\Z^n$, and a parameter $k$, and we would like to distinguish the case when there exists $\bx \in \Z^m$ such that $\|\bA\bx - \by\|_p^p\leqs k$ from the case when for all $\bx \in \Z^m$ we have that $\|\bA\bx - \by\|_p^p> \eta k$.
As it turns out, our reduction from $\csp_\varepsilon$ to $\sncp$ (with arbitrary constant gap), having $\gapvec_{\gamma}$ and  $\gapvec$ as intermediate steps, can be translated to show the constant inapproximability of $\snvp_p$ (under PIH) in a straightforward manner. We will not elaborate on this part of the proof any further here and defer the detailed proof to Appendix~\ref{sec:csp-to-snvp}.

Once we have established the constant parameterized inapproximability of $\snvp_p$, we follow Khot's reduction, and everything goes through as it is to establish the inapproximability for some factor of the gap version of $k$-\SVP\ in the $\ell_p$ norm (where $p\in (1,\infty)$). We denote by $\svp_{p,\gamma}$ for some constant $\gamma(p)\geqs 1$ the the gap version of $k$-\SVP\ (in the $\ell_p$ norm) where we are given a matrix $\bB \in \Z^{n\times m}$ and a parameter $k \in \N$, and we would like to distinguish the case when there exists a non-zero $\bx \in \Z^m$ such that $\|\bB\bx\|_p^p\leqs k$ from the case when for all $\bx \in \Z^m \setminus \{\bzero\}$ we have that $\|\bB\bx\|_p^p> \gamma k$. 
Let $\gamma^*:=\frac{2^p}{2^{p-1}+1}$.
Following Khot’s reduction, we obtain the inapproximability of $\svp_{p,\gamma^*}$ (under PIH). To obtain inapproximability of $\svp_2$ for all constant ratios, we use the tensor product of lattices; the argument needed here is slightly more subtle than the similar step in \MDP\ because, unlike distances of codes, the $\ell_2$ norm of the shortest vector of the tensor product of two lattices is not necessarily equal to the product of the $\ell_2$ norm of the shortest vector of each lattice. Fortunately, Khot's construction is tailored so that the resulting lattice is ``well-behaved'' under tensoring~\cite{Khot05,HR07}, and gap amplification is indeed possible for such instances.

We remark here that, for the (non-parameterized) inapproximability of SVP, the techniques of~\cite{Khot05,HR07} allow one to successfully amplify gaps for $\ell_p$ norm where $p \ne 2$ as well. Unfortunately, this does not work in our settings, as it requires the distance $k$ to be dependent on $nm$ which is not possible for us since $k$ is the parameter of the problem.

Summarizing, in Figure~\ref{fig:overviewSV}, we provide the proof outline of our reduction from Gap-ETH to $\svp_p$ with some constant gap, for every $p\in (1,\infty)$ (with the additional gap amplification to constant inapproximability for $p=2$).

\begin{figure}[h!]
    \centering
    \resizebox{0.85\textwidth}{!}{\begin{tikzpicture}

\node (gapeth) [draw=red!80!black,thick] at (0,0) {Gap-ETH$_\delta$};

\node (PIH) [draw=red!80!black,thick] at (5, 0) {PIH$_\varepsilon$};

\node (gapvec) [draw=red!80!black,thick] at (10, 0) {\snvp$_p$};
\node (ampgapvec) [draw=red!80!black,thick] at (2, -3) {\svp$_{p,\frac{2^p}{2^{p-1}+1} }$};

\node (gapsnc) [draw=red!80!black,thick] at (8, -3) {\svp$_p$};

\draw [-{Latex[length=1.5mm, width=1.5mm]}] (gapeth) -- (PIH);
\draw [-{Latex[length=1.5mm, width=1.5mm]}] (PIH) -- (gapvec);
\draw [-{Latex[length=1.5mm, width=1.5mm]}] (gapvec) -- (ampgapvec);
\draw [-{Latex[length=1.5mm, width=1.5mm]}] (ampgapvec) -- (gapsnc);

\node [above, align=center] at (2.75, 0.5) {\footnotesize Folklore Reduction};
\node [above, align=center] at (2.75, 0) {\footnotesize (Section~\ref{sec:gapETHtoPIH})};

\node [above, align=center] at (7.2, 0) {\footnotesize Theorem~\ref{thm:CVPmain}};

\node [above, align=center,rotate=19.5] at (5.15, -1.4) {\footnotesize Khot's Reduction};
\node [above, align=center,rotate=19.5] at (5.4, -1.75) {\footnotesize Lemma~\ref{lem:snvp-to-svp}};
\node [above, align=center] at (5.23, -3.6) {\footnotesize Gap Amplification};
\node [above, align=center] at (5.23, -4.05) {\footnotesize for $p = 2$ (Section~\ref{sec:svp-gap-amp})};

\end{tikzpicture}}
    \caption{The figure provides an overview of the reduction from Gap-ETH to the parameterized Shortest Vector problem in the $\ell_p$ norm, where $p\in(1,\infty)$. First, recall that Gap-ETH implies PIH (see Section~\ref{sec:gapETHtoPIH}). Next, we reduce $\csp_\varepsilon$ to $\snvp_p$ in Appendix~\ref{sec:csp-to-snvp}. Lemma~\ref{lem:snvp-to-svp} (i.e., Khot's reduction)  then implies  the parameterized inapproximability of $\svp_{p,\frac{2^p}{2^{p-1}+1}}$. The final step (for $p= 2$) is the Haviv-Regev gap amplification via tensor product, which is described in Section~\ref{sec:svp-gap-amp}.} \label{fig:overviewSV}
\end{figure}
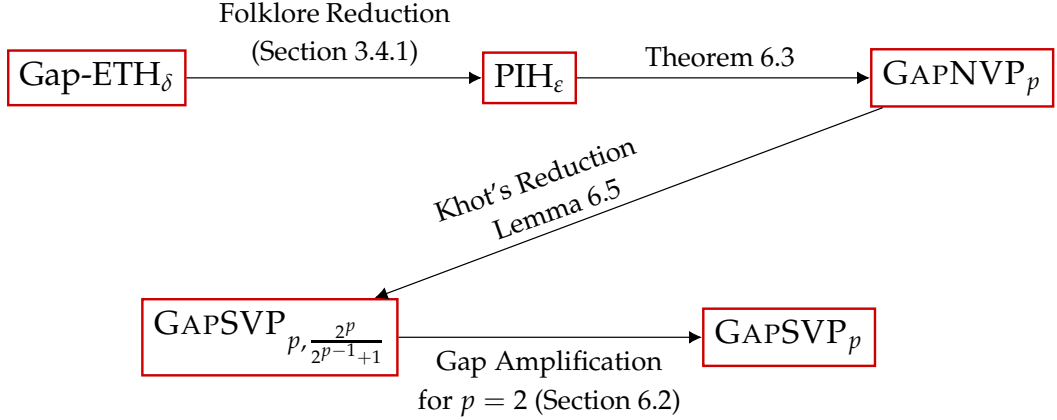

\section{Preliminaries} \label{sec:prelim}

We use the following notations throughout the paper.

\noindent\textbf{Notations.} For $p \in \mathbb{N}$, we use $\vone_p$ (respectively, $\vzero_p$) to denote the all ones (respectively, all zeros) vector of length $p$.
We sometimes drop the subscript if the dimension is clear from the context. 

For $p,q \in \mathbb{N}$, we use $\vzero_{p\times q}$ to denote the all zeroes matrix of $p$ rows and $q$ columns. We use Id$_q$ to denote the identity matrix of $q$ rows and $q$ columns.

For any vector $x\in\mathbb{R}^d$, the $\ell_p$ norm of $x$
is defined as
$\ell_p(x) = \|x\|_p = \left(\sum_{i=1}^d|x_i|^p\right)^{1/p}$.
Thus, $\ell_{\infty}(x) = \|x\|_{\infty} = \max_{i\in[d]}\{|x_i|\}$.
The $\ell_0$ norm of $x$ is defined as
$\ell_0(x) = \|x\|_0=|\{x_i\neq 0: i\in [d]\}|$, i.e.,
the number of non-zero entries of $x$. We note that the $\ell_0$ norm is also referred to as the Hamming norm. For $a\in\mathbb N$, $t\in\mathbb{N}\cup \{0\}$, and $\bs\in\{0,1\}^a$, we use $\mathcal{B}_a(\bs,t)$ to denote the Hamming ball of radius $t$ centered at $s$, i.e., $\mathcal{B}_a(\bs,t)=\{\bx\in\{0,1\}^a\mid \|\bs-\bx\|_0\leqs t\}$. Finally, given two vectors $\bx \in \F^m$ and $\by \in \F^n$, we use $\bx \circ \by \in \F^{m + n}$ to denote the concatenation of vectors $\bx$ and $\by$.

\subsection{Parameterized Promise Problems and (Randomized) FPT Reductions}
In this subsection, we briefly describe the various kinds of fixed-parameter reductions that are used in this paper. We start by defining the notion of promise problems in the fixed-parameter world, which is naturally analogues to promise problems in the NP world (see e.g.~\cite{Gol06}).

\begin{definition}
A parameterized promise problem $\Pi$ is a pair of parameterized languages $(\Pi_{YES}, \Pi_{NO})$ such that $\Pi_{YES} \cap \Pi_{NO} = \emptyset$.
\end{definition}

Next, we formalize the notion of algorithms for these parameterized promise problems:

\begin{definition}
A deterministic algorithm $\cA$ is said to be an \emph{FPT algorithm for $\Pi$} if the following holds:
\begin{itemize}
\item On any input $(x, k)$, $\cA$ runs in time $f(k)|x|^c$ for some computable function $f$ and constant $c$.
\item (YES) For all $(x, k) \in \Pi_{YES}$, $\cA(x, k) = 1$.
\item (NO) For all $(x, k) \in \Pi_{NO}$, $\cA(x, k) = 0$.
\end{itemize}
\end{definition}

\begin{definition}
A Monte Carlo algorithm $\cA$ is said to be a \emph{randomized FPT algorithm for $\Pi$} if the following holds:
\begin{itemize}
\item $\cA$ runs in time $f(k)|x|^c$ for some computable function $f$ and constant $c$ (on every randomness).
\item (YES) For all $(x, k) \in \Pi_{YES}$, $\Pr[\cA(x, k) = 1] \geqs 2/3$.
\item (NO) For all $(x, k) \in \Pi_{NO}$, $\Pr[\cA(x, k) = 0] \geqs 2/3$.
\end{itemize}
\end{definition}

Finally, we define deterministic and randomized reductions between these problems.

\begin{definition}
A (deterministic) FPT reduction from a parameterized promise problem $\Pi$ to a parameterized promise problem $\Pi'$ is a (deterministic) procedure that transforms $(x, k)$ to $(x', k')$ that satisfies the following:
\begin{itemize}
\item The procedure runs in $f(k) |x|^c$ for some computable function $f$ and constant $c$.
\item There exists a computable function $g$ such that $k' \leqs g(k)$ for every input $(x, k)$.
\item For all $(x, k) \in \Pi_{YES}$, $(x', k') \in \Pi_{YES}'$.
\item For all $(x, k) \in \Pi_{NO}$, $(x', k') \in \Pi_{NO}'$.
\end{itemize}
\end{definition}

\begin{definition}					\label{def:rand-fpt-red}
A randomized (one sided error) FPT reduction from a parameterized promise problem $\Pi$ to a parameterized promise problem $\Pi'$ is a randomized procedure that transforms $(x, k)$ to $(x', k')$ that satisfies the following:
\begin{itemize}
\item The procedure runs in $f(k) |x|^c$ for some computable function $f$ and constant $c$ (on every randomness).
\item There exists a computable function $g$ such that $k' \leqs g(k)$ for every input $(x, k)$.Mi
\item For all $(x, k) \in \Pi_{YES}$, $\Pr[(x', k') \in \Pi_{YES}'] \geqs 1/(f'(k)|x|^{c'})$ for some computable function $f'$ and constant $c'$.
\item For all $(x, k) \in \Pi_{NO}$, $\Pr[(x', k') \in \Pi_{NO}'] = 1$.
\end{itemize}
\end{definition}

Note that the above definition corresponds to the notion of \emph{Reverse Unfaithful Random (RUR) reductions} in the classical world~\cite{J90}. The only difference (besides the allowed FPT running time) is that the above definition allows the probability that the YES case gets map to the YES case to be as small as $1/(f'(k)\poly(|x|))$, whereas in the RUR reductions this can only be $1/\poly(|x|)$. The reason is that, as we will see in Lemma~\ref{lem:red-intract} below, FPT algorithms can afford to repeat the reduction $f'(k)\poly(|x|)$ times, whereas polynomial time algorithms can only repeat $\poly(|x|)$ times.

We also consider randomized two-sided error FPT reductions, which are defined as follows. 

\begin{definition}
	A randomized \emph{two sided error} FPT reduction from a parameterized promise problem $\Pi$ to a parameterized promise problem $\Pi'$ is a randomized procedure that transforms $(x, k)$ to $(x', k')$ that satisfies the following:
	\begin{itemize}
		\item The procedure runs in $f(k) |x|^c$ for some computable function $f$ and constant $c$ (on every randomness).
		\item There exists a computable function $g$ such that $k' \leqs g(k)$ for every input $(x, k)$.
		\item For all $(x, k) \in \Pi_{YES}$, $\Pr[(x', k') \in \Pi_{YES}'] \geqs 2/3$.
		\item For all $(x, k) \in \Pi_{NO}$, $\Pr[(x', k') \in \Pi_{NO}']  \geqs 2/3$.
	\end{itemize}
\end{definition}

Note that this is not a generalization of the standard randomized FPT reduction (as defined in Definition \ref{def:rand-fpt-red}), since the definition requires the success probabilities for the YES and NO cases to be constants independent of the parameter. In both cases, using standard techniques randomized FPT reductions, can be used to transform randomized FPT algorithms for $\Pi'$ to randomized FPT algorithm for $\Pi$, as stated by the following lemma:

\begin{lemma}
	Suppose there exists a randomized (one sided/ two sided) error FPT reduction from a parameterized promise problem $\Pi$ to a parameterized promise problem $\Pi'$. If there exists a randomized FPT algorithm $\mathcal{A}$ for $\Pi'$, there there also exists a randomized FPT algorithm for $\Pi$.
	\label{lem:red-intract}
\end{lemma}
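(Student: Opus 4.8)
\textbf{Proof plan for Lemma~\ref{lem:red-intract}.}

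The plan is to show that the composition of a randomized FPT reduction with a randomized FPT algorithm for $\Pi'$ yields a randomized FPT algorithm for $\Pi$, after an appropriate amount of amplification. I would handle the two cases (one-sided and two-sided) separately, since the bottleneck differs slightly.

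First, consider the two-sided error case. Let $R$ be the randomized two-sided FPT reduction from $\Pi$ to $\Pi'$, running in time $f(k)|x|^c$ and producing $(x',k')$ with $k' \leqs g(k)$, and let $\cA$ be the randomized FPT algorithm for $\Pi'$, running in time $f_1(k')|x'|^{c_1}$ with success probability $\geqs 2/3$ on every instance satisfying the promise. The natural candidate algorithm for $\Pi$ is: on input $(x,k)$, run $R$ to get $(x',k')$, then run $\cA$ on $(x',k')$ and output its answer. The issue is that the overall success probability is only $\geqs (2/3)\cdot(2/3) = 4/9 < 1/2$, so I would first boost each component. Concretely, I would amplify $\cA$ (by $O(1)$ independent repetitions and taking majority) so that its error is below, say, $1/100$; note this only changes the running time by a constant factor, so it is still FPT. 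Then, running $R$ once and the amplified $\cA$ once gives success probability $\geqs (2/3)(99/100) > 2/3$ — wait, that is not quite enough since $R$ itself still errs with probability $1/3$. So instead I would run the whole pipeline $O(1)$ times independently and take the majority vote; a Chernoff bound shows that $O(\log(1/\delta))$ repetitions drive the error below any constant $\delta$, in particular below $1/3$, while keeping the running time $f'(k)|x|^{c'}$ for suitable computable $f'$ and constant $c'$ (using $k' \leqs g(k)$ and $|x'| \leqs f(k)|x|^c$ to bound $f_1(k')|x'|^{c_1}$ by a function of the form $\tilde f(k)|x|^{\tilde c}$). This establishes the claim for two-sided reductions.

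Next, the one-sided case, which is the more delicate one because the YES-to-YES probability can be as small as $1/(f'(k)|x|^{c'})$ rather than a constant. Here I would use the standard trick of repeating the reduction many times. On input $(x,k)$: run $R$ independently $N := \lceil \ln 3 \cdot f'(k)|x|^{c'} \rceil$ times to get instances $(x'_1,k'_1),\dots,(x'_N,k'_N)$; run the (amplified) algorithm $\cA$ on each; output $1$ if any invocation returns $1$, and $0$ otherwise. For the NO case, since every $(x'_i,k'_i) \in \Pi'_{NO}$ with probability $1$, and $\cA$ errs with probability at most (say) $1/(10N)$ after amplifying it to error $1/(10N)$ — which costs an extra $O(\log N) = O(\log(f'(k)|x|^{c'}))$ factor in running time, still FPT — a union bound over the $N$ runs shows the final output is $0$ with probability $\geqs 9/10 \geqs 2/3$. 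For the YES case, with probability $\geqs 1 - (1 - 1/(f'(k)|x|^{c'}))^N \geqs 1 - e^{-\ln 3} = 2/3$ at least one $(x'_i,k'_i)$ lands in $\Pi'_{YES}$; conditioned on that, the amplified $\cA$ returns $1$ on it with probability $\geqs 9/10$, so the final output is $1$ with probability $\geqs (2/3)(9/10) \geqs 1/2$ — and then one more constant-factor round of majority amplification of the entire procedure pushes this above $2/3$. Throughout, the running time is a product of terms each bounded by a computable function of $k$ times a polynomial in $|x|$, so it remains of the FPT form.

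The main obstacle — really the only subtle point — is the bookkeeping in the one-sided case: ensuring that after choosing the number of repetitions $N$ to depend on $f'(k)|x|^{c'}$, the error of $\cA$ on each of the $N$ instances is small enough for a union bound to go through in the NO case, without the amplification blowing the running time out of the FPT regime. This works because amplifying $\cA$'s error to $1/(10N)$ costs only an $O(\log N)$ multiplicative factor, and $\log N$ is itself bounded by a computable function of $k$ plus $O(\log|x|)$, both of which are absorbed into the FPT running-time bound. Everything else is a routine application of Chernoff/union bounds and the composition of computable functions.
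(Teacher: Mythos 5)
Your proposal is correct and follows essentially the same strategy as the paper: repeat the reduction roughly $f'(k)|x|^{c'}$ times, amplify $\mathcal{A}$'s error to inverse-logarithmic-in-that-count scale so a union bound over all repetitions survives, take an OR in the one-sided case, and handle the two-sided case by constant-factor majority amplification of the composed procedure. The only differences are cosmetic bookkeeping (the paper packages one reduction call plus $O(\log(f'(k)|x|^{c'}))$ runs of $\mathcal{A}$ into a subroutine and then repeats it, reaching probability $0.9$ directly, whereas you add one final constant majority round to lift $0.6$ above $2/3$).
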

\begin{proof}
	We prove this for one sided error reductions, the other case follows using similar arguments. Suppose there exists a randomized one sided error reduction from $\Pi$ to $\Pi'$. 
	Let $f'(\cdot),c'$ be as in Definition \ref{def:rand-fpt-red}. We consider the following subroutine. Given instance $(x,k)$ of promise problem $\Pi$, we apply the randomized reduction on $(x,k)$ to get instance $(x',k')$ of promise problem $\Pi'$. 
	We run $\mathcal{A}$ on $(x',k')$ repeatedly $100\log (f'(k)|x|^c)$ times, and output the majority of the outcomes. 
	
	If $(x,k)$ is a YES instance, then with probability at least $1/(f'(k)|x|^{c'})$, $(x',k')$ is also a YES instance for $\Pi'$. Using Chernoff bound, conditioned on $(x',k')$ being a YES instance, the majority of the outcomes is YES with probability at least $1  - e^{-10\log(f'(k)|x|^{c'})}$. Therefore using union bound, the output of the above algorithm is YES with probability at least $1/(f'(k)|x|^{c'}) - e^{-10\log(f'(k)|x|^{c'})} \geqs 1/2(f'(k)|x|^{c'})$. Similarly, if $(x,k)$ is a NO instance, then the subroutine outputs YES with probability at most $ e^{-10\log(f'(k)|x|^{c'})}$.
	
	Equipped with the above subroutine, our algorithm is simply the following: given $(x,k)$, it runs the subroutine $10f'(k)|x|^{c'}$ times. If at least one of the outcomes is YES, then the algorithm outputs YES, otherwise it outputs NO. Again we can analyze this using elementary probability. If $(x,k)$ is a YES instance, then the algorithm outputs NO only if outcomes of all the trials is NO. Therefore, the algorithm outputs YES with probability at least $1 - ( 1 - 1/2(f'(k)|x|^{c'}))^{10f'(k)|x|^{c'}} \geqs 0.9$. Conversely, if $(x,k)$ is a NO instance, then by union bound, the algorithm outputs NO with probability at least $1 - 10f'(k)|x|^{c'}  e^{-10\log(f'(k)|x|^{c'})} \geqs 0.9$. Finally, if $\mathcal{A}$ is FPT, then the running time of the proposed algorithm is also FPT. Hence the claim follows{\footnote{For the case of $2$-sided error, we change the final step of the algorithm as follows; we invoke the subroutine $O(\log 1/\delta)$-times (where $\delta$ is a constant) and again output the majority of the outcomes. The guarantees again follow by a Chernoff bound argument. }}.
\end{proof}	

Since the conclusion of the above proposition holds for both types of randomized reductions, we will not be distinguishing between the two types in the rest of the paper. 

\subsection{Minimum Distance Problem}
In this subsection, we define the fixed-parameter variant of the minimum distance problem and other relevant parameterized problems. We actually define them as gap problems -- as later in the paper, we show the constant inapproximability of these problems. 

For every $\gamma\geqs 1$, we define the $\gamma$-gap minimum distance problem\footnote{In the parameterized complexity literature, this problem is referred to as the $k$-Even set problem \cite{DFVW99} and the input to the problem is (equivalently) given through the parity-check matrix, instead of the generator matrix as described in this paper.} as follows:

\begin{framed}
$\gamma$-Gap Minimum Distance Problem ($\mdp_{\gamma}$)

{\bf Input: } A matrix $\bA \in \mathbb{F}_2^{n \times m}$ and a positive integer $k \in \mathbb{N}$

{\bf Parameter: } $k$

{\bf Question: } Distinguish between the following two cases:
\begin{itemize}
\item (YES) there exists $\bx \in \mathbb{F}_2^m \setminus \{\bzero\}$ such that $\|\bA\bx\|_0 \leqs k$
\item (NO) for all $\bx \in \mathbb{F}_2^m \setminus \{\bzero\}$, $\|\bA\bx\|_0 > \gamma \cdot k$
\end{itemize}
\end{framed}

Next, for every $\gamma\geqs 1$, we define the $\gamma$-gap maximum likelihood decoding problem\footnote{The maximum likelihood decoding problem is also equivalently known in the literature as the nearest codeword problem.} as follows:

\begin{framed}
$\gamma$-Gap Maximum Likelihood Decoding Problem ($\gapvec_{\gamma}$)

{\bf Input: } A matrix $\bA \in \mathbb{F}_2^{n \times m}$, a vector $\by \in \mathbb{F}_2^n$ and a positive integer $k \in \mathbb{N}$

{\bf Parameter: } $k$

{\bf Question: } Distinguish between the following two cases:
\begin{itemize}
\item (YES) there exists $\bx \in \cB_{m}(\mathbf{0}, k)$ such that $\bA\bx = \by$
\item (NO) for all $\bx \in \cB_{m}(\mathbf{0}, \gamma k)$, $\bA\bx \ne \by$
\end{itemize}
\end{framed}

For brevity, we shall denote the exact version (i.e., $\gapvec_1$) of the problem as $\kvec$.
Finally, we introduce a ``sparse'' version of the \gapvec\ problem called the  sparsest nearest codeword problem, and later in the paper we show a reduction from \gapvec\ to this problem, followed by a reduction from this problem to \mdp. Formally, for every $\gamma\geqs 1$, we define the $\gamma$-gap sparsest nearest codeword problem as follows:

\begin{framed}
$\gamma$-Gap Sparse Nearest Codeword Problem ($\sncp_{\gamma}$)

{\bf Input: } A matrix $\bA \in \mathbb{F}_2^{n \times m}$, a vector $\by \in \mathbb{F}_2^n$ and a positive integer $k \in \mathbb{N}$

{\bf Parameter: } $k$

{\bf Question: } Distinguish between the following two cases:
\begin{itemize}
\item (YES) there exists $\bx \in \cB_{m}(\mathbf{0}, k)$ such that $\|\bA\bx - \by\|_0 \leqs k$
\item (NO) for all $\bx \in \mathbb{F}_2^m$, $\|\bA\bx - \by\|_0 > \gamma \cdot k$
\end{itemize}
\end{framed}

\subsection{Shortest Vector Problem and Nearest Vector Problem}

In this subsection, we define the fixed-parameter variants of the shortest vector and nearest vector problems. As in the previous subsection, we define them as gap problems, for the same reason that later in the paper, we show the constant inapproximability of these two problems. 

Fix $p\in\mathbb{R}_{\geqs 1}$. For every $\gamma\geqs 1$, we define the $\gamma$-gap shortest vector problem in the $\ell_p$-norm{\footnote{Note that we define $\snvp$ and $\svp$ problems in terms of $\ell^p_p$, whereas traditionally, it is defined in terms of $\ell_p$. However, it is sufficient for us to work with the $\ell^p_p$ variant, since an $\alpha$-factor inapproximability in $\ell^p_p$ translates to an $\alpha^{1/p}$-factor inapproximabillity in the $\ell_p$ norm, for any $\alpha \geqs 1$}} as follows:
\begin{framed}
	$\gamma$-Gap Shortest Vector Problem ($\svp_{p,\gamma}$)
	
	{\bf Input: } A matrix $\bA \in \mathbb{Z}^{n \times m}$ and a positive integer $k \in \mathbb{N}$
	
	{\bf Parameter: } $k$
	
	{\bf Question: } Distinguish between the following two cases:
	\begin{itemize}
		\item (YES) there exists $\bx \in \mathbb{Z}^{m} \setminus \{\bzero\}$ such that $\|\bA\bx\|^p_p \leqs k$
		\item (NO) for all $\bx \in \mathbb{Z}^m \setminus \{\bzero\}$, $\|\bA\bx\|^p_p > \gamma \cdot k$
	\end{itemize}
\end{framed}

For every $\gamma\geqs 1$, we define the $\gamma$-gap nearest vector problem in the $\ell_p$-norm as follows:

\begin{framed}
	$\gamma$-Gap Nearest Vector Problem ($\snvp_{p,\gamma}$)
	
	{\bf Input: } A matrix $\bA \in \mathbb{Z}^{n \times m}$, vector $\by \in \mathbb{Z}^n$ and a positive integer $k \in \mathbb{N}$
	
	{\bf Parameter: } $k$
	
	{\bf Question: } Distinguish between the following two cases:
	\begin{itemize}
		\item (YES) there exists $\bx \in \Z^{m}$ such that $\|\bA\bx - \by\|^p_p \leqs k$
		\item (NO) for all $\bx \in \Z^m$, $\|\bA\bx - \by\|^p_p > \gamma \cdot k$
	\end{itemize}
\end{framed}

\subsection{CSPs and Parameterized Inapproximability Hypothesis}
\label{subsec:pih}


In this section, we will formally state the Parameterized Inapproximability Hypothesis (PIH). To do so, we first have to define 2CSP and its corresponding gap problem, starting with the former:

\begin{definition}[2CSP]
An instance $\Gamma$ of 2CSP consists of
\begin{itemize}
\item an undirected graph $G = (V, E)$, which is referred to as the \emph{constraint graph},
\item an \emph{alphabet set} $\Sigma$,
\item for each edge $e = (u, v) \in E$, a \emph{constraint} $C_{uv} \subseteq \Sigma \times \Sigma$.
\end{itemize} 
An \emph{assignment} of $\Gamma$ is simply a function from $V$ to $\Sigma$. An edge $e = (u, v) \in E$ is said to be \emph{satisfied} by an assignment $\psi: V \to \Sigma$ if $(\psi(u), \psi(v)) \in C_{uv}$. A \emph{value} of an assignment $\psi$, denoted by $\val(\psi)$, is the fraction of edges satisfied by $\psi$, i.e., $\val(\psi) = \frac{1}{|E|} \cdot \{(u, v) \in E \mid (\psi(u), \psi(v)) \in C_{uv}\}$. The value of the instance $\Gamma$, denoted by $\val(\Gamma)$, is the maximum value among all possible assignments, i.e., $\val(\Gamma) = \max_{\psi: V \to \Sigma} \val(\psi)$.
\end{definition}

The gap problem for 2CSP can then be defined as follows:

\begin{framed}
$\varepsilon$-Gap 2CSP ($\csp_{\varepsilon}$)

{\bf Input: } A 2CSP instance $\Gamma = (G = (V, E), \Sigma, \{C_{uv}\}_{(u, v) \in E})$.

{\bf Parameter: } $|V|$

{\bf Question: } Distinguish between the following two cases:
\begin{itemize}
\item (YES) $\val(\psi) = 1$.
\item (NO) $\val(\psi) < 1 - \varepsilon$.
\end{itemize}
\end{framed}

Note that, when $\varepsilon = 0$, $\csp_0$ is simply asking whether the input instance $\Gamma$ is fully satisfiable. It is easy to see that this problem generalizes the $k$-Clique problem, which is well known to be $\W[1]$-hard; in other words, $\csp_0$ is $\W[1]$-hard, and hence does not admit FPT algorithm unless $\W[1] = \FPT$. It is believed that $\csp_{\varepsilon}$ remains $\W[1]$-hard even for some constant $\varepsilon > 0$; this belief has recently been formalized by Lokshtanov \etal~\cite{LRSZ17} as the \emph{Parameterized Inapproximability Hypothesis (PIH)}. For the purpose of this work, we will use an even weaker version of the hypothesis than Lokshtanov \etal's. Namely, we will only assume that $\csp_{\varepsilon}$ is not in $\FPT$ for some $\varepsilon > 0$ (rather than assuming that it is $\W[1]$-hard), as stated below.

\begin{hypothesis}[Parameterized Inapproximability Hypothesis (PIH)~\cite{LRSZ17}] \label{hyp:pih}
There exists $\varepsilon > 0$ such that there is no randomized FPT algorithm for $\csp_{\varepsilon}$.
\end{hypothesis}

\subsubsection{Relation to Gap Exponential Time Hypothesis}\label{sec:gapETHtoPIH}


There are several supporting evidences for PIH. One such evidence is that it follows from the Gap Exponential Time Hypothesis hypothesis, which can be stated as follows.

\begin{hypothesis}[Randomized Gap Exponential Time Hypothesis (Gap-ETH)~\cite{D16,MR16}] \label{hyp:gap-eth}
There exist constants $\varepsilon,\delta> 0$ such that any randomized algorithm that, on input a 3CNF formula $\varphi$ on $n$ variables and $O(n)$ clauses, can distinguish between $\sat(\varphi)=1$ and $\sat(\varphi)<1-\varepsilon$, with probability at least $\nicefrac{2}{3}$, must run in time at least $2^{\delta n}$.
\end{hypothesis}

Gap-ETH itself is a strengthening of the Exponential Time Hypothesis (ETH)~\cite{IP01,IPZ01}, which states that deciding whether a 3CNF formula is satisfiable cannot be done in subexponential time. We remark here that Gap-ETH would follow from ETH if a linear-size PCP exists; unfortunately, no such PCP is known yet, with the shortest known PCP having quasi-linear size~\cite{BS08,D07}. It has also recently been shown that Gap-ETH would follow from ETH if we assume certain ``smoothness'' conditions on instances from ETH~\cite{App17}. We will not discuss the evidences supporting Gap-ETH in details here; we refer interested readers to~\cite{D16}.

As stated earlier, it is known that PIH follows from Gap-ETH (see e.g.~\cite{DM18,CFM17}\footnote{Note that, in~\cite{CFM17}, 2CSP is referred to as \emph{Maximum Colored Subgraph Isomorphism}. Note also that the hardness proof of~\cite{CFM17} relies heavily on the parameterized inapproximability of Densest $k$-Subgraph from~\cite{CCKLMNT17}, which in turns relies on the reduction and the main lemma from~\cite{M17}.}). Unfortunately, the proofs in literature so far have been somewhat complicated, since the previous works on the topic (e.g.~\cite{DM18,CCKLMNT17,M17}) put emphasis on achieving as large a factor hardness of approximation for 2CSP as possible. On the other hand, some constant inapproximability factor strictly greater than zero suffices to show PIH. For this regime, there is a (folklore) proof that is much simpler than those in the literature. Since we are not aware of this proof being fully written down anywhere, we provide it in Appendix~\ref{app:gap-eth}. Note that this proof also yields a running time lower bound of $T(k) \cdot |\Sigma|^{\Omega(k)}$ of solving $\csp_{\varepsilon}$ for some $\varepsilon > 0$; this running time lower bound is better than those provided by the aforementioned previous proofs and is essentially optimal, since one can solve 2CSP (even exactly) in time $|\Sigma|^{O(k)}$.

\subsection{Error-Correcting Codes}
An error correcting code $C$ over alphabet $\Sigma$ is a function $C: \Sigma^m \to \Sigma^h$ where $m$ and $h$ are positive integers which are referred to as the {\em message length} and {\em block length} of $C$ respectively. Intuitively, the function $C$ encodes an original message of length $m$ to an encoded message of length $h$.
The {\em distance} of a code, denoted by $d(C)$, is defined as $\underset{x \ne y \in \Sigma^m}{\min} \|C(x)- C(y)\|_0$, i.e., the number of coordinates on which $C(x)$ and $C(y)$ disagree.
We also define the systematicity of a code as follows: Given $s \in \mathbb N$, a code $C:\Sigma^m\to \Sigma^{h}$ is {\em $s$-systematic} if there exists a size-$s$ subset of $[h]$, which for convenience we identify with $[s]$, such that for every $x \in \Sigma^{s}$ there exists $w \in \Sigma^m$ in which $x = C(w)\mid_{[s]}$.
We use the shorthand $[h,m,d]_{|\Sigma|}$ to denote a code of message length $m$, block length $h$, and distance $d$. 

Additionally, we will need the following existence and efficient construction of BCH codes for every message length and distance parameter.

\begin{theorem}[BCH Code~\cite{H59,BR60}] \label{thm:bch}
For any choice of $h, d \in \mathbb{N}$ such that $h + 1$ is a power of two and that $d \leqs h$, there exists a linear code over $\F$ with block length $h$, message length $h - \left\lceil\frac{d-1}{2}\right\rceil\cdot\log (h + 1)$ and distance $d$. Moreover, the generator matrix of this code can be computed in $\poly(h)$ time. 
\end{theorem}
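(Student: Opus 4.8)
Write $q = h+1 = 2^r$, so that the multiplicative group $\mathbb{F}_{2^r}^{\times}$ is cyclic of order exactly $h$ (this is precisely where the hypothesis that $h+1$ is a power of two is used); fix a primitive element $\alpha$, i.e.\ a generator of $\mathbb{F}_{2^r}^{\times}$. The code I would use is the narrow-sense primitive BCH code of designed distance $d$: identifying $\mathbb{F}_2^h$ with $\mathbb{F}_2[x]/(x^h-1)$, it is the cyclic code of all $c(x)$ with $c(\alpha^i)=0$ for every $i\in\{1,\dots,d-1\}$, equivalently the ideal generated by $g(x)=\mathrm{lcm}(M_1(x),\dots,M_{d-1}(x))$, where $M_i\in\mathbb{F}_2[x]$ is the minimal polynomial of $\alpha^i$ over $\mathbb{F}_2$. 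Its dimension is $h-\deg g$ and a generator matrix is read off from the $h-\deg g$ cyclic shifts of $g$.

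For the distance I would invoke the classical BCH bound. If some nonzero codeword $c$ had weight $w\leqs d-1$ with support $j_1<\dots<j_w$, then evaluating $c$ at $\alpha^1,\dots,\alpha^w$ gives $M\cdot(c_{j_1},\dots,c_{j_w})^{\top}=\vzero$ where $M_{i\ell}=(\alpha^{j_\ell})^i$; since $M=VD$ with $V$ the Vandermonde matrix in the distinct nonzero elements $\alpha^{j_1},\dots,\alpha^{j_w}$ and $D=\mathrm{diag}(\alpha^{j_1},\dots,\alpha^{j_w})$ invertible, $M$ is invertible, forcing $c=\vzero$, a contradiction. Hence the distance is at least $d$.

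For the dimension I would bound $\deg g$ using $2$-cyclotomic cosets modulo $h$: since $M_{2i}=M_i$ and $2^{r-a}\cdot(2^a m)\equiv m\pmod h$, every $M_i$ with $1\leqs i\leqs d-1$ equals $M_m$ for the odd part $m$ of $i$, which is itself an odd number in $\{1,\dots,d-1\}$; there are exactly $\lceil(d-1)/2\rceil$ such odd values, and each $M_i$ has degree at most $r=\log(h+1)$ because its roots lie in $\mathbb{F}_{2^r}$. Thus $\deg g\leqs\lceil(d-1)/2\rceil\cdot\log(h+1)$ and the dimension is at least $h-\lceil(d-1)/2\rceil\cdot\log(h+1)$. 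In the only non-vacuous case, where this quantity is positive, I then pass to an arbitrary subspace of the code of dimension exactly $h-\lceil(d-1)/2\rceil\cdot\log(h+1)$ (concretely, by deleting rows of the generator matrix), which keeps the distance at least $d$ and yields the claimed message length.

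For the moreover part: a degree-$r$ irreducible polynomial over $\mathbb{F}_2$ (a model of $\mathbb{F}_{2^r}$) and a primitive element $\alpha$ are found by brute force over the $2^r=\poly(h)$ candidates; each $M_i$ is computed as $\prod_j (x-\alpha^j)$ over the cyclotomic coset of $i$ using arithmetic in $\mathbb{F}_{2^r}$, there are at most $h$ of them, $g$ is their lcm, and the generator matrix is assembled from shifts of $g$ — all in $\poly(h)$ time. I do not expect a genuine obstacle here, as this is a textbook construction; the only points requiring care are (i) matching the message-length expression \emph{exactly} rather than up to inequality, which the subcode step handles, and (ii) verifying that every step runs in time polynomial in $h$ rather than in $2^h$.
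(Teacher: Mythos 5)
Your proof is correct. Note that the paper does not prove this statement at all: it is imported as a classical result of Hocquenghem and Bose--Ray-Chaudhuri, and your argument is precisely the standard one underlying that citation — the narrow-sense primitive BCH construction, the Vandermonde/BCH-bound argument giving minimum distance at least $d$, and the cyclotomic-coset count (odd representatives in $\{1,\dots,d-1\}$, each minimal polynomial of degree at most $\log(h+1)$) giving dimension at least $h-\lceil\frac{d-1}{2}\rceil\log(h+1)$, with the subcode step harmlessly pinning down the exact message length. The only reading to keep in mind is that ``distance $d$'' here means minimum distance at least $d$, which is all that the paper's later uses of the theorem (Lemma~\ref{lem:ldc} and the BCH lattice gadget) require, and your construction delivers exactly that in $\poly(h)$ time.
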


Finally, we define the tensor product of codes which will be used later in the paper. Consider two linear codes $C_1 \subseteq \F^m$ (generated by $\mathbf{G}_1 \in \F^{m \times m'}$) and ${C}_2 \subseteq \F^n$  (generated by $\mathbf{G}_2 \in \F^{n \times n'}$). Then the tensor product of the two codes ${C}_1 \otimes {C}_2 \subseteq \F^{m \times n}$ is defined as 
\begin{equation*}
{C}_1 \otimes {C}_2 = \{\mathbf{G}_1\mathbf{X}\mathbf{G}^\top_2  | \mathbf{X} \in \F^{m' \times n '}\}.
\end{equation*}
%
%

\section{Parameterized Intractability of \gapvec} \label{sec:csp-to-gapvec}

In this section, we will show the parameterized intractibility of $\gapvec$ as stated below.

\begin{theorem}\label{thm:MLDmain}
	Assuming PIH, there is no randomized FPT algorithm for $\gapvec_\gamma$ for any $\gamma \geqs 1$.
\end{theorem}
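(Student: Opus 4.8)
The proof proceeds in two stages, exactly as indicated in the proof overview (Figure~\ref{fig:overviewMDC}). First, I will reduce $\csp_\varepsilon$ to $\gapvec_\gamma$ for \emph{some} constant $\gamma > 1$ depending on $\varepsilon$; call this $\gamma_0 = 1 + \varepsilon/3$. Second, I will amplify the gap by defining a product operation $\oplus$ on $\gapvec$ instances so that YES instances compose to YES instances with roughly squared gap, and NO instances compose to NO instances, so iterating $\oplus$ on an instance with itself $O(\log\log \gamma / \log\gamma_0)$—more precisely $\lceil \log_2(\log\gamma/\log\gamma_0)\rceil$—times yields the desired gap $\gamma$. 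Since PIH gives us a concrete $\varepsilon$, the composition is applied a constant number of times, so the whole reduction is an FPT reduction (indeed a polynomial-time one, since the parameter $k = |V| + |E|$ is polynomial in the original parameter $|V|$, and after a constant number of $\oplus$-steps it remains so), and Lemma~\ref{lem:red-intract} then transfers a hypothetical randomized FPT algorithm for $\gapvec_\gamma$ back to one for $\csp_\varepsilon$, contradicting PIH.

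\textbf{Stage 1: the base reduction.} Given a $\csp_\varepsilon$ instance $\Gamma = (G=(V,E), \Sigma, \{C_{uv}\})$, I build the matrix $\bA \in \F_2^{(|V| + |E| + 2|E||\Sigma|) \times (|V||\Sigma| + \sum_{e}|C_e|)}$ described in the overview and illustrated in Figure~\ref{fig:mld-matrix}, together with $\by$ whose first $|V|+|E|$ coordinates are $1$ and rest are $0$, and set $k = |V| + |E|$. Completeness: from a satisfying assignment $\phi$, picking columns $(u,\phi(u))$ and $((u,v),\phi(u),\phi(v))$ gives exactly $k$ columns summing to $\by$ (the vertex-rows and edge-rows each get exactly one $1$; each consistency-row $((u,v),\sigma,0)$ either gets two cancelling $1$'s or none, and similarly for the $((u,v),\cdot,1)$-rows). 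Soundness (contrapositive): suppose some $\bx \in \cB(\vzero, \gamma_0 k)$ with $\bA\bx = \by$; since each vertex-row and edge-row forces at least one column of the corresponding type, and $\gamma_0 k = k + (\varepsilon/3)|E| + (\varepsilon/3)|V|$, at most $(\varepsilon/3)(|V|+|E|) \le (2\varepsilon/3)|E|$ columns are ``extra.'' For an edge $(u,v)$ where exactly one $(u,\cdot)$, one $(v,\cdot)$ and one $((u,v),\cdot,\cdot)$ column is picked, the consistency-rows force the picked labels to agree, so that edge's constraint is satisfied by the induced assignment (pick an arbitrary picked label per vertex); the number of ``bad'' edges — those incident to a vertex with $\ge 2$ picked labels or having $\ge 2$ picked edge-columns — is at most $2\cdot(2\varepsilon/3)|E| < \varepsilon |E|$ for $\varepsilon$ small (after adjusting constants, e.g.\ using $\gamma_0 = 1+\varepsilon/3$ this is fine), contradicting $\val(\Gamma) < 1-\varepsilon$. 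This gives Theorem~\ref{thm:MLDmain} for $\gamma = \gamma_0$. I should double-check the exact counting constants here, but the argument is standard.

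\textbf{Stage 2: gap amplification via $\oplus$.} For two instances $(\bA_1, \by_1, k_1)$, $(\bA_2, \by_2, k_2)$ of $\gapvec$, I define $(\bA, \by, k) = (\bA_1,\by_1,k_1) \oplus (\bA_2,\by_2,k_2)$ essentially as a tensor-like construction whose solutions correspond to pairs of solutions: one natural choice is to let $\bA$ have columns indexed by pairs of columns of $\bA_1$ and $\bA_2$, arranged so that a set $S$ of columns sums to $\by$ iff its projections sum to $\by_1$ and $\by_2$ respectively, with $k = k_1 k_2$. Then two YES instances give a solution of weight $k_1 k_2 = k$, and if both are NO instances one argues any solution of weight $\le \gamma' k$ would project to a solution of weight $\le \gamma' k_1 k_2$ in one coordinate that is too small — the clean statement is $\gamma' = \gamma^2$ up to lower-order terms, which is why the overview writes $\gamma' \approx \gamma^2$. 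Starting from $\gamma_0 > 1$ and applying $\oplus$ repeatedly $t$ times blows the gap up to roughly $\gamma_0^{2^t}$; choosing $t = \lceil \log_2(\log\gamma / \log\gamma_0) \rceil$, a constant, yields gap $\ge \gamma$. The parameter grows as $k_0^{2^t}$, still a computable function of the original parameter (and in fact polynomial since $t$ is constant), and the instance size stays polynomial, so this is a valid FPT reduction. Composing Stage 1 and Stage 2 and invoking Lemma~\ref{lem:red-intract} completes the proof.

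\textbf{Main obstacle.} The delicate part is getting the definition of $\oplus$ right so that both completeness and soundness hold simultaneously with a gap that genuinely multiplies (rather than, say, only adds), while keeping the construction over $\F_2$ and keeping the new parameter $k = k_1 k_2$ honest — in particular ensuring the NO-case argument really forces a small projected solution in one factor and does not leak weight across the tensor structure. I expect this to require the careful (but elementary) bookkeeping that the overview defers to Section~\ref{sec:gap-amplification}; the base reduction of Stage 1, by contrast, is mostly a matter of verifying the row/column incidence structure of Figure~\ref{fig:mld-matrix} does what is claimed.
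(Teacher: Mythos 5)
Your Stage~1 construction matches the paper's Lemma~\ref{lem:csp-to-gapvec}, and your completeness argument is correct. However, your soundness counting is flawed. You define a ``bad'' edge as one incident to a vertex with $\geqs 2$ picked labels \emph{or} having $\geqs 2$ picked edge-columns, and bound the number of bad edges by $2\cdot(2\varepsilon/3)|E|$. This fails: a single extra vertex-column $(u,\sigma)$ renders \emph{every} edge incident to $u$ ``bad'' under your definition, and a vertex may have degree up to $|V|-1$, so the extras need not be amortized over only two edges each. The correct bookkeeping (and what the paper actually does) is to define $E_{\uni}=\{e:|T_e|=1\}$ and bound only $|E\setminus E_{\uni}|\leqs\|\bx\|_0-k\leqs(\varepsilon/3)(|V|+|E|)\leqs\varepsilon|E|$; for an edge $e=(u,v)\in E_{\uni}$, the consistency row $(e,\psi(u),0)$ automatically forces $\sigma^*_0=\psi(u)$ for \emph{any} choice $\psi(u)\in S_u$ (and likewise for $v$), so there is no need to separately control vertices with several picked labels.

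The real gap is Stage~2, which you largely defer. The composition you sketch is a tensor product: columns indexed by pairs $(j_1,j_2)$ with parameter $k=k_1k_2$. This does not work over $\F_2$. Concretely, if you take columns to be concatenations $\bA_1[:,j_1]\circ\bA_2[:,j_2]$ with target $\by_1\circ\by_2$, then choosing $S=S_1\times S_2$ for YES-witnesses $S_1,S_2$ gives $\sum_{(j_1,j_2)\in S}\bA_1[:,j_1]=|S_2|\cdot\by_1$, which is $\bzero$ unless $k_2$ is odd, so completeness already breaks; and your soundness sketch (``project to one coordinate'') does not yield a multiplicative lower bound because the projection is not weight-preserving and solutions over $\F_2$ need not factor as indicator rectangles. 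The paper's operator $\oplus$ is a genuinely different, \emph{asymmetric hierarchical} construction: the new parameter is $k_2(k_1+1)$, not $k_1k_2$; the second instance $(\bB,\bz',k_2)$ constrains a ``selector'' block $\bx^0$ via $\bB\bx^0=\bz'$, and for each of the $v'$ coordinates $i$ of $\bx^0$ there is a fresh copy of $\bA$ constrained by $\bA\bx^i=x^0_i\cdot\bz$, which over $\F_2$ is either trivial (if $x^0_i=0$) or the full constraint $\bA\bx^i=\bz$ (if $x^0_i=1$). This conditional-triggering structure is exactly what makes the weight multiply: in a NO instance $\|\bx^0\|_0>\gamma k_2$, and each of those $>\gamma k_2$ coordinates forces $\|\bx^i\|_0>\gamma k_1$, giving $\|\bx\|_0>\gamma k_2+\gamma^2 k_1 k_2$ versus a YES weight of $k_2+k_1k_2$; this yields amplification factor $\gamma^{2-\eta}$ for $k$ large enough, and iterating a constant number of times gives any $\gamma$. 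You are right that this is the delicate part, but your proposed definition is not the right one and you have not supplied a working replacement.
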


The proof proceeds in two steps. First, we reduce \csp\ to $\gapvec_{\gamma}$ for some $\gamma > 1$ in Section~\ref{sec:base-reduction}. Then, we boost the gap in the \gapvec\ problem in Section~\ref{sec:gap-amplification}.

\subsection{Reducing \csp\ to \gapvec}\label{sec:base-reduction}

In this subsection, we show an FPT reduction from \csp\ to \gapvec, as stated below:

\begin{lemma}	\label{lem:csp-to-gapvec}
For any $\varepsilon \geqs 0$, there is an FPT reduction from $\csp_{\varepsilon}$ to $\gapvec_{\gamma}$ where $\gamma = 1 + \varepsilon/3$.
\end{lemma}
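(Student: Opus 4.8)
The plan is to implement the reduction exactly as sketched in the proof overview, so the main work is to define the matrix $\bA$ and target vector $\by$ precisely and then verify completeness and soundness. Given an instance $\Gamma = (G=(V,E), \Sigma, \{C_{uv}\}_{(u,v)\in E})$ of $\csp_\varepsilon$, I would set the parameter $k := |V| + |E|$, let the columns of $\bA$ be indexed by the set $(V \times \Sigma) \,\cup\, \{(e,\sigma_u,\sigma_v) : e = (u,v)\in E,\ (\sigma_u,\sigma_v)\in C_{uv}\}$, and let the rows be indexed by $V \,\cup\, E \,\cup\, (E \times \Sigma \times \{0,1\})$ (fixing an orientation of each edge so that ``$u$'' and ``$v$'' are well-defined). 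The nonzero entries are: in row $u\in V$, a $1$ in each column $(u,\sigma_u)$; in row $e=(u,v)\in E$, a $1$ in each column $(e,\sigma_u,\sigma_v)$; in row $((u,v),\sigma,0)$, a $1$ in column $(u,\sigma)$ and a $1$ in each column $((u,v),\sigma,\sigma_v)$ over all valid $\sigma_v$; symmetrically in row $((u,v),\sigma,1)$, a $1$ in column $(v,\sigma)$ and a $1$ in each column $((u,v),\sigma_u,\sigma)$. The target vector $\by$ has a $1$ in each of the $|V|+|E|$ coordinates indexed by $V\cup E$ and $0$ elsewhere. This is clearly computable in time $\poly(|\Gamma|)$, and $k = |V|+|E| \leqs g(|V|)$ for the computable function $g(t) = t + \binom{t}{2}$, so the map is an FPT reduction provided completeness and soundness hold.

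\textbf{Completeness.} Suppose $\val(\Gamma) = 1$, witnessed by $\phi: V\to\Sigma$. Take $\bx \in \F^m$ with support exactly $\{(u,\phi(u)): u\in V\} \cup \{((u,v),\phi(u),\phi(v)): (u,v)\in E\}$; note $(\phi(u),\phi(v))\in C_{uv}$ so each such column exists. Then $\|\bx\|_0 = |V|+|E| = k$, so $\bx \in \cB_m(\bzero, k)$. I need $\bA\bx = \by$: row $u$ gets contribution $1$ from column $(u,\phi(u))$ and nothing else, giving $\by_u = 1$; row $e=(u,v)$ gets $1$ from column $(e,\phi(u),\phi(v))$, giving $\by_e = 1$; row $((u,v),\sigma,0)$ gets a contribution from column $(u,\sigma)$ only if $\sigma = \phi(u)$, and from column $((u,v),\phi(u),\phi(v))$ only if $\sigma = \phi(u)$, so when $\sigma = \phi(u)$ these two $1$'s cancel over $\F$ and when $\sigma\neq\phi(u)$ there is no contribution — either way the entry is $0$ as required; the $((u,v),\sigma,1)$ rows are symmetric. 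Hence $(\bA,\by,k)$ is a YES instance of $\gapvec_\gamma$.

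\textbf{Soundness.} Suppose $(\bA,\by,k)$ is not a NO instance of $\gapvec_{1+\varepsilon/3}$, i.e.\ there is $\bx \in \cB_m(\bzero, \gamma k)$ with $\bA\bx = \by$, $\gamma = 1+\varepsilon/3$. For each $u\in V$ the $u$-row forces at least one column $(u,\cdot)$ in $\supp(\bx)$, and for each $e\in E$ the $e$-row forces at least one column $(e,\cdot,\cdot)$ in $\supp(\bx)$; that already accounts for $|V|+|E| = k$ columns, so at most $(\gamma-1)k = (\varepsilon/3)(|V|+|E|) \leqs (\varepsilon/3)(|V|+|E|)$ ``extra'' columns are present. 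Call a vertex $u$ \emph{bad} if more than one $(u,\cdot)$-column is selected, and an edge $e$ \emph{bad} if more than one $(e,\cdot,\cdot)$-column is selected or if either endpoint of $e$ is bad; the number of bad edges is at most (roughly) a constant times the number of extra columns, hence at most $\varepsilon |E|$ after choosing $\varepsilon$ appropriately — here I should be a little careful with the bookkeeping, charging each extra $(u,\cdot)$-column to the $\deg(u)$ edges at $u$; the cleanest route is to instead argue at the level of edges directly, defining an assignment $\psi$ by picking, for each $u$, an arbitrary $\sigma_u$ with $(u,\sigma_u)\in\supp(\bx)$, and showing every edge $(u,v)$ that is ``non-bad'' (exactly one column of each of the three relevant types, all consistent) is satisfied by $\psi$. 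The key local claim is: if $(u,\sigma_u), (v,\sigma_v), ((u,v),\sigma_u',\sigma_v')$ are the unique selected columns of their types for edge $(u,v)$, then the $((u,v),\sigma,0)$-rows being zero forces $\sigma_u = \sigma_u'$ (the only way to cancel the $1$ in column $(u,\sigma_u)$ is via column $((u,v),\sigma_u,\cdot)$ being selected) and similarly $\sigma_v = \sigma_v'$; since $(\sigma_u',\sigma_v')\in C_{uv}$ by construction, $(u,v)$ is satisfied by $\psi$. Thus all but a $\varepsilon$-fraction of edges are satisfied, contradicting $\val(\Gamma) < 1-\varepsilon$.

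\textbf{Main obstacle.} The conceptual content is routine once the matrix is written down; the one place to be careful is the soundness counting — precisely relating the at-most-$(\gamma-1)k$ surplus columns to a bound of $\varepsilon|E|$ on the fraction of violated edges, and handling the case where a vertex has several selected columns (which can wreck several incident edges at once). I expect to resolve this by charging surplus columns to edges via degrees and choosing the relationship between the gap $\gamma = 1+\varepsilon/3$ and $\varepsilon$ so the arithmetic closes; the factor $3$ in $\varepsilon/3$ is presumably exactly what makes this charging argument go through cleanly.
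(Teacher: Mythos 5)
Your construction, parameter bound, and completeness argument coincide with the paper's, and your ``key local claim'' is exactly the consistency argument the paper uses. The genuine gap is the soundness bookkeeping, which you yourself flag as unresolved: your notion of a bad edge includes edges with a multiply-labelled endpoint, and your proposed fix --- charging each extra $(u,\cdot)$-column to the $\deg(u)$ incident edges --- does not close, because the budget $(\gamma-1)k \leqs (\varepsilon/3)(|V|+|E|)$ only bounds the \emph{number} of surplus columns, not a degree-weighted count; a single surplus column at a vertex of degree $\Omega(|E|)$ would, under that accounting, invalidate $\Omega(|E|)$ edges, and no choice of the constant in $\gamma = 1+\varepsilon/3$ repairs this.

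The paper's proof sidesteps vertex badness entirely. Set $S_u = \{\sigma : \bx_{(u,\sigma)} = 1\}$ and $T_e = \{(\sigma_0,\sigma_1)\in C_e : \bx_{(e,\sigma_0,\sigma_1)} = 1\}$; the $V$- and $E$-rows force $|S_u|,|T_e|\geqs 1$, so $\sum_u (|S_u|-1)+\sum_e(|T_e|-1) \leqs (\gamma-1)k$, and hence at most $(\varepsilon/3)(|V|+|E|) \leqs \varepsilon|E|$ edges (using $|V|\leqs 2|E|$ w.l.o.g., a step you also omit) have $|T_e|\geqs 2$. For every remaining edge $e=(u,v)$ with $T_e = \{(\sigma^*_0,\sigma^*_1)\}$, the $((u,v),\psi(u),0)$-row reads $\bx_{(u,\psi(u))} + \sum_{(\sigma_0,\sigma_1)\in C_e,\ \sigma_0 = \psi(u)} \bx_{(e,\sigma_0,\sigma_1)} = 0$ over $\F$; since $\bx_{(u,\psi(u))}=1$ and the only column of the second kind in the support is $(e,\sigma^*_0,\sigma^*_1)$, this forces $\sigma^*_0 = \psi(u)$ for \emph{any} choice $\psi(u)\in S_u$ (and symmetrically $\sigma^*_1=\psi(v)$). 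So a vertex with $|S_u|\geqs 2$ cannot even be incident to an edge with $|T_e|=1$, every edge with a unique edge-column is satisfied by the arbitrarily chosen $\psi$, and the only quantity that needs to be charged against the surplus is $\sum_e (|T_e|-1)$, which the displayed inequality handles directly. Incorporating this observation turns your sketch into the paper's proof; without it, the soundness count as you describe it does not go through.
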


Combining the reduction with PIH, we obtain the following hardness of \gapvec:

\begin{theorem}\label{thm:MLDbase}
	Assuming PIH, for some $\gamma >1 $, there is no randomized FPT algorithm for $\gapvec_\gamma$.
\end{theorem}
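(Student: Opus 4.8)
The final statement to prove is Theorem~\ref{thm:MLDbase}, which is an immediate consequence of Lemma~\ref{lem:csp-to-gapvec} together with Hypothesis~\ref{hyp:pih} (PIH). So the real content is the construction behind Lemma~\ref{lem:csp-to-gapvec}, namely an FPT reduction from $\csp_\varepsilon$ to $\gapvec_{1+\varepsilon/3}$, and I would spend essentially all my effort there.

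\textbf{The plan.} Given an instance $\Gamma = (G = (V,E), \Sigma, \{C_{uv}\})$ of $\csp_\varepsilon$, I would build a matrix $\bA$ over $\F_2$ whose columns are indexed by the set $V \times \Sigma$ (``vertex-assignment'' columns) together with $\{(e, \sigma_u, \sigma_v) : e = (u,v) \in E, (\sigma_u, \sigma_v) \in C_{uv}\}$ (``edge-assignment'' columns), exactly as sketched in Section~\ref{sec:overview}. The rows split into three blocks: one row per vertex $u$ (forcing at least one $(u,\cdot)$-column to be chosen, since the target entry there is $1$), one row per edge $e$ (forcing at least one $(e,\cdot,\cdot)$-column to be chosen), and for each edge $e=(u,v)$ a block of $2|\Sigma|$ consistency rows indexed by $(e,\sigma,b)$ with $b\in\{0,1\}$, whose target entries are $0$. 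The $(e,\sigma,0)$-row has a $1$ in the $(u,\sigma)$-column and a $1$ in every $(e,\sigma,\sigma')$-column; the $(e,\sigma,1)$-row has a $1$ in the $(v,\sigma)$-column and a $1$ in every $(e,\sigma',\sigma)$-column. Set $k = |V| + |E|$, and set $\by$ to be $1$ on the first $|V|+|E|$ coordinates and $0$ elsewhere. This is clearly computable in time polynomial in $|\Gamma|$, and $k' = k = |V|+|E| \le |V|^2$ depends only on the parameter $|V|$, so it is a valid FPT reduction as soon as completeness and soundness hold.

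\textbf{Completeness.} If $\val(\Gamma) = 1$, fix a satisfying assignment $\phi\colon V\to\Sigma$. Let $\bx$ select the columns $(u,\phi(u))$ for all $u\in V$ and $((u,v),\phi(u),\phi(v))$ for all $(u,v)\in E$; so $\bx$ has weight exactly $|V|+|E| = k$, i.e. $\bx \in \cB_m(\mathbf 0, k)$. I would check $\bA\bx = \by$ coordinate by coordinate: each vertex-row $u$ gets exactly one $1$ (from $(u,\phi(u))$), matching $\by_u = 1$; each edge-row $e$ gets exactly one $1$, matching $\by_e=1$; and each consistency row $(e,\sigma,0)$ with $e=(u,v)$ gets a $1$ from $(u,\phi(u))$ iff $\sigma = \phi(u)$ and a $1$ from $((u,v),\phi(u),\phi(v))$ iff $\sigma=\phi(u)$, so the two contributions cancel over $\F_2$ and the coordinate is $0$, as required (and symmetrically for $(e,\sigma,1)$). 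Hence $(\bA,\by,k)$ is a YES instance of $\gapvec$.

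\textbf{Soundness.} This is the step I expect to require the most care. Suppose $(\bA,\by,k)$ is not a NO instance of $\gapvec_{1+\varepsilon/3}$, i.e. there is $\bx \in \cB_m(\mathbf 0, (1+\varepsilon/3)k)$ with $\bA\bx = \by$; I must show $\val(\Gamma) \ge 1-\varepsilon$. Because $\by$ is $1$ on all $|V|$ vertex-rows and all $|E|$ edge-rows, and these rows have disjoint supports among the column-groups they touch, $\bx$ must pick at least one column from each of the $|V|+|E| = k$ groups $\{(u,\cdot)\}_u$ and $\{(e,\cdot,\cdot)\}_e$; since the total weight is at most $(1+\varepsilon/3)k$, at most $(\varepsilon/3)k \le \varepsilon |E|$ of these groups can have more than one column picked (using $k = |V|+|E| \le |E|$ when, say, $|E| \ge |V|$; I would state the inequalities so this bookkeeping is clean, perhaps assuming WLOG $G$ has no isolated vertices so $|V| \le |E|$, hence $k \le 2|E|$ and $(\varepsilon/3)k \le \varepsilon|E|$ gives the right count — I'd double-check the constant $3$ here is what makes this go through). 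Call an edge $e=(u,v)$ \emph{good} if exactly one column is picked from each of $\{(u,\cdot)\}$, $\{(v,\cdot)\}$, $\{(e,\cdot,\cdot)\}$; then all but at most $\varepsilon|E|$ edges are good. For a good edge $e=(u,v)$, say the picked columns are $(u,\sigma_u)$, $(v,\sigma_v)$, $((u,v),\sigma'_u,\sigma'_v)$; the consistency row $(e,\sigma'_u,0)$ forces $\sigma_u = \sigma'_u$ and the row $(e,\sigma'_v,1)$ forces $\sigma_v = \sigma'_v$ (because $\bA\bx=\by$ is $0$ on these rows and no other picked column contributes to them), and since $(\sigma'_u,\sigma'_v)\in C_{uv}$ by definition of the edge-columns, $(\sigma_u,\sigma_v)\in C_{uv}$. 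Now define $\psi(u)$ to be any $\sigma$ with column $(u,\sigma)$ picked; then every good edge is satisfied by $\psi$, so $\val(\psi) \ge 1 - \varepsilon$ and thus $\val(\Gamma) \ge 1-\varepsilon$. Taking the contrapositive: if $\val(\Gamma) < 1-\varepsilon$ then for all $\bx \in \cB_m(\mathbf 0,(1+\varepsilon/3)k)$ we have $\bA\bx\ne\by$, which is the NO case. Combining Lemma~\ref{lem:csp-to-gapvec} with PIH (Hypothesis~\ref{hyp:pih}) and the preservation of randomized FPT algorithms under FPT reductions yields Theorem~\ref{thm:MLDbase}.
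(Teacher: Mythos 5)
Your construction, parameter choice, and completeness argument are exactly the paper's reduction (Lemma~\ref{lem:csp-to-gapvec}), and deducing Theorem~\ref{thm:MLDbase} from it plus PIH is the same final step. The genuine gap is in your soundness counting. You define an edge $e=(u,v)$ to be good only if \emph{all three} groups $\{(u,\cdot)\}$, $\{(v,\cdot)\}$, $\{(e,\cdot,\cdot)\}$ are singletons, and you infer ``all but at most $\varepsilon|E|$ edges are good'' from the fact that at most $(\varepsilon/3)k$ groups contain more than one picked column. That inference is invalid: a single vertex group with two picked columns spoils \emph{every} edge incident to that vertex, so the number of bad edges is bounded by the number of bad groups only after a charging argument you never make. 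Concretely, if $u$ has degree $D$ (think of a star), one extra pick in the group $\{(u,\cdot)\}$ makes all $D$ incident edges bad while consuming only one unit of the weight surplus, so ``few bad groups'' does not by itself give ``few bad edges.'' (Your bookkeeping aside also has slips: $k=|V|+|E|\leqs|E|$ is never true, and absence of isolated vertices gives $|V|\leqs 2|E|$, not $|V|\leqs|E|$; with $\gamma=1+\varepsilon/3$ the paper uses precisely $|V|\leqs 2|E|$ so that $(\gamma-1)k\leqs\varepsilon|E|$.)

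The paper sidesteps this entirely: it lets the exceptional set be only those edges with $|T_e|\geqs 2$ (where $T_e$ is the set of picked columns of the form $(e,\cdot,\cdot)$), whose number is at most the weight surplus $\|\bx\|_0-k\leqs(\gamma-1)k\leqs\varepsilon|E|$, since each such edge contributes at least one extra pick. Then, for an edge with $|T_e|=\{(\sigma_0^*,\sigma_1^*)\}$, it chooses $\psi(u)$ to be an \emph{arbitrary} element of $S_u$ and uses the $\F$-parity of the consistency row $(e,\psi(u),0)$ (which reads $\bx_{(u,\psi(u))}$ plus the picked edge-columns of $e$ with first coordinate $\psi(u)$, and must vanish) to force $\sigma_0^*=\psi(u)$, and symmetrically $\sigma_1^*=\psi(v)$; no control on $|S_u|$ or $|S_v|$ is needed. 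You could also repair your version by proving the missing charging step: over $\F$, if $|S_u|\geqs 2$ then the consistency rows force $|T_e|\geqs 2$ for every edge $e$ incident to $u$, so every spoiled edge can be charged to an extra pick in its own edge group. Either way, an additional argument beyond what you wrote is required to make the soundness step go through.
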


\begin{proof}[Proof of Lemma~\ref{lem:csp-to-gapvec}]
Let $\Gamma = (G = (V, E), \Sigma, \{C_{uv}\}_{(u, v) \in E})$ be the input for $\csp_{\varepsilon}$. We produce an instance $(\bA, \by, k)$ of $\gapvec_{\gamma}$ as follows. Let $n = |V| + |E| + 2 |E| |\Sigma|$ and $m = |V||\Sigma| + \sum_{(u, v) \in E} |C_{uv}|$; our matrix $\bA$ will be of dimension $(n \times m)$. For convenience, let us label the first $|V|$ rows of $\bA$ by the vertices $u \in V$, the next $|E|$ rows by the edges $e \in E$, and the last $2 |E| |\Sigma|$ rows by a tuple $(e, \sigma, b) \in E \times \Sigma \times \{0, 1\}$. Furthermore, we label the first $|V||\Sigma|$ columns of $\bA$ by $(u, \sigma_u)$ where $u \in V$ and $\sigma_u \in \Sigma$, and the rest of the columns by $(e, \sigma_0, \sigma_1)$ where $e \in E$ and $(\sigma_0, \sigma_1) \in C_e$. The entries of our matrix $\bA \in \F^{n \times m}$ can now be defined as follows.
\begin{itemize}
\item For each column of the form $(u, \sigma_u)$, let the following entries be one: 
\begin{itemize}
\item $\bA_{u, (u, \sigma_u)}$.
\item $\bA_{((u, v), \sigma_u, 0), (u, \sigma_u)}$ for every $v \in V$ such that $(u, v) \in E$.
\item $\bA_{((v, u), \sigma_u, 1), (u, \sigma_u)}$ for every $v \in V$ such that\footnote{For the clarity of presentation, we assume here that $G$ is directed.} $(v, u) \in E$.
\end{itemize}
The rest of the entries are set to zero.
\item For each column of the form $(e, \sigma_0, \sigma_1)$, let the following three entries be one: $\bA_{e, (e, \sigma_0, \sigma_1)}$, $\bA_{(e, \sigma_0, 0), (e, \sigma_0, \sigma_1)}$ and $\bA_{(e, \sigma_1, 1), (e, \sigma_0, \sigma_1)}$.
The rest of the entries are set to zero.
\end{itemize}
Finally, we set $\by = \vone_{|V| + |E|} \circ \vzero_{2|E||\Sigma|}$ and $k = |V| + |E|$.

{\bf Parameter Dependency.} The new parameter $k$ is simply $|V| + |E| = O(|V|^2)$.

Before we move on to prove the completeness and soundness of the reduction, let us state an observation that will be useful in the analysis:
\begin{observation}
For any $\bx \in \F^{m}$, the following properties hold.
\begin{itemize}
\item For every row of the form $u \in V$, we have 
\begin{align}
(\bA\bx)_{u} = \sum_{\sigma \in \Sigma} \bx_{(u, \sigma)}.
\label{eq:row-1}
\end{align}
\item For every row of the form $e \in E$, we have
\begin{align}
(\bA\bx)_e = \sum_{(\sigma_0, \sigma_1) \in C_e} \bx_{(e, \sigma_0, \sigma_1)}.
\label{eq:row-2}
\end{align}
\item For every row of the form $(e = (u_0, u_1), \sigma, b)$, we have
\begin{align}
(\bA\bx)_{(e, \sigma, b)} = \bx_{(u_b, \sigma)} + \sum_{(\sigma_0, \sigma_1) \in C_e \atop \sigma_b = \sigma} \bx_{(e, \sigma_0, \sigma_1)}.
\label{eq:row-3}
\end{align}
\end{itemize}
\end{observation}
Note that the above observation follows trivially from our definition of $\bA$.

{\bf Completeness.} Suppose that $\val(\Gamma) = 1$. That is, there exists an assignment $\psi: V \to \Sigma$ that satisfies all the edges. We define the vector $\bx \in \F^m$ as follows.
\begin{itemize}
\item For each $(u, \sigma)$, let $\bx_{(u, \sigma)} = \ind[\psi(u) = \sigma]$
\item For each $(e = (u, v), \sigma_0, \sigma_1)$, let $\bx_{(e, \sigma_0, \sigma_1)} = \ind[(\psi(u) = \sigma_0) \wedge (\psi(v) = \sigma_1)]$.
\end{itemize}
We claim that $\bA\bx = \by$. To see that this is the case, consider the following three cases of rows.
\begin{itemize}
\item For each row of the form $u \in V$, we have
\begin{align*}
(\bA\bx)_{u} \overset{(\ref{eq:row-1})}{=} \sum_{\sigma \in \Sigma} \bx_{(u, \sigma)} = \sum_{\sigma \in \Sigma} \ind[\psi(u) = \sigma] = 1.
\end{align*}
\item For each row of the form $e = (u, v) \in E$, we have 
\begin{align*}
(\bA\bx)_e \overset{(\ref{eq:row-2})}{=} \sum_{(\sigma_0, \sigma_1) \in C_e} \bx_{(e, \sigma_0, \sigma_1)} = \sum_{(\sigma_0, \sigma_1) \in C_e} \ind[(\psi(u) = \sigma_0) \wedge (\psi(v) = \sigma_1)] = 1.
\end{align*}
Note that, in the last equality, we use the fact that the edge $e$ is satisfied (i.e. $(\psi(u), \psi(v)) \in C_e$).
\item For each row $(e = (u_0, u_1), \sigma, b) \in E \times \Sigma \times \{0, 1\}$, we have 
\begin{align*}
(\bA\bx)_{(e, \sigma, b)} \overset{(\ref{eq:row-3})}{=} \bx_{(u_b, \sigma)} + \sum_{(\sigma_0, \sigma_1) \in C_e \atop \sigma_b = \sigma} \bx_{(e, \sigma_0, \sigma_1)} = \ind[\psi(u_b) = \sigma] + \ind[\psi(u_b) = \sigma] = 0,
\end{align*}
where the second equality uses the fact that exactly one of $\bx_{(e, \sigma_0, \sigma_1)}$ is not zero among $(\sigma_0, \sigma_1) \in C_e$, which is $\bx_{(e, \psi(u_0), \psi(u_1))}$.
\end{itemize}

Hence, $\bA\bx$ is indeed equal to $\by$. Finally, observe that $\|\bx\|_0 = |V| + |E|$ as desired.

{\bf Soundness.} We will prove this by contrapositive. Suppose that the constructed instance $(\bA, \by, k)$ is not a NO instance of $\gapvec_\gamma$, i.e., for some $x \in \cB_{m}(\bzero, \gamma k)$, $\bA\bx = \by$. We will show that $\Gamma$ is not a NO instance of $\csp_{\varepsilon}$, i.e., that $\val(\Gamma) \geqs 1 - \varepsilon$.

First, for each vertex $u \in V$, let $S_u = \{\sigma \in \Sigma \mid \bx_{(u, \sigma)} = 1\}$ and, for each $e \in E$, let $T_e = \{(\sigma_0, \sigma_1) \in C_e \mid \bx_{(e, \sigma_0, \sigma_1)} = 1\}$. From (\ref{eq:row-1}) and from $(\bA\bx)_u = \by_u = 1$, we can conclude that $|S_u| \geqs 1$ for all $u \in V$. Similarly, from (\ref{eq:row-2}) and from $(\bA\bx)_e = \by_e = 1$, we have $|T_e| \geqs 1$ for all $e \in E$.

We define an assignment $\psi: V \to \Sigma$ of $\Gamma$ by setting $\psi(u)$ to be an arbitrary element of $S_u$ for all $u \in V$. We will show that $\val(\psi) \geqs 1 - \varepsilon$, which indeed implies $\val(\Gamma) \geqs 1 - \varepsilon$.

To do so, let $E_{\uni}$ denote the set of $e \in E$ such that $|T_e| = 1$. Notice that
\begin{align}
(\gamma - 1) k &\geqs \|\bx\|_0 - k 						\label{eq:start}\\ 
&= \sum_{u \in V} |S_u| + \sum_{e \in E} |T_e| - k 			\nonumber	\\
&= \sum_{u \in V} (|S_u| - 1) + \sum_{e \in E} (|T_e| - 1) 	\nonumber   \\
&\geqs \sum_{e \in E \setminus E_{\uni}} (|T_e| - 1) 		\nonumber	\\
&\geqs |E \setminus E_{\uni}|,								\label{eq:end}
\end{align}
which implies that $|E_{\uni}| \geqs |E| - (\gamma - 1)k$, which, from our choice of $\gamma$, is at least $|E| - \frac{\varepsilon k}{3} \geqs (1 - \varepsilon)|E|$. Note that the last inequality follows from $|V| \leqs 2|E|$, which can be assumed w.l.o.g.

Since $|E_{\uni}| \geqs (1 - \varepsilon)|E|$, to show that $\val(\psi) \geqs 1 - \varepsilon$, it suffices to show that $\psi$ satisfies every edge in $E_{\uni}$. To see that this is the case, let $e = (u, v)$ be any edge in $E_{\uni}$. Let $(\sigma^*_0, \sigma^*_1)$ be the only element of $T_e$. Observe that, from (\ref{eq:row-3}) with $\sigma = \psi(u)$ and $b = 0$ and from $(\bA\bx)_{e, \sigma, b} = \by_{e, \sigma, b} = 0$, we can conclude that $\sigma^*_0 = \psi(u)$. Similarly, we can conclude that $\sigma^*_1 = \psi(v)$. As a result, $(\psi(u), \psi(v))$ must be in $C_e$, meaning that $\psi$ satisfies $e$, which completes our proof.
\end{proof}

\subsection{Gap Amplification}\label{sec:gap-amplification}

In this subsection, we describe the gap amplification step for $\gapvec$. Towards that end, we define a \emph{composition} operation $\oplus$ on $\gapvec$ instances, which can be used to efficiently amplify the gap to any constant factor with only a polynomial blowup in the instance size. 

\noindent{\bf The Composition Operator}: Consider two $\gapvec$ instances given by coefficient matrices ${\bf A} \in \mathbb{F}^{u \times v}_2, \mathbf{B} \in \mathbb{F}^{u' \times v'}_2$ and non-zero target vectors $\mathbf{z} \in \mathbb{F}^u_2, \mathbf{z}' \in \mathbb{F}^{u'}_2$. Their composition is a $\gapvec$ instance  $(\mathbf{C},\mathbf{w},k_2(k_1 + 1)) = (\mathbf{A},\mathbf{z},k_1) \oplus (\mathbf{B},\mathbf{z}',k_2)$, given by coefficient matrix $\mathbf{C} \in \mathbb{F}^{(u' + uv') \times (v' + vv')}_2$, and target vector $\mathbf{w} \in \mathbb{F}^{u' + uv'}_2$. They are constructed as follows. Consider the following partition of $[u' + uv']$ into blocks $S_0,S_1,\cdots,S_{v'}$, where $S_0$ consists of the first $u'$ coordinates, and for every $i \in [v']$ the block $S_i$ consists of coordinates $u' + (i-1)v+ 1,\ldots,u' + iv$. Similarly, consider the partition $T_0,T_1,\ldots,T_{v'}$ of $[v' + vv']$, where $T_0$ consists of the first $v'$ coordinates, and blocks $T_1,\ldots,T_{v'}$ is the contiguous equipartition of the remaining coordinates. Furthermore, for any choice of $S_i,T_j$, let $\mathbf{C}_{S_i,T_j}$ be the sub-matrix of $\mathbf{C}$ consisting of rows and columns indexed by $S_i$ and $T_j$ respectively. Extending the notation to vectors $\mathbf{x} \in \mathbb{F}^{v' + vv'}_2$, we use $\mathbf{x}^i$ to denote the restriction of the vectors along the coordinates in $T_i$. Now we describe the construction of $\mathbf{C}$ and $\mathbf{w}$ in a row block wise order. 

\begin{itemize}
	\item[1.] {\bf Row block} $S_0$: We set $\mathbf{C}_{S_0,T_0} = \mathbf{B}$, and $\mathbf{C}_{S_0,T_i} = \mathbf{0}_{u' \times v}$ for every $i \in [v']$. Additionally, we set the corresponding sub-vector $\mathbf{w}_{S_0} = \mathbf{z'}$.
	\item[2.] {\bf Row block} $S_i$ (for $i \geqs 1$): Here $\mathbf{C}_{S_i,T_0}$ contains $\mathbf{z}$ in the $i^{th}$ column and is zero everywhere else. The sub-matrix $\mathbf{C}_{S_i,T_i}$ is set to the coefficient matrix $\mathbf{A}$, and for all blocks $j \notin \{0,i\}$, we set the sub-matrices $\mathbf{C}_{S_i,T_j} = \mathbf{0}_{u \times v}$. Finally, for the target vector we set the corresponding block $\mathbf{w}_{S_i} = \mathbf{0}_u$.
\end{itemize}

The intuition underlying the above construction is as follows. For simplicity, let $k = k_1 = k_2$. Consider the row block $S_0$ in $(\mathbf{C},\mathbf{w},k^2 + k)$. By construction, the set of indices of non-zero column of $\mathbf{C}$ along $S_0$ is exactly $T_0$, and by construction $\mathbf{C}_{S_0,T_0} = \mathbf{B}$. In particular, for any vector $\mathbf{x} \in \mathbb{F}^{v' \times vv'}$ satisfying $\mathbf{C}\mathbf{x} = \mathbf{w}$, the constraints along rows in $S_0$ enforce $\mathbf{B}\mathbf{x}^0 = \mathbf{z}'$, and hence $\|\mathbf{x}^0\|_0 \geqs k$. Similarly, for any $i \in [v']$, the non-zero columns blocks corresponding to the row block $S_0$ are $T_0$ and $T_i$. Since $\mathbf{C}_{S_i,T_i} = \mathbf{A}$, the row block $S_i$ forces the constrains $\mathbf{A}\mathbf{x}^i = x^i_0\mathbf{z}$. Therefore, for any $i \in [v]$ such that $x^i_0 \ne 0$, the sub-vector $\mathbf{x}^i$ must satisfy $\mathbf{A}\mathbf{x}^i = \mathbf{z}$, and hence $\|\mathbf{x}^i\|_0 \geqs k$. Since this must happen for every $i \in [v']$ such that $x^0_i \neq 0$, the overall Hamming weight of the vector $\mathbf{x}$ must be at least $k+k^2$. These observations are made formal in the following lemma.

\begin{lemma}									\label{lem:gap_step}
	Consider coefficient matrices $\mathbf{A} \in \mathbb{F}^{u \times v}_2,\mathbf{B} \in \mathbb{F}^{u' \times v'}_2$, and target vectors $\mathbf{z} \in \mathbb{F}^v_2,\mathbf{z}' \in \mathbb{F}^{v'}_2$ corresponding to $\gapvec_\gamma$ instances $(\mathbf{A},\mathbf{z},k_1)$ and $(\mathbf{B},\mathbf{z}',k_2)$. Then $(\mathbf{C},\mathbf{w},k_2(k_1 + 1)) = (\mathbf{A},\mathbf{z},k_1)\oplus(\mathbf{B},\mathbf{z}',k_2)$ as constructed above satisfies the following properties:
	\begin{itemize}
		\item If $(\mathbf{A},\mathbf{z},k_1)$ and $(\mathbf{B},\mathbf{z}',k_2)$ are YES instances, then there exists $\bx \in \mathcal{B}_{v' + vv'}\big(\mathbf{0},k_2 + k_1k_2\big)$ such that $\mathbf{C}\bx = \mathbf{w}$.
		\item If $(\mathbf{A},\mathbf{z},k_1)$ and $(\mathbf{B},\mathbf{z}',k_2)$ are NO instances, then for every $\bx \in \mathcal{B}_{v' + vv'}\big(\mathbf{0},\gamma k_2 + \gamma^2k_1k_2\big)$ we have $\mathbf{C}\bx \neq \mathbf{w}$.
	\end{itemize} 
\end{lemma}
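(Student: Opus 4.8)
The plan is to establish the two bullet points directly from the block structure of $\mathbf{C}$ and $\mathbf{w}$, using the intuition already sketched before the lemma statement. Throughout, for $\mathbf{x} \in \mathbb{F}_2^{v' + vv'}$ I write $\mathbf{x}^0 \in \mathbb{F}_2^{v'}$ for the restriction to $T_0$ and $\mathbf{x}^i \in \mathbb{F}_2^{v}$ for the restriction to $T_i$, $i \in [v']$; note $\|\mathbf{x}\|_0 = \|\mathbf{x}^0\|_0 + \sum_{i \in [v']} \|\mathbf{x}^i\|_0$. The key bookkeeping fact, immediate from the construction, is that the equation $\mathbf{C}\mathbf{x} = \mathbf{w}$ decouples into: (i) the $S_0$-rows give $\mathbf{B}\mathbf{x}^0 = \mathbf{z}'$; and (ii) for each $i \in [v']$, the $S_i$-rows give $\mathbf{A}\mathbf{x}^i = \mathbf{x}^0_i \cdot \mathbf{z}$ (here $\mathbf{x}^0_i$ is the $i$-th coordinate of $\mathbf{x}^0$, which is the entry picked out by the single nonzero column of $\mathbf{C}_{S_i,T_0}$). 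I would state this decoupling as the first step, since both directions rest on it.

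\textbf{Completeness.} Suppose $(\mathbf{A},\mathbf{z},k_1)$ and $(\mathbf{B},\mathbf{z}',k_2)$ are YES instances, so there are $\mathbf{a} \in \mathcal{B}_v(\mathbf{0},k_1)$ with $\mathbf{A}\mathbf{a} = \mathbf{z}$ and $\mathbf{b} \in \mathcal{B}_{v'}(\mathbf{0},k_2)$ with $\mathbf{B}\mathbf{b} = \mathbf{z}'$. Define $\mathbf{x}$ by $\mathbf{x}^0 = \mathbf{b}$ and, for each $i \in [v']$, $\mathbf{x}^i = \mathbf{b}_i \cdot \mathbf{a}$. Then (i) holds by construction, and (ii) holds because $\mathbf{A}(\mathbf{b}_i \mathbf{a}) = \mathbf{b}_i \mathbf{z}$; hence $\mathbf{C}\mathbf{x} = \mathbf{w}$. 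For the weight: $\|\mathbf{x}^0\|_0 = \|\mathbf{b}\|_0 \leqs k_2$, and $\mathbf{x}^i$ is nonzero only when $\mathbf{b}_i = 1$, which happens for at most $k_2$ indices $i$, each contributing $\|\mathbf{a}\|_0 \leqs k_1$; so $\|\mathbf{x}\|_0 \leqs k_2 + k_1 k_2$, as required.

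\textbf{Soundness.} I would prove the contrapositive: assume there is $\mathbf{x}$ with $\mathbf{C}\mathbf{x} = \mathbf{w}$ and $\|\mathbf{x}\|_0 \leqs \gamma k_2 + \gamma^2 k_1 k_2$, and deduce that either $(\mathbf{A},\mathbf{z},k_1)$ or $(\mathbf{B},\mathbf{z}',k_2)$ is not a NO instance. By (i), $\mathbf{B}\mathbf{x}^0 = \mathbf{z}'$; if $\|\mathbf{x}^0\|_0 \leqs \gamma k_2$ we conclude $(\mathbf{B},\mathbf{z}',k_2)$ is not NO and we are done, so assume $\|\mathbf{x}^0\|_0 > \gamma k_2$. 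Let $I = \{ i \in [v'] : \mathbf{x}^0_i = 1\}$, so $|I| = \|\mathbf{x}^0\|_0 > \gamma k_2$. For each $i \in I$, (ii) gives $\mathbf{A}\mathbf{x}^i = \mathbf{z}$; if some $i \in I$ had $\|\mathbf{x}^i\|_0 \leqs \gamma k_1$, then $(\mathbf{A},\mathbf{z},k_1)$ is not a NO instance and we are done. Otherwise $\|\mathbf{x}^i\|_0 > \gamma k_1$ for every $i \in I$, whence
\begin{align*}
\|\mathbf{x}\|_0 \;\geqs\; \|\mathbf{x}^0\|_0 + \sum_{i \in I} \|\mathbf{x}^i\|_0 \;>\; \gamma k_2 + |I|\cdot \gamma k_1 \;>\; \gamma k_2 + \gamma k_2 \cdot \gamma k_1 \;=\; \gamma k_2 + \gamma^2 k_1 k_2,
\end{align*}
contradicting the weight bound on $\mathbf{x}$. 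This exhausts the cases and completes the contrapositive.

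The argument is essentially routine once the decoupling in step one is set up carefully; the only place demanding attention is making sure the single nonzero column of $\mathbf{C}_{S_i,T_0}$ really multiplies exactly $\mathbf{x}^0_i$ (so that the $S_i$-constraint is $\mathbf{A}\mathbf{x}^i = \mathbf{x}^0_i \mathbf{z}$ and not something involving other coordinates of $\mathbf{x}^0$), and keeping the $\gamma$ versus $\gamma^2$ counting straight in the soundness chain of inequalities. I do not anticipate a genuine obstacle here — the lemma is a clean gadget-composition bound and the work is in the indexing, not in any inequality that is tight in a delicate way.
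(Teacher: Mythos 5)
Your proposal is correct and follows essentially the same route as the paper: the identical completeness construction ($\mathbf{x}^0 = \mathbf{b}$, $\mathbf{x}^i = \mathbf{b}_i\cdot\mathbf{a}$) and the same soundness counting via the decoupled constraints $\mathbf{B}\mathbf{x}^0 = \mathbf{z}'$ and $\mathbf{A}\mathbf{x}^i = \mathbf{x}^0_i\cdot\mathbf{z}$, with your contrapositive phrasing being only a cosmetic difference from the paper's direct argument that any solution of $\mathbf{C}\mathbf{x}=\mathbf{w}$ has weight exceeding $\gamma k_2 + \gamma^2 k_1 k_2$.
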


\begin{proof}
   
	Suppose $(\mathbf{A},\mathbf{z},k_1)$ and $(\mathbf{B},\mathbf{z}',k_2)$ are NO instances. Consider any vector $\mathbf{x} \in \mathbb{F}^{v' + vv'}_2$ satisfying $\mathbf{C}\mathbf{x} = \mathbf{w}$. By construction, it must satisfy the constraints corresponding to row-block $S_0$ i.e., $\mathbf{B}\mathbf{x}^0 = \mathbf{z}'$, and therefore $\|\mathbf{x}^0 \|_0 > \gamma k_2$. Let $\Omega = \{ i \in [v'] : x^0_i \neq 0\}$ be the set of non-zero indices in $\mathbf{x}^0$, and fix any $i \in \Omega$. It is easy to see that along row-block $S_i$ the only variables acting on non-zeros blocks are the variable $x^0_i$ (acting on the sub-matrix $\mathbf{C}_{S_i,T_0}$) and the vector $\mathbf{x}^i$ (acting on the sub-matrix $\mathbf{C}_{S_i,T_i} = \mathbf{A}$). All together, this induces constraints $\mathbf{A}\mathbf{x}^i = x^0_i\cdot \mathbf{z}$. Since $x^0_i \neq 0$, the vector $\mathbf{x}^i$ must satisfy $\mathbf{A}\mathbf{x}^i = \mathbf{z}$ and therefore $\|\mathbf{x}^i\|_0 > \gamma k_1$. Moreover, since this must happen for every $i \in \Omega$ and $|\Omega| > \gamma k_2$, we have $\|\mathbf{x}\|_0 > \gamma k_2 + \gamma^2 k_1k_2$ as desired. 
	
	It remains to be shown if $(\mathbf{A},\mathbf{z},k_1)$ and $(\mathbf{B},\mathbf{z}',k_2)$ are YES instances, then there exists a satisfying solution with Hamming weight at most $k_2(k_1 + 1)$. Let $\mathbf{a} \in \mathcal{B}_{v}(\mathbf{0},k_1)$ and $\mathbf{b} \in \mathcal{B}_{v'}(\mathbf{0},k_2)$ be such that $\mathbf{A}\mathbf{a} = \mathbf{z}$ and $\mathbf{B}\mathbf{b} = \mathbf{z}'$. We construct $\mathbf{x} \in \mathbb{F}^{v' + vv'}_2$ as follows. We set $\mathbf{x}^0 = \mathbf{b}$ and for every $i \in [v]$, we set $\mathbf{x}^i = b_i\cdot\mathbf{a}$. It can be easily verified that the vector $\mathbf{x}$ is a solution to $(\mathbf{C},\mathbf{w})$. Indeed, along row-block $S_0$, we have $\mathbf{B}\mathbf{x}^0 = \mathbf{B}\mathbf{b} = \mathbf{z}$. Furthermore, if $x^0_i = 1$ the corresponding row-block $S_i$ satisfies $\mathbf{A}\mathbf{x}^i = \mathbf{A}\mathbf{a} = \mathbf{z}$. Finally, if $x^0_i = 0$, the corresponding row-block evaluates to $\mathbf{A}\mathbf{x}^i = \mathbf{0}_u$ which is trivially satisfied by the given construction. Since the vector $\mathbf{x}$ has Hamming weight at most $k_2(k_1 + 1)$, the claim follows. 	
\end{proof}

The above lemma can be used to amplify the gap for $\gapvec_\gamma$ instances, as shown by the following corollary.

\begin{corollary}									\label{corr:gap-amplify}
	For all choices of $\gamma > 1$,$\eta > 0$ and $k \geqs \Big(\gamma^{\eta} - 1\Big)^{-1}$, there exists an FPT reduction from $k$-$\gapvec_\gamma$ to $(k^2 + k)$-$\gapvec_{\gamma^{2-\eta}}$.
\end{corollary}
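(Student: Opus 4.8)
The plan is to iterate the composition operator $\oplus$ from Lemma~\ref{lem:gap_step} on a $k$-$\gapvec_\gamma$ instance with itself, and then track how the gap, the parameter, and the instance size evolve over the iterations. Starting from an instance $(\bA,\bz,k)$ of $\gapvec_\gamma$, set $(\bA^{(1)},\bz^{(1)},k_1) := (\bA,\bz,k)$ and recursively define $(\bA^{(i+1)},\bz^{(i+1)},k_{i+1}) := (\bA^{(i)},\bz^{(i)},k_i)\oplus(\bA,\bz,k)$. By Lemma~\ref{lem:gap_step}, in the YES case the composed instance has a solution of Hamming weight at most $k_{i+1} := k(k_i+1)$, so $k_i$ grows like a tower that is still bounded by $k^{O(i)}$ — a computable function of $k$ for each fixed $i$ — keeping the reduction an FPT reduction. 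In the NO case, Lemma~\ref{lem:gap_step} guarantees that no vector in $\mathcal B(\mathbf 0, \gamma k_i + \gamma^2 k k_i)$ maps to the target; since $k_{i+1} = k k_i + k \geqs k k_i$, this rules out all vectors of weight at most roughly $\gamma^2 k_{i+1}(1+o(1))$ — more precisely one should check that $\gamma k_i + \gamma^2 k k_i \geqs \gamma^{2-\eta} k_{i+1}$ for an appropriate choice of the number of iterations, which is where the hypothesis $k \geqs (\gamma^\eta - 1)^{-1}$ enters.

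Concretely, I would first verify the single-step gap bookkeeping: after one composition, the YES threshold is $k' = k(k+1)$ and the NO threshold is $\gamma k + \gamma^2 k^2 = \gamma k(1+\gamma k)$. We want the effective gap $\gamma' = (\text{NO threshold})/k'$ to be at least $\gamma^{2-\eta}$. Computing, $\gamma' = \gamma k(1+\gamma k)/(k(k+1)) = \gamma(1+\gamma k)/(1+k)$. The claim $\gamma' \geqs \gamma^{2-\eta}$ is then equivalent to $(1+\gamma k)/(1+k) \geqs \gamma^{1-\eta}$, i.e. $1 + \gamma k \geqs \gamma^{1-\eta}(1+k)$, i.e. $1 - \gamma^{1-\eta} \geqs k(\gamma^{1-\eta} - \gamma) = k\gamma^{1-\eta}(1 - \gamma^{\eta})$. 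Rearranging (and using $\gamma > 1$ so $\gamma^{1-\eta} - \gamma < 0$), this reduces to $k(\gamma^\eta - 1) \geqs \gamma^{1-\eta}(\gamma^\eta - 1) - (\gamma^\eta-1)/\gamma^{?}$ — so the hypothesis $k \geqs (\gamma^\eta-1)^{-1}$ should be exactly the clean sufficient condition, after simplification, that makes a single application of $\oplus$ already achieve gap $\gamma^{2-\eta}$. (I would double-check the precise algebra here, as the constant factors are the only delicate point.)

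Given that, the corollary follows from a single application of the composition operator: the reduction maps $(\bA,\bz,k)$ to $(\mathbf C,\mathbf w, k^2+k)$ as in Lemma~\ref{lem:gap_step}, which runs in polynomial time, produces a new parameter $k^2 + k$ that is a computable function of $k$, and has the YES/NO guarantees of $\gapvec_{\gamma^{2-\eta}}$ by the gap computation above together with the monotonicity observation that a NO instance for threshold $\gamma k_i + \gamma^2 k k_i$ is a fortiori a NO instance for the smaller threshold $\gamma^{2-\eta}(k^2+k)$, and a YES instance for weight $k^2+k$ is trivially a YES instance for $\gapvec$ with parameter $k^2+k$. The FPT reduction conditions from the definition are then all met.

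The main obstacle I expect is purely the inequality manipulation tying the condition $k \geqs (\gamma^\eta - 1)^{-1}$ to the gap $\gamma^{2-\eta}$: one must be careful that the loss from the ``$+1$'' additive terms in both the YES threshold $k(k+1)$ and the NO threshold $\gamma k + \gamma^2 k^2$ is absorbed by the slack $\gamma^{-\eta}$ in the gap, and that this happens precisely when $k$ is at least $(\gamma^\eta-1)^{-1}$. There is no conceptual difficulty — Lemma~\ref{lem:gap_step} does all the real work — so the proof is essentially a one-paragraph deduction once the arithmetic is pinned down; I would present it as: invoke Lemma~\ref{lem:gap_step} with $\mathbf A = \mathbf B$, $\mathbf z = \mathbf z'$, $k_1 = k_2 = k$, then verify $\gamma k + \gamma^2 k^2 \geqs \gamma^{2-\eta}(k^2+k)$ using $k(\gamma^\eta - 1) \geqs 1$.
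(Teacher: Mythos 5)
Your proposal is correct and follows essentially the same route as the paper: compose the instance with itself once via the operator of Lemma~\ref{lem:gap_step} (with $k_1=k_2=k$), note the YES weight is $k^2+k$, and check $\gamma^2k^2+\gamma k \geqs \gamma^{2-\eta}(k^2+k)$ using $k(\gamma^\eta-1)\geqs 1$. The algebra you flagged does go through (for $\eta\geqs 1$ it is trivial, and for $\eta<1$ the needed bound is $k\geqs(1-\gamma^{\eta-1})/(\gamma^\eta-1)$, which is implied by the hypothesis), so the iterated-composition discussion at the start is unnecessary — a single application suffices, exactly as in the paper.
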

\begin{proof}
	Let $(\mathbf{A},\mathbf{z},k)$ be a $\gapvec_\gamma$ instance, where $\mathbf{A} \in \F^{u \times v}$. Let $(\mathbf{C},\mathbf{w},k^2 + k) = (\mathbf{A},\mathbf{z},k) \oplus (\mathbf{A},\mathbf{z},k)$ be constructed as above. If $(\mathbf{A},\mathbf{z},k)$ is a YES instance, then there exists $\bx \in \mathcal{B}_{v + uv}(\mathbf{0},k + k^2)$ of Hamming such that $\mathbf{C}\bx = \mathbf{w}$. Conversely, if $(\mathbf{A},\mathbf{z},k)$ is a NO instance, then every solution to $\mathbf{C}\bx = \mathbf{w}$ has Hamming weight at least 
	 $ \gamma^2k^2 + \gamma k \geqs \gamma^{2-\eta}(k^2 + k)$ (by our choice of $k$). Hence the claim follows.
\end{proof}


Theorem~\ref{thm:MLDmain} is a direct consequence of the above lemma:

\begin{proof}[Proof of Theorem~\ref{thm:MLDmain}]
	Consider the hardness of \gapvec\ as given in Theorem~\ref{thm:MLDbase}.
	For any $\gamma \geqs 1 + \epsilon/3$, applying Corollary \ref{corr:gap-amplify} $i \geqs \lceil \log(\gamma)/\log(1 + \epsilon/3)\rceil$ times on $\gapvec_{(1+\epsilon/3)}$ instance can boost the gap to $\gapvec_{\gamma}$. Since all the steps in the reduction are FPT, the claim follows.
\end{proof}

\section{Parameterized Intractability of Minimum Distance Problem}			\label{sec:main-reduction}

Next, we will prove our main theorem regarding parameterized intractability of \mdp:

\begin{theorem}\label{thm:MDPmain}
	Assuming PIH, there is no randomized FPT algorithm for $\mdp_\gamma$ for any $\gamma \geqs 1$.
\end{theorem}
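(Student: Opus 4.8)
The plan is to follow the roadmap laid out in Figure~\ref{fig:overviewMDC}, i.e. to establish Theorem~\ref{thm:MDPmain} by chaining together three reductions: (i) from $\gapvec$ (whose hardness is Theorem~\ref{thm:MLDmain}) to $\sncp$ with arbitrary constant gap; (ii) from $\sncp$ to $\mdp_{1.99}$ via a gadget built from \emph{sparse covering codes} in the style of the Dumer--Micciancio--Sudan construction; and (iii) a gap amplification from $\mdp_{1.99}$ to $\mdp_\gamma$ for every constant $\gamma \geqs 1$ by tensoring the code with itself $\lceil \log_{1.99}\gamma\rceil$ times, using the fact that the distance of a tensor product of codes is the product of the distances.

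First I would handle step (i), which is the easy folklore reduction already sketched in the overview: given a $\gapvec_\gamma$ instance $(\bA,\by,k)$ with $\bA \in \F^{n\times m}$, stack $\gamma k + 1$ copies of $\bA$ on top of an $m\times m$ identity block to form $\bA'$, and set $\by' = \by$ repeated $\gamma k+1$ times followed by $\bzero_m$, keeping the parameter $k$. Since for any $\bx$ we have $\|\bA'\bx-\by'\|_0 = (\gamma k+1)\|\bA\bx-\by\|_0 + \|\bx\|_0$, a YES instance of $\gapvec_\gamma$ (with $\bx\in\cB(\bzero,k)$, $\bA\bx=\by$) maps to $\|\bA'\bx-\by'\|_0 = \|\bx\|_0 \leqs k$, while in a NO instance every $\bx$ has either $\bA\bx\neq\by$, forcing $\|\bA'\bx-\by'\|_0 \geqs \gamma k + 1 > \gamma k$, or $\bA\bx=\by$ but then $\|\bx\|_0 > \gamma k$ (since $\bx\notin\cB(\bzero,\gamma k)$), again giving $\|\bA'\bx-\by'\|_0 > \gamma k$. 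This gives $\sncp_\gamma$ hardness for all constant $\gamma$ assuming PIH.

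The main obstacle is step (ii), and the bulk of the work goes into establishing the existence and (randomized) constructibility of sparse covering codes: a code $\bL$ of block length $h$ and distance $d$, with $d = \Theta(k)$ a function of the parameter only, together with a linear map $\bT$ and a distribution over centers $\bs$ such that $\cB(\bs,r)\cap\bL$ is large and $\bT(\cB(\bs,r)\cap\bL)$ covers $\cB(\bzero,k)$ with non-negligible probability. The key idea is that any binary code near the sphere-packing bound (e.g. BCH, via Theorem~\ref{thm:bch}, with message length $\geqs h - (d/2+O(1))\log h$) has the property that a random point of $\F^h$ lies within distance $\lceil (d-1)/2\rceil$ of a codeword with non-negligible probability; choosing $r = \lceil (d-1)/2\rceil + k$ and setting up $\bT$ and the center distribution appropriately transfers this to the covering guarantee for a fixed $\bx\in\cB(\bzero,k)$. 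I would then plug this gadget into the DMS matrix $\bB$ of~\eqref{eq:LDCGadget}, fed with a $\sncp_\gamma$ instance for $\gamma = 400$ (so that $\gamma' < \frac{2\gamma}{2+\gamma}$ comfortably exceeds $1.99$): the completeness argument uses $\bT\bL\bz' = \bx$ to get $\|\bB\bz\|_0 \leqs ak+br$; the soundness argument is the usual case analysis on the last coordinate of $\bz$, where the coordinate-$0$ case uses $\|\bB\bz\|_0 \geqs bd$ and the distance bound $d > 2r \cdot \frac{\gamma'}{\ldots}$, and the coordinate-$1$ case uses the NO-ness of the $\sncp$ instance. Throughout one must check $ak+br$ is bounded by a function of $k$ alone (which holds since $d,r = \Theta(k)$), so the whole thing is an FPT reduction. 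The randomization is one-sided in the sense of Definition~\ref{def:rand-fpt-red} — the center is only good with non-negligible probability in the YES case — which by Lemma~\ref{lem:red-intract} is enough to conclude non-existence of randomized FPT algorithms for $\mdp_{1.99}$, and finally Proposition~\ref{prop:gap-amplification} (tensoring) boosts $1.99$ to any constant $\gamma$.
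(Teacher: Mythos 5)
Your proposal is correct and follows essentially the same route as the paper: the reduction from $\gapvec$ to $\sncp$ by stacking $\lceil\gamma k+1\rceil$ copies of $\bA$ over an identity block (Lemma~\ref{lem:gapvec-to-sncp}), the construction of sparse covering codes from BCH codes near the sphere-packing bound with $r=\lceil (d-1)/2\rceil+k$ and a randomized center (Lemma~\ref{lem:ldc}), the DMS-style gadget with $\gamma=400$ yielding $\mdp_{1.99}$ via a one-sided-error FPT reduction (Lemma~\ref{lem:sncp-to-mdp} together with Lemma~\ref{lem:red-intract}), and tensoring for gap amplification (Proposition~\ref{prop:gap-amplification}). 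This matches the paper's proof of Theorem~\ref{thm:MDPmain}.
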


This again proceeds in two steps. First, we provide an FPT reduction from \gapvec\ to \sncp\ in Section~\ref{sec:gapvec-to-snc}. Next, we formalize the definition of Sparse Covering Codes (\SCC), prove their existence, and show how to use them to reduce \sncp\ to \mdp\ in Section~\ref{sec:dense-code}.

\subsection{Reducing $\gapvec$ to $\sncp$} \label{sec:gapvec-to-snc}
In this subsection, we show an FPT reduction \gapvec\ to \sncp: 

\begin{lemma}							\label{lem:gapvec-to-sncp}
For any constant $\gamma > 1$, there is an FPT reduction from $\gapvec_{\gamma}$ to $\sncp_{\gamma}$.
\end{lemma}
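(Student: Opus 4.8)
The plan is to use the ``stack-and-pad'' construction already sketched in the proof overview. Given an instance $(\bA, \by, k)$ of $\gapvec_\gamma$ with $\bA \in \F^{n \times m}$, set $t := \lfloor \gamma k \rfloor + 1$, which is an integer with $t > \gamma k$, and output the instance $(\bA', \by', k)$ of $\sncp_\gamma$ where
\[
\bA' := \begin{bmatrix} \bA \\ \vdots \\ \bA \\ \Id_m \end{bmatrix}, \qquad \by' := \begin{bmatrix} \by \\ \vdots \\ \by \\ \vzero_m \end{bmatrix},
\]
with $t$ stacked copies of $\bA$ (resp.\ $\by$) on top and an $m \times m$ identity block (resp.\ an all-zero block) at the bottom. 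The role of the $t$ copies is to inflate the cost of every coordinate where $\bA\bx \ne \by$ to $t > \gamma k$ in Hamming norm, so that any $\bx$ that is $\gamma k$-close to $\by'$ must already satisfy $\bA\bx = \by$ exactly; the identity block forces $\|\bx\|_0$ itself to be charged, which is what lets the completeness direction go through with the same weight bound $k$.

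The entire argument reduces to the observation that, for every $\bx \in \F^m$,
\[
\|\bA'\bx - \by'\|_0 \;=\; t \cdot \|\bA\bx - \by\|_0 \;+\; \|\bx\|_0,
\]
which is immediate since $\bA'\bx - \by' = (\bA\bx - \by) \circ \cdots \circ (\bA\bx - \by) \circ \bx$. For completeness, if $(\bA, \by, k)$ is a YES instance and $\bx \in \cB_m(\mathbf{0}, k)$ satisfies $\bA\bx = \by$, then the identity gives $\|\bA'\bx - \by'\|_0 = \|\bx\|_0 \leqs k$, so $(\bA', \by', k)$ is a YES instance of $\sncp_\gamma$, witnessed by the same $\bx$ (which indeed lies in $\cB_m(\mathbf{0}, k)$). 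For soundness I would argue by contrapositive: if $(\bA', \by', k)$ is not a NO instance, pick $\bx$ with $\|\bA'\bx - \by'\|_0 \leqs \gamma k$; then $t \cdot \|\bA\bx - \by\|_0 \leqs \gamma k < t$, which forces $\|\bA\bx - \by\|_0 = 0$, and moreover $\|\bx\|_0 \leqs \gamma k$, so $\bx \in \cB_m(\mathbf{0}, \gamma k)$ witnesses that $(\bA, \by, k)$ is not a NO instance of $\gapvec_\gamma$.

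Finally, this is clearly an FPT reduction: the parameter is preserved ($k' = k$), and $\bA'$ has $tn + m = O(\gamma k n + m)$ rows and $m$ columns, so $(\bA', \by', k)$ is computable from $(\bA, \by, k)$ in time $\poly(nm, k)$. I do not expect any genuine obstacle in this lemma — it is one of the simplest reductions in the paper. The only point deserving a moment's care is that the number of stacked copies must be an honest integer strictly larger than $\gamma k$ (not the literally-written ``$\gamma k + 1$'', which need not be an integer), which is why I take $t = \lfloor \gamma k\rfloor + 1$; everything else is bookkeeping.
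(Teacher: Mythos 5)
Your proof is correct and uses the same construction as the paper: stack $t$ copies of $\bA$ on top of an identity block, with $t$ an integer strictly larger than $\gamma k$ (the paper takes $\lceil \gamma k + 1 \rceil$, you take $\lfloor \gamma k \rfloor + 1$; both work and the choice is immaterial). The key identity $\|\bA'\bx - \by'\|_0 = t\|\bA\bx - \by\|_0 + \|\bx\|_0$ and both directions match the paper's argument; your contrapositive phrasing of soundness is, if anything, slightly cleaner than the paper's two-case split, which opens with ``consider any $\bx \in \cB_m(\bzero, \gamma k)$'' yet then analyzes the case $\|\bx\|_0 > \gamma k$ that lies outside that ball.
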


\begin{proof}
Given an instance $(\bA, \by, k)$ of $\gapvec_{\gamma}$ where $\bA \in \F^{n \times m}$. We create an instance $(\bA', \by', k')$ of $\sncp_{\gamma}$ by letting
\begin{align*}
\bA' =
\begin{bmatrix}
\vone_{\lceil \gamma k + 1\rceil} \otimes \bA \\
\Id_m
\end{bmatrix}
\in \F^{(\lceil \gamma k + 1\rceil n + m) \times m}, 
\by' =
\begin{bmatrix}
\vone_{\lceil \gamma k + 1\rceil} \otimes \by \\
\bzero_m
\end{bmatrix}
\in \F^{\lceil \gamma k + 1\rceil n + m}
\end{align*}
and $k' = k$.

{\bf Completeness} Suppose that $(\bA, \by, k)$ is a YES instance of $\gapvec_{\gamma}$, i.e., there exists $\bx \in \cB_m(\bzero, k)$ such that $\bA\bx = \by$. Consider $\bA'\bx - \by'$. We have
\begin{align*}
\bA'\bx - \by' =
\begin{bmatrix}
\vone_{\lceil \gamma k + 1\rceil} \otimes (\bA\bx - \by) \\
\bx
\end{bmatrix}
=
\begin{bmatrix}
\bzero_{\lceil \gamma k + 1\rceil n} \\
\bx
\end{bmatrix}.
\end{align*}
Hence, $\bx$ is a vector in the ball $\cB_m(\bzero, k)$ such that $\|\bA'\bx - \by'\|_0 = \|\bx\|_0 \leqs k$, meaning that $(\bA', \by', k')$ is a YES instance of $\sncp_{\gamma}$.

{\bf Soundness} Suppose that $(\bA, \by, k)$ is a NO instance of $\gapvec_{\gamma}$. Consider any $\bx \in \cB_m(\bzero, \gamma k)$. To show that $\|\bA'\bx - \by\|_0 > \gamma k$, let us consider two cases.
\begin{itemize}
\item Case 1: $\|\bx\|_0 > \gamma k$. In this case, we have $\|\bA'\bx - \by'\|_0 \geqs \|\bx\|_0 > \gamma k$.
\item Case 2: $\|\bx\|_0 \leqs \gamma k$. In this case, we have $\|\bA'\bx - \by'\|_0 \geqs \lceil \gamma k + 1 \rceil \|\bA\bx - \by\|_0 \geqs \lceil \gamma k + 1 \rceil > \gamma k$, where the third inequality comes from the fact that $(\bA, \by, k)$ is a NO instance of $\gapvec_{\gamma}$ and hence $\bA\bx \ne \by$.
\end{itemize}
As a result, we can conclude that $(\bA', \by', k')$ is a NO instance of $\sncp_{\gamma}$.
\end{proof}

\subsection{Sparse Covering Codes} \label{sec:dense-code}

Below, we provide the definition of Sparse Covering Codes\footnote{The definition can be naturally extended to fields of larger size.} over $\F$.

\begin{definition}\label{def:SCC}
A Sparse Covering Code (\SCC) over $\F$ with parameters\footnote{We remark that the parameter $h$ is implicit in specifying \SCC.} $(m,q,t,d,r,\delta)$ is a tuple  $(\mathbf L,\mathbf T,\mathcal D)$ where $\mathbf L\in \F^{h\times m}$ is the basis of an $m$ dimensional linear code with minimum distance (at least) $d$, $\mathbf T\in \F^{q\times h}$ is a linear transformation, and $\mathcal D$ is a poly$(h)$ time samplable distribution over $\F^{h}$ such that for any $\mathbf{x} \in \cB_q(\bzero,t)$, the following holds:
	\begin{equation} \label{eq:dense_prob}
	\Pr_{\mathbf{s} \sim \mathcal{D}} \Bigg[\mathbf{x} \in \bT\Big(\mathcal{B}_h(\bs,r) \cap \bL\mathbb{F}^m_2\Big)\Bigg] \geqs \delta.
	\end{equation}
\end{definition}

We would like to note that \SCC are closely related to Locally Dense Codes (see \cite{DMS03} or \cite{Mic14} for the definition). A key difference in our definition is that the code is tailored towards covering sparse vectors which allows us greater flexibility in the parameters. This is in direct contrast to previous constructions which were designed to cover entire subspaces, with parameters which end up having strong dependencies on the ambient dimension. Consequently, they are not directly applicable to the parameterized setting. 

Next, we show the existence of \SCC  for a certain range of parameters. Informally, the existence of \SCC follows from the existence of codes near the sphere-packing bound, such as BCH codes.

\begin{lemma} \label{lem:ldc}
	For any $q,t \in \mathbbm{N}$ and any $\varepsilon > 0$, there exist $d,r,h,m \in \mathbb{N}$, two linear maps $\bL \in \F^{h \times m}$ and $\bT \in \F^{q \times h}$, and a poly$(h)$ time samplable distribution $\cD$ over $\F^{h}$ such that $(\mathbf{L},\mathbf{T},\mathcal{D})$ is a \SCC with parameters $\left(m,q,t,d,r,\frac{1}{d^{d/2}}\right)$. Additionally, the following holds:
	\begin{itemize}
		\item $r \leqs (\nicefrac{1}{2} + \varepsilon) d$,
		\item $r, d \leqs O(t/\varepsilon)$ and $h, m \leqs \poly(q, t,\nicefrac{1}{\varepsilon})$,
		\item $\bL$ and $\bT$ can be computed in poly($q, t,\nicefrac{1}{\varepsilon}$) time for all $\varepsilon > 0$.
	\end{itemize}
\end{lemma}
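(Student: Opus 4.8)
\textbf{Proof plan for Lemma~\ref{lem:ldc}.}

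The plan is to take $\bL$ to be (a rescaled version of) a BCH code that lies near the sphere-packing bound, and then engineer the linear map $\bT$ and the distribution $\cD$ so that the event in \eqref{eq:dense_prob} is implied by the event that a uniformly random point of $\F^h$ lies within Hamming distance $r-t$ of \emph{some} codeword of $\bL$. First I would fix the distance parameter: set $d$ to be the smallest even integer with $d \geqs c\cdot t/\varepsilon$ for a suitable absolute constant $c$ (and round $h+1$ up to a power of two so Theorem~\ref{thm:bch} applies), and then set $r := \lceil (d-1)/2 \rceil + t$. With this choice $r \leqs (\nicefrac12 + \varepsilon)d$ holds for $c$ large enough, and $r,d = O(t/\varepsilon)$ automatically. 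I would take $\bL$ to be the $[h, h - \lceil (d-1)/2\rceil \log(h+1), d]_2$ BCH code from Theorem~\ref{thm:bch}, choosing $h = \poly(q,t,\nicefrac1\varepsilon)$ large enough that the codimension $\lceil(d-1)/2\rceil\log(h+1)$ is a small enough fraction of $h$; the message length is $m = h - \lceil(d-1)/2\rceil\log(h+1)$, so $m \leqs \poly(q,t,\nicefrac1\varepsilon)$ as required, and the generator matrix is computable in $\poly(h)$ time.

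Next I would construct $\bT$. The target is a linear surjection $\bT \in \F^{q\times h}$ whose restriction to the code has a useful ``coset-indexing'' property: I want $\bT$ to factor through the syndrome map (equivalently, through $\F^h/\bL\F^m$) in such a way that for every coset representative of small weight, $\bT$ reads off a prescribed vector in $\cB_q(\bzero,t)$. Concretely, I would pick $\bT$ so that its kernel contains $\bL\F^m$ on the relevant part and so that $\bT$ maps the set of weight-$\leqs t$ vectors onto $\cB_q(\bzero,t)$; since the syndrome space has dimension $\lceil(d-1)/2\rceil\log(h+1) \geqs \log|\cB_q(\bzero,t)|$ for $h$ large, there is enough room to do this, and the map can be written down in $\poly(q,t,\nicefrac1\varepsilon)$ time. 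The point of this design is that, for a fixed $\bx \in \cB_q(\bzero,t)$, there is a fixed weight-$\leqs t$ vector $\be_\bx$ with $\bT\be_\bx = \bx$, and $\bx \in \bT(\cB_h(\bs,r)\cap\bL\F^m)$ holds whenever $\bs$ can be written as $\bc + \be_\bx + \bfe'$ with $\bc \in \bL\F^m$ and $\|\bfe'\|_0 \leqs r - t = \lceil(d-1)/2\rceil$ (so that $\bT$ of the codeword-part picks out $\bx$ and the codeword lies in the ball).

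For the distribution $\cD$ I would simply let $\bs$ be a uniformly random element of $\F^h$ (which is trivially $\poly(h)$-samplable). Then, by the above, $\Pr_{\bs}[\bx \in \bT(\cB_h(\bs,r)\cap\bL\F^m)]$ is at least the probability that a uniform $\bs$ lies in $\bigcup_{\bc\in\bL\F^m}\cB_h(\bc + \be_\bx, \lceil(d-1)/2\rceil)$, which by translation invariance equals the probability that a uniform point is within distance $\lceil(d-1)/2\rceil$ of the code $\bL\F^m$. Since the balls of radius $\lceil(d-1)/2\rceil$ around distinct codewords are disjoint (the code has distance $d$), this probability is exactly $|\bL\F^m|\cdot|\cB_h(\bzero,\lceil(d-1)/2\rceil)| / 2^h = 2^{m}\cdot|\cB_h(\bzero,\lceil(d-1)/2\rceil)|/2^h$. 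Plugging in $m = h - \lceil(d-1)/2\rceil\log(h+1)$ and the crude bound $|\cB_h(\bzero,\lceil(d-1)/2\rceil)| \geqs (h+1)^{\lceil(d-1)/2\rceil}/\text{(something)}$... more carefully, $|\cB_h(\bzero,j)|\geqs \binom{h}{j}\geqs (h/j)^j$, so the ratio is at least $(h+1)^{\lceil(d-1)/2\rceil}\cdot(h/j)^j/(h+1)^{\lceil(d-1)/2\rceil}\cdot(\cdots)$; the upshot is that the ratio is bounded below by a quantity depending only on $d$, and choosing $h$ polynomially large makes it at least $d^{-d/2}$. I would verify this last inequality by the elementary estimate $|\cB_h(\bzero,\lceil(d-1)/2\rceil)|\geqs \binom{h}{\lceil(d-1)/2\rceil} \geqs (h/d)^{\lceil(d-1)/2\rceil}$ and comparing exponents of $(h+1)$ versus $d$.

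The main obstacle I expect is getting the map $\bT$ exactly right: one needs $\bT$ to simultaneously (i) vanish on enough of $\bL\F^m$ that small-weight codeword perturbations do not change the read-off value, (ii) be surjective onto $\cB_q(\bzero,t)$ on weight-$\leqs t$ vectors, and (iii) have the completeness side work, i.e., the vector $\be_\bx$ and the radius budget $r$ interact correctly so that the intended $\bz'$ with $\bT\bL\bz' = \bx$ exists in the downstream DMS reduction. This is really a bookkeeping argument about the syndrome/coset structure of the BCH code, but it is where all the parameter inequalities ($r \leqs (\nicefrac12+\varepsilon)d$, $h,m$ polynomial, the $d^{-d/2}$ probability bound) have to be made consistent simultaneously, so I would be careful to choose the constants in the order: first $d$ from $t,\varepsilon$; then $r$ from $d,t$; then $h$ large enough (polynomial in $q,t,\nicefrac1\varepsilon$) to absorb the codimension cost and the probability bound; and only then write down $\bL$ and $\bT$.
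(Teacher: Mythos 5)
There is a genuine gap, and it sits exactly at the place you flag as the ``main obstacle'': the design of $\bT$ and $\cD$. First, the map $\bT$ you describe cannot work with the definition of \SCC\ (or with the downstream reduction). You ask for $\bT$ to ``factor through the syndrome map'' / have kernel containing $\bL\F^m$, and to read off $\bx$ from a small-weight coset representative $\mathbf{e}_{\bx}$. But the \SCC\ condition (and the use of the gadget in Lemma~\ref{lem:sncp-to-mdp}) applies $\bT$ to the \emph{codeword} $\bL\bz'$ lying in $\cB_h(\bs,r)$: one needs $\bT\bL\bz' = \bx$. If $\bT$ vanishes on $\bL\F^m$, then $\bT\bigl(\cB_h(\bs,r)\cap\bL\F^m\bigr) \subseteq \{\bzero\}$ and no nonzero $\bx$ is ever covered; and if $\bT$ does not vanish on the code, the coset-read-off picture collapses, since when you write $\bs = \mathbf{c} + \mathbf{e}_{\bx} + \mathbf{e}'$ the witness codeword is $\mathbf{c}$, whose $\bT$-image has nothing to do with $\bx$. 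What is actually needed is the opposite of your condition (i): $\bT$ restricted to the code must be surjective onto (at least) $\cB_q(\bzero,t)$. The paper achieves this trivially by taking the BCH code systematic on its first coordinates and letting $\bT = [\,\mathrm{Id}_q \;\; \vzero_{q\times(h-q)}\,]$ be the projection onto the first $q$ coordinates, so that the codewords with $\bT$-image $\bx$ are exactly those whose systematic prefix is $\bx$.

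Second, even after fixing $\bT$, taking $\cD$ uniform over $\F^h$ and invoking ``probability of being within distance $\lceil(d-1)/2\rceil$ of the code'' over-counts: only the codewords $\mathbf{c}$ with $\bT\mathbf{c} = \bx$ are useful for the fixed target $\bx$, and these are a $2^{-q}$ fraction of the code. A correct accounting with uniform $\bs$ gives success probability on the order of $2^{-q}\cdot d^{-d/2}$, which is exponentially small in $q$ (and $q$ is polynomial in the instance size, not a function of the parameter), so the randomized reduction built on it would fail. The paper's construction avoids this loss by \emph{not} using the uniform distribution: $\cD$ fixes the first $q$ coordinates of $\bs$ to zero and is uniform on the remaining $h-q$ coordinates. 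Then, for fixed $\bx\in\cB_q(\bzero,t)$, the $2^{m-q}$ suffixes of codewords whose prefix is $\bx$ sit in $\F^{h-q}$ with pairwise distance $\geqs d$, and the disjoint balls of radius $r-t=\frac{d-1}{2}$ around them occupy a $2^{m-h}\binom{h-q}{(d-1)/2}$ fraction of $\F^{h-q}$, which (using $h\geqs 2q$ and $2^{h-m}=(h+1)^{(d-1)/2}$) is at least $d^{-d/2}$ independently of $q$; the extra $t$ in the radius pays for the prefix mismatch $\|\bx\|_0\leqs t$ between $\bs$ (prefix $\bzero$) and the codeword (prefix $\bx$). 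Your parameter choices ($d=\Theta(t/\varepsilon)$, $r=\lceil(d-1)/2\rceil+t$, BCH near the sphere-packing bound, the $\binom{h}{j}\geqs(h/j)^j$ estimate) match the paper, but without the systematic-projection $\bT$ and the prefix-zeroed distribution $\cD$ the probability bound $\delta = d^{-d/2}$ in \eqref{eq:dense_prob} does not follow.
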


\begin{proof}
	Let $d = 2\lceil t/\varepsilon \rceil + 1$ and let $r = \left(\frac{d-1}{2}\right) + t$. Let $h$ be the smallest positive integer such that $h + 1$ is a power of two and that $h \geqs \max\{2q, d\}$. Finally, we set $m = h - \left(\frac{d - 1}{2}\right)\log(h + 1)$. It is clear that the chosen parameters satisfy the first two conditions, i.e., $r \leqs (\nicefrac{1}{2} + \varepsilon) d$	 and $r, d \leqs O(t/\varepsilon)$, and $h, m \leqs \poly(q, t,\nicefrac{1}{\varepsilon})$.

	Let $\mathbf{L}$ be the generator matrix of the $[h,m,d]_2$ linear code as given by Theorem~\ref{thm:bch}. Without loss of generality, we assume that the code is systematic on the first $m$ coordinates. The linear map $\mathbf{T}$ is defined as
	$\mathbf{T} \overset{\rm def}{=} \Big[ \Id_q \enskip\vzero_{q\times h-q}\Big]$ i.e., it is the matrix which projects onto the first $q$ coordinates. The distribution $\mathcal{D}$ is given as follows: we set the first $q$ coordinates of the vector to $0$ and each of the remaining $(h - q)$ coordinates is sampled uniformly and independently at random from $\F$. It is clear that $\mathbf T$ is computed in $\poly(q,t,\nicefrac{1}{\varepsilon})$ time and $\mathcal D$ is a $\poly(h)$ time samplable distribution over $\F^h$. From Theorem~\ref{thm:bch}, we also have that $\mathbf L$ can be computed in $\poly(h)=\poly(q,t,\nicefrac{1}{\varepsilon})$ time.
	
	It remains to show that for our choices of matrices $\mathbf{L},\mathbf{T}$ and distribution $\mathcal{D}$, equation \ref{eq:dense_prob} holds for any fixed choice of $\mathbf{x} \in \cB_q(\bzero,t)$. Fix a vector $\mathbf{x} \in \mathcal{B}_{q}(\mathbf{0},t)$ and define the set $\mathcal{C}= \Big\{ {\bf z} \in \mathbb{F}^{h - q} \Big| {\bf x} \circ {\bf z} \in \mathbf{L}\mathbb{F}^m \Big\}$. Since the code generated by $\mathbf{L}$ is systematic on the first $m$ coordinates, we have that $|\mathcal{C}| = 2^{m - q}$. 
Moreover, since the code generated by $\bL$ has distance $d$, we have that every distinct pair of vectors $\mathbf{z}_1,\mathbf{z}_2 \in \mathcal{C}$ are at least $d$-far from each other (i.e. $\|\bz_1 - \bz_2\|_0 \geqs d$). 
	
	Recall that $r - t = \frac{d - 1}{2}$. Therefore, for any distinct pair of vectors $\mathbf{z}_1,\mathbf{z}_2 \in \mathcal{C}$, the sets $\cB_{h - q}(\mathbf{z}_1,r-t)$ and  $\cB_{h - q}(\mathbf{z}_2,r-t)$ are disjoint. Hence the number of vectors in the union of $(r-t)$-radius Hamming balls around every $\mathbf{z} \in \mathcal{C}$ is
	\begin{align*}
	2^{m - q}\left\lvert\mathcal{B}_{h - q}\left(\mathbf{0}, \frac{d-1}{2} \right)\right\rvert
	\geqs 2^{m - q}{h - q \choose \frac{d-1}{2} }
	\geqs 2^{m - q}{h/2 \choose \frac{d-1}{2} }
	\geqs 2^{m - q}\Big(\frac{h}{d - 1}\Big)^{\frac{d-1}{2}} 
	\end{align*}	
	On the other hand, $|\F^{h - q}| = 2^{h-q} = 2^{m - q}(h + 1)^{\frac{d-1}{2}}$. Hence, with probability at least $\left(\frac{h}{(d - 1)(h + 1)}\right)^{\frac{d-1}{2}} \geqs \frac{1}{d^{d/2}}$, a vector $\mathbf{s}'$ sampled uniformly from $\mathbb{F}^{h - q}$ lies in $\cB_{h - q}(\bp', r - t)$ for some vector $\mathbf{p}' \in\mathcal{C}$. Fix such a vector $\mathbf{s}'$ and consider the vectors $\mathbf{s} = \mathbf{0}_q \circ \mathbf{s}'$ and $\mathbf{p} = \mathbf{x} \circ  \mathbf{p}'$. Then by construction,
	\begin{equation*}
	\|\mathbf{s} - \mathbf{p}\|_0 = \|\mathbf{s}' - \mathbf{p}'\|_0 + \|\mathbf{x}\|_0 \leqs (r - t) + t = r,
	\end{equation*} 
	 and $\mathbf{T}\mathbf{p} = \mathbf{x}$. To complete the proof, we observe that the distribution of the vector $\mathbf{s}$ (as constructed from $\mathbf{s}'$) is identical to $\mathcal{D}$, and hence the claim follows. 
\end{proof}

\subsection{Reducing \sncp\ to \mdp} \label{sec:main-red}

In this subsection, we state and prove the FPT reduction from the $\sncp$ problem to the $\mdp$ problem. It uses the basic template of the reduction from \cite{DMS03}, which is modified to work in combination with \SCC.

\begin{lemma}							\label{lem:sncp-to-mdp}
For any constants $\gamma' > 2$ and $\gamma \geqs 1$ such that $\gamma < \frac{2\gamma'}{2 + \gamma'}$, there is a randomized FPT reduction from $\sncp_{\gamma'}$ to $\mdp_{\gamma}$.
\end{lemma}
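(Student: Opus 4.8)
The plan is to reproduce the Dumer--Micciancio--Sudan gadget of \eqref{eq:LDCGadget}, but with the sparse covering code of Lemma~\ref{lem:ldc} in place of their locally dense code --- the point being that this code's distance $d$ and radius $r$ are bounded by a function of the parameter $k$ (and not of the input size), so the output parameter will be too. Concretely, given a $\sncp_{\gamma'}$ instance $(\bA,\by,k)$ with $\bA$ having $\ell$ columns, fix a sufficiently small constant $\varepsilon=\varepsilon(\gamma,\gamma')>0$ and apply Lemma~\ref{lem:ldc} with $q=\ell$, $t=k$ to obtain $d,r,h,m$, linear maps $\bL\in\F^{h\times m}$ and $\bT\in\F^{\ell\times h}$, and a $\poly(h)$-time samplable distribution $\cD$ over $\F^h$, satisfying $r\leqs(\tfrac12+\varepsilon)d$, $r,d\leqs O(k/\varepsilon)$ and $h,m\leqs\poly(\ell,k,\tfrac1\varepsilon)$. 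Sample $\bs\sim\cD$, choose positive integers $a,b$ (bounded by a function of $\gamma,\gamma'$; see below), and output the $\mdp_\gamma$ instance $(\bB,k')$ with $k'=ak+br$ and
\[
\bB=\begin{bmatrix}\bA\bT\bL & -\by\\ \vdots & \vdots\\ \bA\bT\bL & -\by\\ \bL & -\bs\\ \vdots & \vdots\\ \bL & -\bs\end{bmatrix},
\]
where $[\bA\bT\bL\mid-\by]$ appears $a$ times and $[\bL\mid-\bs]$ appears $b$ times. By Theorem~\ref{thm:bch} and Lemma~\ref{lem:ldc}, all of $\bL,\bT,\cD$ are computable in $\poly(\ell,k,\tfrac1\varepsilon)$ time, so the transformation runs in $\poly(n,\ell,k)$ time; since $d,r=O(k)$ and $a,b$ are bounded, $k'=O(k)$ is a function of $k$ alone.

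\textbf{Completeness.} If $(\bA,\by,k)$ is a YES instance, fix $\bx^\ast\in\cB_\ell(\bzero,k)$ with $\|\bA\bx^\ast-\by\|_0\leqs k$. By the defining inequality \eqref{eq:dense_prob} of the SCC, with probability at least $\delta=1/d^{d/2}$ over $\bs\sim\cD$ there is $\bz'\in\F^m$ with $\bT\bL\bz'=\bx^\ast$ and $\|\bL\bz'-\bs\|_0\leqs r$. Then $\bz:=\bz'\circ 1$ is nonzero and $\|\bB\bz\|_0=a\|\bA\bx^\ast-\by\|_0+b\|\bL\bz'-\bs\|_0\leqs ak+br=k'$, so $(\bB,k')$ is a YES instance of $\mdp_\gamma$. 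Since $d=O(k/\varepsilon)$ with $\varepsilon$ a constant, $\delta\geqs 1/f'(k)$ for a computable $f'$, which is exactly the success probability permitted by Definition~\ref{def:rand-fpt-red}.

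\textbf{Soundness.} Suppose $(\bA,\by,k)$ is a NO instance and let $\bz=\bz'\circ c$ be nonzero, where $c\in\F$ is the last coordinate. If $c=0$ then $\bz'\neq\bzero$, so $\bL\bz'$ is a nonzero codeword of the distance-$d$ code and $\|\bB\bz\|_0\geqs b\|\bL\bz'\|_0\geqs bd$; if $c=1$ then, with $\bx:=\bT\bL\bz'$, the NO guarantee of $\sncp_{\gamma'}$ gives $\|\bA\bx-\by\|_0>\gamma' k$, hence $\|\bB\bz\|_0\geqs a\|\bA\bx-\by\|_0>a\gamma' k$. Thus $(\bB,k')$ is a NO instance of $\mdp_\gamma$ as long as $bd>\gamma(ak+br)$ and $a\gamma'k>\gamma(ak+br)$, i.e.\ as long as there is an integer $a$ in $\left(\frac{\gamma b r}{(\gamma'-\gamma)k},\ \frac{b(d-\gamma r)}{\gamma k}\right)$. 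This interval is nonempty precisely when $\frac{\gamma r}{\gamma'-\gamma}<\frac{d-\gamma r}{\gamma}$, equivalently $\frac{d}{r}>\frac{\gamma\gamma'}{\gamma'-\gamma}$; and $r\leqs(\tfrac12+\varepsilon)d$ gives $\frac{d}{r}\geqs\frac{2}{1+2\varepsilon}$, while the hypothesis $\gamma<\frac{2\gamma'}{2+\gamma'}$ is equivalent to $2>\frac{\gamma\gamma'}{\gamma'-\gamma}$, so for $\varepsilon$ small enough $\frac{2}{1+2\varepsilon}>\frac{\gamma\gamma'}{\gamma'-\gamma}$. The interval then has length $\Theta(b)$ with bounded implicit constant (as $d,r=\Theta(k)$), so a large enough constant $b$ forces an integer $a$ inside, and $a=O(1)$ follows. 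Since this direction holds for every sampled $\bs$, the reduction has one-sided error.

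\textbf{Main obstacle.} The only genuinely new ingredient --- the reason the DMS reduction does not transfer verbatim --- is forcing the gadget code's distance $d$, hence $r$, hence the new parameter $k'=ak+br$, to be bounded by a function of $k$ alone; Lemma~\ref{lem:ldc} is built for exactly this, at the unavoidable cost that the center $\bs$ is found only with probability $\delta=1/d^{d/2}$, which is acceptable in the FPT regime. What remains is the arithmetic sketched above: checking that $\varepsilon$, $a$ and $b$ can be chosen simultaneously and that the nonemptiness condition for the interval of valid $a$ collapses to exactly $\gamma<\frac{2\gamma'}{2+\gamma'}$, together with the trivial edge cases (over $\F=\F_2$ the last coordinate $c$ is $0$ or $1$, and $\bL$ is injective, so $\bz'\neq\bzero$ yields a nonzero codeword).
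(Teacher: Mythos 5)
Your proposal is correct and follows essentially the same route as the paper: the DMS gadget instantiated with the sparse covering code of Lemma~\ref{lem:ldc}, completeness holding with probability $1/d^{d/2}$ via the SCC property, and soundness by the case analysis on the last coordinate under the two conditions $bd>\gamma(ak+br)$ and $a\gamma' k>\gamma(ak+br)$. The only (immaterial) difference is bookkeeping in choosing the repetition counts: the paper fixes a ratio $a'/b'$ in a constant interval and sets $a=a'r$, $b=b't$, whereas you keep $a,b$ constant and absorb the factors $d/k$, $r/k$ into the interval for $a$ --- both choices are feasible precisely because $\gamma<\frac{2\gamma'}{2+\gamma'}$ together with $r\leqs(\tfrac12+\varepsilon)d$ makes the relevant interval nonempty.
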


\begin{proof}
Let $(\bB, \by, t)$ be the input for $\sncp_{\gamma}$ where $\bB \in \F^{n \times q}$, $\by\in\F^n$, and $t$ is the parameter. Moreover, let $\varepsilon > 0$ be a sufficiently small constant such that $\gamma < \frac{2\gamma'}{2 + (1 + 2\varepsilon)\gamma'}$. Let $d, r, h, m \in \N, \bL \in \F^{h \times m}, \bT \in \F^{q \times h}$ and $\cD$ be as in Lemma~\ref{lem:ldc}. Pick $a', b' \in \N$ such that
\begin{align}
\frac{\gamma}{\gamma' - \gamma} < \frac{a'}{b'} < \frac{(d/r) - \gamma}{\gamma}.
\label{cond:ab-pick}
\end{align} 
To see that such $a'$ and $b'$ exists, observe the following: $$ \zeta=\frac{(d/r) - \gamma}{\gamma} - \frac{\gamma}{\gamma' - \gamma} \geqs  \frac{2\gamma' - \gamma(2 + (1 + 2\varepsilon)\gamma'+2\varepsilon\gamma)}{(1 + 2\varepsilon)\gamma(\gamma' - \gamma)} > \frac{2\varepsilon\gamma}{(1+2\varepsilon)\cdot (\gamma'-\gamma)}> 0.$$ Moreover, we can always choose $a'$ and  $b'$ so that they are at most $2/\zeta = O(1)$.

Let $a = a'r$ and $b = b't$. Note that, condition~(\ref{cond:ab-pick}) implies that $\gamma(at + br) < \min\{\gamma'at, bd\}$. We produce an instance $(\bA, k)$ for $\mdp_{\gamma}$ by first sampling $\bs \sim \cD$. Then, we set $k = at + br$ and
\begin{align*}
\bA =
\begin{bmatrix}
\vone_a \otimes \bB \bT \bL & - \vone_a \otimes \by \\
\vone_b \otimes \bL & - \vone_b \otimes \bs
\end{bmatrix}
\in \F^{(an + bh) \times (m + 1)}.
\end{align*}

{\bf Parameter Dependencies.} Notice that $k = at + br = a'rt + b'rt = O(rt)$. Moreover, from Lemma~\ref{lem:ldc}, we have that $r \leqs O(t)$. Hence, we can conclude that $k = O(t^2)$ (note that $t$ is the parameter of the input instance $(\bB, \by, t)$ of $\sncp_{\gamma'}$).

{\bf Completeness.} Suppose that $(\bB, \by, t)$ is a YES instance of $\sncp_{\gamma'}$. That is, there exists $\bx \in \cB_q(\bzero, t)$ such that $\|\bB\bx - \by\|_0 \leqs t$. Now, from Lemma~\ref{lem:ldc}, with probability at least $1/d^{d/2}$, we have $\bx \in \bT(\cB_m(\bs, r) \cap \bL \F^m)$. This is equivalent to the following: there exists $\bz' \in \F^m$ such that $\bT\bL\bz' = \bx$ and $\|\bL\bz' - \bs\|_0 \leqs r$. Conditioned on this event, we can pick $\bz = \bz' \circ 1 \in \F^{m + 1}$, which yields
\begin{align*}
\|\bA \bz\|_0 = a\|\bB\bT\bL\bz - \by\|_0 + b\|\bL\bz - \bs\|_0 = a\|\bB\bx - \by\|_0 + b\|\bL\bz - \bs\|_0 \leqs at + br.
\end{align*}

In other words, with probability at least $1/d^{d/2}$, $(\cA, k)$ is a YES instance of $\mdp_{\gamma}$ as desired.

{\bf Soundness.} Suppose that $(\bB, \by, t)$ is a NO instance of $\sncp_{\gamma'}$. We will show that, for all non-zero $\bz \in \F^{m + 1}$, $\|\bA\bz\|_0 > \gamma(at + br)$; this implies that $(\bA, k)$ is a NO instance of $\mdp_{\gamma}$.

To show this let us consider two cases, based on the last coordinate $\bz_{m + 1}$ of $\bz$. Let $\bz' = [\bz_1 \cdots \bz_{m }] \in \F^{m }$ be the vector consisting of the first $m$ coordinates of $\bz$.

If $\bz_{m + 1} = 0$, then $\|\bA\bz\|_0 = a\|\bB\bT\bL\bz'\|_0 + b\|\bL\bz'\|_0 \geqs b\|L\bz'\|_0 \geqs bd$, where the last inequality comes from the fact that $\bL$ is a generator matrix of a code of distance $d$ (and that $\bz\neq\mathbf 0$). Finally, recall that our parameter selection implies that $bd > \gamma(at + br)$, which yields the desired result for this case.

On the other hand, if $\bz_{m + 1} = 1$, then $\|\bA\bz\|_0 = a\|\bB\bT\bL\bz' - \by\|_0 + b\|\bL\bz' - \bs\|_0 \geqs a\|\bB\bT\bL\bz' - \by\|_0 \geqs \gamma'at$, where the last inequality simply follows from our assumption that $(\bB, \by, t)$ is a NO instance of $\sncp_{\gamma'}$. Finally, recall that our parameter selection guarantees that $\gamma'at > \gamma(at + br)$. This concludes the proof.
\end{proof}

\paragraph{Gap Amplification.} Finally, the above gap hardness can be boosted to any constant using the now standard technique of tensoring the code (c.f. \cite{DMS03},\cite{AK14}) using  the following lemma:

\begin{proposition}[E.g. \cite{DMS03}] \label{prop:gap-amplification}
Given two linear codes $C_1 \subseteq \F^m$ and ${C}_2 \subseteq \F^n$, let ${C}_1 \otimes {C}_2 \subseteq \mathbb{F}^{m \times n}$ be the tensor product of ${C}_1$ and ${C}_2$. Then $d({C}_1 \otimes {C}_2) = d({C}_1)d({C}_2)$.
\end{proposition}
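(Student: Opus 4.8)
The plan is to establish the two inequalities $d(C_1 \otimes C_2) \leqs d(C_1) d(C_2)$ and $d(C_1 \otimes C_2) \geqs d(C_1) d(C_2)$ separately, using the standard characterization of codewords of $C_1 \otimes C_2$ as $m \times n$ matrices all of whose columns lie in $C_1$ and all of whose rows lie in $C_2$. Concretely, I would first record the structural fact that if $\mathbf{M} = \mathbf{G}_1 \mathbf{X} \mathbf{G}_2^\top$ for some $\mathbf{X} \in \F^{m' \times n'}$, then the $j$-th column of $\mathbf{M}$ equals $\mathbf{G}_1$ applied to the $j$-th column of $\mathbf{X}\mathbf{G}_2^\top$, hence lies in $C_1$, and symmetrically every row of $\mathbf{M}$ lies in $C_2$. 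Conversely, for any $c_1 = \mathbf{G}_1 \bx_1 \in C_1$ and $c_2 = \mathbf{G}_2 \bx_2 \in C_2$, the outer product $c_1 c_2^\top = \mathbf{G}_1 (\bx_1 \bx_2^\top) \mathbf{G}_2^\top$ is itself a codeword of $C_1 \otimes C_2$; this is the only converse direction needed.

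For the upper bound, I would choose a nonzero $c_1 \in C_1$ of weight exactly $d(C_1)$ and a nonzero $c_2 \in C_2$ of weight exactly $d(C_2)$ (recall that for a linear code the minimum distance equals the minimum Hamming weight of a nonzero codeword). The matrix $c_1 c_2^\top$ is then a nonzero codeword of $C_1 \otimes C_2$ whose $(i,j)$ entry is nonzero precisely when both $(c_1)_i \neq 0$ and $(c_2)_j \neq 0$, so its Hamming weight is exactly $\|c_1\|_0 \cdot \|c_2\|_0 = d(C_1) d(C_2)$. This yields $d(C_1 \otimes C_2) \leqs d(C_1) d(C_2)$.

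For the lower bound, I would take an arbitrary nonzero codeword $\mathbf{M}$ of $C_1 \otimes C_2$. Since $\mathbf{M} \neq \mathbf{0}$, at least one of its columns is nonzero, and that column, being a nonzero element of $C_1$, has at least $d(C_1)$ nonzero entries; hence $\mathbf{M}$ has at least $d(C_1)$ nonzero rows. Each such nonzero row is a nonzero element of $C_2$ and therefore has at least $d(C_2)$ nonzero entries, so summing over the (at least $d(C_1)$) nonzero rows gives $\|\mathbf{M}\|_0 \geqs d(C_1) d(C_2)$, i.e. $d(C_1 \otimes C_2) \geqs d(C_1) d(C_2)$. Combining the two bounds completes the proof. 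I do not anticipate any genuine obstacle here — the result is classical and the argument is short; the only point warranting a line of care is the verification that codewords of $C_1 \otimes C_2$ have all rows in $C_2$ and all columns in $C_1$, which is an immediate matrix computation from the definition $C_1 \otimes C_2 = \{\mathbf{G}_1 \mathbf{X} \mathbf{G}_2^\top : \mathbf{X} \in \F^{m' \times n'}\}$.
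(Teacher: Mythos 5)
Your proof is correct: the characterization of tensor codewords as matrices with columns in $C_1$ and rows in $C_2$, the outer-product construction for the upper bound, and the nonzero-column/nonzero-rows counting for the lower bound are all sound. The paper itself states this proposition without proof, citing \cite{DMS03}, and your argument is exactly the standard one behind that citation, so there is nothing further to reconcile.
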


We briefly show how the above proposition can be use to amplify the gap. Consider a $\mdp_\gamma$ instance $(\bA,k)$ where $\bA \in \F^{m \times n}$. Let $C \subseteq \F^m$ be the linear code generated by it. Let $C^{\otimes 2} = C \otimes C$ be the tensor product of the code with itself, and let $\bA^{\otimes 2}$ be its generator matrix. By the above proposition, if $(\bA,k)$ is a YES instance, then $d(C^{\otimes 2}) \leqs k^2$. Conversely, if $(\bA,k)$ is a NO instance, then $d(C^{\otimes 2}) \geqs \gamma^2k^2$. Therefore $(\bA^{\otimes 2},k^2)$ is a $\mdp_{\gamma^2}$ instance. Hence, for any $\alpha \in \mathbb{R}_+$, repeating this argument $\lceil\log_\gamma \alpha \rceil$-number of times gives us an FPT reduction from $k$-$\mdp_\gamma$ to $k^{2\lceil\log_\gamma \alpha\rceil}$-$\mdp_\alpha$. We have thereby completed our proof of Theorem~\ref{thm:MDPmain}.

\newcommand{\bU}{\mathbf{U}}
\newcommand{\bW}{\mathbf{W}}
\newcommand{\cL}{\mathcal{L}}

\section{Parameterized Intractability of Shortest Vector Problem}\label{sec:svp}

We begin by stating the intractability of $\svp$, which is the main result of this section:

\begin{theorem}[FPT Inapproximability of $\svp$]\label{thm:SVPmain}
Assuming PIH, any $p > 1$, there exists constant $\gamma_p > 1$ (where $\gamma_p$ depends on $p$), such that there is no randomized FPT algorithm for $\svp_{p,\gamma_p}$.
\end{theorem}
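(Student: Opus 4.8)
The plan is to follow the same two-step template used for $\mdp$: first establish constant-factor FPT inapproximability of the inhomogeneous problem $\snvp_p$ assuming PIH, and then apply Khot's reduction to pass from $\snvp_p$ to $\svp_p$ with some constant gap $\gamma_p > 1$. Since Gap-ETH implies PIH (Section~\ref{sec:gapETHtoPIH}), it suffices to work from PIH throughout, so the theorem will follow by chaining these two reductions and invoking Lemma~\ref{lem:red-intract} to conclude that no randomized FPT algorithm exists.

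For the first step I would mimic the chain $\csp_\varepsilon \to \gapvec_\gamma \to \gapvec \to \sncp$ from Sections~\ref{sec:csp-to-gapvec} and \ref{sec:gapvec-to-snc}, but carried out over $\Z$ with the $\|\cdot\|_p^p$ objective in place of the Hamming norm. The matrix $\bA$ and target $\by$ produced from a 2CSP instance in Lemma~\ref{lem:csp-to-gapvec} already have $0/1$ entries, and both directions of the analysis transfer almost verbatim when arithmetic is read over $\Z$: the intended YES solution is a $0/1$ vector, and the soundness argument counts how many ``consistency-checking'' rows can be violated. The one genuinely new point is that over $\Z$ a cheap solution might try to use large coefficients; this is neutralized exactly as in the $\gapvec\to\sncp$ step (Lemma~\ref{lem:gapvec-to-sncp}) by appending a suitably weighted identity block $\Id_m$, so that $\|\bx\|_p^p$ is itself penalized, forcing any near-optimal $\bx$ to be both sparse and bounded in magnitude. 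The composition operator $\oplus$ of Section~\ref{sec:gap-amplification} likewise carries over to amplify the gap to any constant. This yields, under PIH, constant inapproximability of $\snvp_p$ for every constant factor; these are precisely the details deferred to Appendix~\ref{sec:csp-to-snvp}.

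For the second step I would invoke Khot's reduction (Lemma~\ref{lem:snvp-to-svp}) essentially as a black box. Starting from an $\snvp_p$ instance with a large enough constant gap, one builds the ``BCH lattice'' gadget from BCH codes (Theorem~\ref{thm:bch}), combines it with the $\snvp_p$ instance to obtain an intermediate lattice in which YES instances contribute many short witness vectors while NO instances contribute only a comparatively small number of ``annoying'' near-short vectors, and then intersects with a random homogeneous linear constraint that annihilates all annoying vectors while preserving, with non-negligible probability, at least one genuine witness. The output is an $\svp_{p,\gamma^*}$ instance with $\gamma^* = \tfrac{2^p}{2^{p-1}+1} > 1$, so one can take $\gamma_p := \gamma^*$. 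The crucial thing to audit in the parameterized regime is the parameter dependency: the new norm threshold $k'$ must be bounded by a computable function of the original parameter $k$ alone, which holds because the BCH gadget's block length---and hence its contribution to the target norm---scales only with $k$, not with the ambient dimension $nm$, and because the hardness of $\snvp_p$ already comes with $k$ polynomially bounded in the number of CSP variables. The random linear constraint makes this a randomized one-sided-error FPT reduction, which is exactly what Lemma~\ref{lem:red-intract} is built to handle.

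The main obstacle is the first step: establishing even \emph{some} constant-factor parameterized inapproximability of $\snvp_p$, since there is no prior $\W[1]$-hardness of approximation for $k$-\CVP\ to bootstrap from. One must run the entire $\csp \to \gapvec \to \gapvec \to \sncp$-style pipeline over $\Z$ with the $\ell_p^p$ cost and verify that every completeness and soundness estimate survives the change of norm---in particular that the sparsity-enforcing identity block, with weights chosen as a function of $p$ and $k$ only, simultaneously forces sparsity and bounded coefficients without blowing up the parameter. Khot's reduction itself is comparatively routine once $\snvp_p$ hardness is in hand; the only care needed there is confirming that no step covertly requires the distance parameter to grow with the instance size (which, as noted, is why the gap cannot be amplified to an arbitrary constant for general $p$, only for $p = 2$ via the tensoring argument of Theorem~\ref{thm:SVPmain-boost}).
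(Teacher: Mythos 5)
Your plan reproduces the paper's own route essentially verbatim: PIH gives constant-gap hardness of $\snvp_p$ via the $\csp \to$ vector-sum pipeline carried out over $\Z$ (the paper's Appendix~\ref{sec:csp-to-snvp}, through the intermediate problem $\lvec$ and its composition-based gap amplification), and then Khot's reduction (Lemma~\ref{lem:snvp-to-svp}) yields $\svp_{p,\gamma_p}$ with $\gamma_p = \frac{2^p}{2^{p-1}+1}$, exactly as in the paper. The one detail your black-box invocation of Khot's reduction glosses over is that the $\snvp_p$ hardness must come in the strengthened form of Theorem~\ref{thm:CVPmain} --- a $\{0,1\}$-valued witness in the YES case (which you do note) \emph{and} a Hamming-norm lower bound against every nonzero integer multiple $w\cdot\by$ of the target in the NO case --- the latter being crucial in the soundness analysis of the intermediate lattice and precisely what the $\Z$-version of the pipeline is engineered to provide.
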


For $p = 2$, the gap can be boosted to any constant factor, as stated by the following:

\begin{theorem}[Constant Inapproximability of $\svp_{2,\gamma}$]\label{thm:SVPmain-boost}
	Assuming PIH, for all constants $\gamma \geqs 1$, there is no randomized FPT algorithm for $\svp_{2,\gamma}$.
\end{theorem}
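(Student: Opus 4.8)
Theorem~\ref{thm:SVPmain} already gives us, for $p=2$, hardness of $\svp_{2,\gamma^*}$ with a constant $\gamma^* > 1$ bounded away from one; since $\gamma^* = \tfrac{2^p}{2^{p-1}+1} = \tfrac{4}{3}$ for $p=2$, we start from: assuming PIH, there is no randomized FPT algorithm for $\svp_{2,4/3}$ (via the chain $\csp_\varepsilon \to \snvp_2 \to \svp_{2,4/3}$, where the last step is Khot's reduction, Lemma~\ref{lem:snvp-to-svp}). The plan is then to amplify the gap from $4/3$ to an arbitrary constant $\gamma$ by repeatedly tensoring the lattice with itself, in analogy with the tensoring step for $\mdp$ (Proposition~\ref{prop:gap-amplification}), but taking care of the one genuine difference between the two settings.

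\textbf{The tensoring reduction and its completeness.} Fix a target ratio $\gamma \geqs 1$ and set $t = \lceil \log_{4/3}\gamma \rceil = O(1)$. Given a $\svp_{2,4/3}$ instance $(\bA,k)$ generating a lattice $L$, the reduction outputs (a generator matrix of) the $t$-fold tensor power $L^{\otimes t}$ with new parameter $k^t$; recall that, per the paper's convention, ``norm $\leqs k$'' refers to the squared $\ell_2$ norm. If $\vzero \ne v \in L$ has $\|v\|_2^2 \leqs k$, then $v^{\otimes t} \in L^{\otimes t}\setminus\{\vzero\}$ satisfies $\|v^{\otimes t}\|_2^2 = (\|v\|_2^2)^t \leqs k^t$, so completeness is immediate. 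Since $t$ is a constant, the dimension blows up only polynomially and the parameter goes from $k$ to $k^t$, so this is an FPT reduction.

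\textbf{Soundness — the crux.} Here lies the only real difficulty. In the NO case we are promised that every nonzero $u \in L$ has $\|u\|_2^2 > \tfrac{4}{3}k$, and we need every nonzero $w \in L^{\otimes t}$ to have $\|w\|_2^2 > \gamma k^t$. The inequality $\lambda_1^{(2)}(L^{\otimes t}) \leqs \big(\lambda_1^{(2)}(L)\big)^t$ always holds, but the reverse bound \emph{fails} for generic lattices — the tensor power can contain a vector far shorter than any product of short vectors of the factors — so we cannot argue soundness for an arbitrary $L$. This is exactly the obstacle flagged in the overview, and it is resolved by exploiting the special structure of the lattices that Khot's reduction produces. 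Following Haviv and Regev~\cite{HR07} (see also \cite{Khot05}), the lattice $L$ coming out of our reduction is ``tensor-friendly'': roughly speaking it is an integer lattice built on top of a BCH code and containing $q\Z^N$ for a suitable modulus $q$, so that any nonzero $w \in L^{\otimes t}$ is of one of two types — either all its coordinates are divisible by $q$, in which case a direct $\ell_2$ estimate forces $\|w\|_2$ to be large, or $w \bmod q$ is a nonzero codeword of the $t$-fold tensor of the underlying code, whose minimum distance multiplies by Proposition~\ref{prop:gap-amplification} and hence again forces $\|w\|_2^2 \geqs \big(\tfrac{4}{3}k\big)^t \geqs \gamma k^t$. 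This gives soundness and completes the proof of Theorem~\ref{thm:SVPmain-boost}; note the resulting reduction is randomized only because Khot's base reduction is.

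\textbf{Where the care is needed.} The main thing to check is that the parameterized adaptation of Khot's reduction still outputs lattices satisfying the hypotheses of the Haviv–Regev tensoring lemma, and that every quantity entering that lemma (the modulus $q$, the code, the gap $\gamma^*$) depends only on the parameter and not on the ambient dimension $nm$. This is automatic in our setting precisely because we tensor only a constant number of times $t = O(1)$, so $k^t$ remains a function of the parameter alone — in contrast to the non-parameterized Haviv–Regev amplification to almost-polynomial factors, where one must let $k$ grow with $nm$, which is not permissible when $k$ is the parameter. I expect this verification to be routine given that our adaptation of Khot's construction changes only parameter settings, not the shape of the lattice.
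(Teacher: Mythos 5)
Your proposal is correct and follows essentially the same route as the paper: start from the constant-gap hardness of $\svp_{2,\gamma_2}$ produced by the parameterized version of Khot's reduction (Lemma~\ref{lem:snvp-to-svp}), whose NO case in fact guarantees the stronger property that the lattice contains no annoying vectors (Definition~\ref{def:annoying-vectors}), and then amplify by a constant number of tensorings. The verification you flag as the remaining step is precisely what the paper does: it invokes the Haviv--Regev tensoring lemma (Lemma~\ref{lem:lattice-gap-amp}) as a black box, whose hypotheses are exactly the no-annoying-vector conditions, applied with the well-behaved lattice as one factor in each repeated tensoring.
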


The proof of Theorem~\ref{thm:SVPmain} is essentially the same as that of Khot~\cite{Khot05}, with only a small change in parameter selection. We devote Subsection~\ref{subsec:khot} to this proof. Theorem~\ref{thm:SVPmain-boost} follows by gap amplification via tensor product of lattices. Unlike tensor products of codes, the length of the shortest vector of the tensor product of two lattices is not necessarily equal to the product of the length of the shortest vector of each lattice. Fortunately, Haviv and Regev~\cite{HR07} showed that a strengthened notion of soundness from~\cite{Khot05}, which will be described soon, allows gap amplification via tensoring in the $\ell_2$ norm. This is explained in more detail in Section~\ref{sec:svp-gap-amp}.

\subsection{Following Khot's Reduction: Proof of Theorem~\ref{thm:SVPmain}} \label{subsec:khot}

The proof of Theorem~\ref{thm:SVPmain} goes through the following FPT inapproximability of $\snvp_{p,\eta}$. 

\begin{theorem}[FPT Inapproximability of $\snvp$]\label{thm:CVPmain}
Assuming PIH, for any $\gamma, p \geqs 1$, there is no FPT algorithm which on input $(\bB, \by, t)$ where $\bB \in \Z^{n \times q}, \by \in \Z^n$ and $t \in \N$, can distinguish between
\begin{itemize}
	\item (YES) there exists $\bx \in \Z^q$ such that $\bB \bx - \by \in \{0,1\}^n$ and $\|\bB \bx - \by \|^p_p \leqs t$.
	\item (NO) for all choices of $\bx \in \Z^q$ and $w \in \Z \setminus \{0\}$, we have $\|\bB \bx - w \cdot \by \|^p_p > \gamma \cdot t$
\end{itemize} 
\end{theorem}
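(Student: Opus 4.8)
The plan is to reproduce the three‑hop chain $\csp_\varepsilon \to \gapvec_\gamma \to \gapvec \to \sncp$ of Sections~\ref{sec:csp-to-gapvec}--\ref{sec:gapvec-to-snc} over $\Z$ instead of $\F$, carrying along the extra structure the statement asks for (a $\{0,1\}$‑valued residual, and robustness to replacing the target by a nonzero integer multiple). Fix the $\varepsilon>0$ given by PIH and arbitrary constants $\gamma\geqs 1$, $p\geqs 1$. \emph{Hop 1 (integer base reduction).} First I would reduce $\csp_\varepsilon$ to the integer analogue of $\gapvec_{1+\varepsilon/3}$: given $(\bA,\by,k)$ with $\bA\in\{-1,0,1\}^{n\times m}$ and $\by\in\{0,1\}^n$, distinguish (YES) $\exists\,\bx\in\{0,1\}^m$ with $\|\bx\|_0=k$ and $\bA\bx=\by$, from (NO) for every $\bx\in\Z^m$ with $\|\bx\|_0\leqs(1+\varepsilon/3)k$ and every $w\in\Z\setminus\{0\}$, $\bA\bx\neq w\by$. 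The matrix is built exactly as in Lemma~\ref{lem:csp-to-gapvec}, except that the entry of each consistency row in its vertex‑column is $-1$ rather than $1$, so that over $\Z$ the vertex and edge contributions cancel in the intended solution; completeness then goes through verbatim with $\bx$ the $\{0,1\}$‑indicator of a satisfying assignment. For soundness I would run the $\F$ argument while tracking supports: if $\bA\bx=w\by$ with $w\neq 0$, the $u$‑rows give $\sum_\sigma \bx_{(u,\sigma)}=w\neq 0$ and the $e$‑rows give $\sum \bx_{(e,\sigma_0,\sigma_1)}=w\neq 0$, so every vertex‑ and edge‑block of coordinates is nonzero; the Hamming budget $(1+\varepsilon/3)(|V|+|E|)$ then caps at $\leqs\varepsilon|E|$ (using $|V|\leqs 2|E|$) the number of edges whose block has $\geqs 2$ nonzeros, and for each remaining ``good'' edge $e=(u,v)$ the unique nonzero edge‑coordinate equals $w$ and the consistency rows pin $\bx_{(u,\cdot)}=w\cdot\ind[\,\cdot=\sigma_0^*\,]$, $\bx_{(v,\cdot)}=w\cdot\ind[\,\cdot=\sigma_1^*\,]$ for some $(\sigma_0^*,\sigma_1^*)\in C_e$; reading off $\psi(u)=\sigma_0^*$ is consistent across all good edges and satisfies each of them, so $\val(\Gamma)\geqs 1-\varepsilon$. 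Crucially, this argument never used $|w|=1$: measuring the budget by support size makes it insensitive to the multiplier.

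\emph{Hop 2 (gap amplification).} Next I would transport the composition operator $\oplus$ of Section~\ref{sec:gap-amplification} to $\Z$; the only change is that the copy of $\bz$ placed in row block $S_i$ becomes $-\bz$, so that $S_i$ reads $\bA\bx^i=x^0_i\bz$ over $\Z$. The YES witness $\bx^0=\mathbf{b}$, $\bx^i=b_i\mathbf{a}$ is again $\{0,1\}$‑valued with support $k_2(k_1+1)$. In the NO direction, if $\mathbf{C}\bx=w\mathbf{w}$ with $w\neq 0$, block $S_0$ forces $\bB\bx^0=w\bz'$, hence $\bx^0$ has more than $\gamma k_2$ nonzero coordinates by Hop~1; then for each such coordinate $i$, block $S_i$ forces $\bA\bx^i=x^0_i\bz$ with $x^0_i\neq 0$, so $\|\bx^i\|_0>\gamma k_1$ — and this is exactly where the ``for all $w\neq 0$'' clause of Hop~1 is indispensable, since over $\Z$ the entry $x^0_i$ need not equal $1$. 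Summing, $\|\bx\|_0>\gamma k_2+\gamma^2 k_1 k_2$, so iterating exactly as in Corollary~\ref{corr:gap-amplify} gives, for every constant $\Gamma\geqs 1$, the same integer problem with gap $\Gamma$ and parameter $\poly(k)$, still with a $\{0,1\}$ YES witness and the $w\neq 0$ quantifier in the NO case.

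\emph{Hop 3 (homogenization and the $\{0,1\}$ residual).} Finally, from a gap‑$\Gamma$ instance $(\bA,\by,k)$ with $\Gamma:=\gamma+1$, I would mimic Lemma~\ref{lem:gapvec-to-sncp}: set $N=\lceil\Gamma k\rceil+1$, $t=k$, and
\begin{align*}
\bB=\begin{bmatrix}\vone_N\otimes\bA\\ \Id_m\end{bmatrix},\qquad \by'=\begin{bmatrix}\vone_N\otimes\by\\ \vzero_m\end{bmatrix}.
\end{align*}
Then $\bB\bx-w\by'=\big(\vone_N\otimes(\bA\bx-w\by)\big)\circ\bx$ for all $\bx\in\Z^m$, $w\in\Z$, so $\|\bB\bx-w\by'\|_p^p=N\|\bA\bx-w\by\|_p^p+\|\bx\|_p^p$. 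In the YES case the $\{0,1\}$ witness $\bx$ has $\bA\bx=\by$, hence $\bB\bx-\by'=\vzero\circ\bx\in\{0,1\}^{Nn+m}$ and $\|\bB\bx-\by'\|_p^p=\|\bx\|_p^p=\|\bx\|_0=k=t$ (using $\bx\in\{0,1\}^m$). In the NO case, fix $\bx$ and $w\neq 0$: if $\|\bx\|_0>\Gamma k$ then $\|\bx\|_p^p\geqs\|\bx\|_0>\Gamma k>\gamma t$; otherwise $\|\bx\|_0\leqs\Gamma k$, so Hop~2 gives $\bA\bx\neq w\by$, whence $\|\bA\bx-w\by\|_p^p\geqs 1$ by integrality and $\|\bB\bx-w\by'\|_p^p\geqs N>\Gamma k>\gamma t$. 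The only property of $p$ used is that $\|\cdot\|_p^p\geqs\|\cdot\|_0$ on integer vectors, which holds for every $p\geqs 1$, so a single construction serves every $p$. All three hops are deterministic FPT reductions with $t=k$ a fixed polynomial in $|V|$ (the amplification squares $k$ a constant number of times), so composing with PIH yields Theorem~\ref{thm:CVPmain}.

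\emph{Expected obstacle.} The only substantively new work is Hop~1: realizing the $\F$‑cancellation with a $-1$ entry, and re‑deriving soundness for arbitrary integer $\bx$ and arbitrary nonzero multiplier $w$ rather than for $\{0,1\}$ vectors and $w=1$. Its payoff is that the resulting homogenized NO guarantee is precisely what lets Hop~2's multiplicative gap amplification survive the passage from $\F$ (where $x^0_i\in\{0,1\}$) to $\Z$ (where $x^0_i$ may be any integer).
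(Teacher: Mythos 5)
Your proposal is correct and follows essentially the same route as the paper's own proof in Appendix~\ref{sec:csp-to-snvp}: your ``integer analogue of $\gapvec$'' is exactly the paper's intermediate problem $\lvec_\eta$, your Hop~1 is Lemma~\ref{lem:csp-to-lvec} (same sign-flip in the consistency rows, same $S_u$/$T_e$/$E_{\uni}$ soundness tracking the multiplier $w$), Hop~2 is Lemma~\ref{lem:lvec-gap-amp}, and Hop~3 is Lemma~\ref{lem:lvec-to-snvp}. The only differences are cosmetic (where the $-1$ signs are placed, and your explicit note that the composition needs $-\bz$ over $\Z$, a point the paper leaves implicit).
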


Since the proof of the above theorem is identical to the FPT reduction of $\csp$ to $\sncp$, we defer it to Appendix~\ref{sec:csp-to-snvp}. Note that the reduction has stronger guarantees than a usual reduction from $\csp_\epsilon$ to $\snvp_{p,\eta}$ in both the YES and NO cases. In the YES case, the witness ot the $\snvp$ instance is a $\{0,1\}$-vector. Furthermore, in the NO case of the above lemma, the lower bound guarantee is in terms of the Hamming norm. These properties are used crucially in the analysis for  the reduction from $\snvp_{p,\eta}$ to $\svp_{p,\gamma}$. The proof of Theorem \ref{thm:SVPmain} also requires the following definition of annoying vectors from~\cite{Khot05}.

\begin{definition}[Annoying Vectors]				\label{def:annoying-vectors}
	For a parameter $l \in \mathbb{N}$, we say that a vector $\bz \in \Z^u$ for some $u \in \N$ is an \emph{Annoying Vector} in the $\ell_p$ norm, if it violates all of the following conditions:
	\begin{itemize}
		\item[1.] The Hamming norm of $\bz$ is at least $l$.
		\item[2.] All coordinates of $\bz$ are even, and its Hamming norm is at least $l/2^p$.
		\item[3.] All coordinates of $\bz$ are even, and it has at least one coordinate of magnitude at least $l^{10l}$. 
	\end{itemize}
\end{definition}

As shown in~\cite{Khot05,HR07}, lattices without annoying vectors (for an appropriate value of $l$) are ``well behaved'' with respect to tensor products of lattices. This notion will be useful later, when the lattice tensor product is used to amplify the gap to any constant for $p = 2$ (Section \ref{sec:svp-gap-amp}). 

The second step is a randomized FPT reduction from $\snvp_{p,\eta}$ to $\svp_{p,\gamma}$, as stated by the following lemma. For succinctness, we define a couple of additional notations: let $\cL(\bA)$ denote the lattice generated by the matrix $\bA \in \Z^{n \times m}$, i.e., $\cL(\bA) = \{\bA\bx \mid \bx \in \Z^m\}$, and let $\lambda_p(\cL)$ denote the length (in the $\ell_p$ norm) of the shortest vector of the lattice $\cL$, i.e., $\lambda_p(\cL) = \underset{\bzero \ne \by \in \cL}{\min} \|\by\|_p$.

\begin{lemma}				\label{lem:snvp-to-svp}
	Fix $p > 1$, and let $\eta \geqs 1$ be such that $\frac12 + \frac{1}{2^p} + \frac{(2^p + 1)}{\eta} < 1$. Let $(\bB,\by,t)$ be a $\snvp_{p,\eta}$ instance, as given by Theorem~\ref{thm:CVPmain}. Then, there is a randomized FPT reduction from $\snvp_{p,\eta}$ instance $(\bB,\by,t)$ to $\svp_{p,\gamma}$ instance $(\bB_{\rm svp},\gamma^{-1}_pl)$ with $l = \eta \cdot t$ such that
	\begin{itemize}
		\item $\bB_{\rm svp} \in \Z^{(n + h + g + 1) \times (q + h + g + 2)}$ where $m = l^{O(l)}\cdot{\rm poly}(n,q)$ and $h = {\rm poly}(l,n,q)$.
		\item If $(\bB,\by,t)$ is a YES instance (from Theorem~\ref{thm:CVPmain}), then with probability $0.8$, $\lambda_p(\cL(\bB_{\rm svp}))^p \leqs \gamma^{-1}_p l$.
		\item If $(\bB,\by,t)$ is a NO instance, then with probability $0.9$, for all choices of $\bx \in \Z^{q + h + g + 2} \setminus \{\bzero\}$, the vector $\bB_{\rm svp}\bx$ is not annoying. In particular, we have $\lambda_p(\cL(\bB_{\rm svp}))^p \geqs l$.
	\end{itemize}
	Here $\gamma_p := \frac{1}{\frac12 + (2^p+1)/\eta + 1/2^p}$ is strictly greater than $1$ by our choice of $\eta$. 
\end{lemma}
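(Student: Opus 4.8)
The statement is obtained by following Khot's reduction from \CVP\ to \SVP~\cite{Khot05} essentially verbatim, modifying only the numerical parameters so that the target value $\gamma_p^{-1}l$ depends on the parameter $t$ alone (recall $l=\eta t$ and $\eta$ is a fixed constant). The gadget at the heart of the construction is the \emph{BCH lattice}. Applying Theorem~\ref{thm:bch} with a block length $h=\poly(l,n,q)$ chosen so that the associated code corrects $l$ errors (message length $h-O(l\log h)$, distance $\geqs 2l+1$), one builds a lattice $\bB_{\rm BCH}$ and a designated target syndrome with two properties: (i) the coset of $\cL(\bB_{\rm BCH})$ selected by the syndrome contains at least $h^{\Omega(l)}$ vectors all of whose coordinates lie in $\{0,1\}$ and whose Hamming weight is at most $l$ — these are exactly the preimages of the syndrome inside the decoding radius; and (ii) after stacking a bounded number of scaled copies of $\bB_{\rm BCH}$, every nonzero lattice vector of small $\|\cdot\|_p^p$ is ``rigid'', i.e.\ it falls into one of the three shapes in Definition~\ref{def:annoying-vectors}. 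I would isolate these two facts as a stand-alone lemma about the BCH lattice, proved exactly as in~\cite{Khot05}; its proof is a statement about BCH parities only and is oblivious to whether we work in the parameterized regime.

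\textbf{The intermediate lattice.} Next I would assemble the intermediate lattice $\bB_{\rm int}$ by stacking the \snvp\ instance $(\bB,\by,t)$ on top of appropriately scaled copies of $\bB_{\rm BCH}$, together with one extra coordinate that carries the multiplier $w$ and couples the two blocks. The strong YES guarantee of Theorem~\ref{thm:CVPmain} — that the witness $\bB\bx-\by$ is a $\{0,1\}$-vector with $\|\bB\bx-\by\|_p^p\leqs t$ — is exactly what is needed: concatenating it with any one of the $h^{\Omega(l)}$ $\{0,1\}$-vectors of the target BCH coset produces a lattice vector whose $\|\cdot\|_p^p$ is at most $t$ plus a bounded number of further contributions summing to less than $(\frac{1}{2}+1/2^p)l$, hence at most $\gamma_p^{-1}l$, and there are $h^{\Omega(l)}$ such ``good'' vectors. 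The strong NO guarantee — that $\|\bB\bx-w\by\|_0>\gamma t$ for every $\bx$ and every $w\neq 0$, where we are free to take the $\snvp$ gap $\gamma$ as large as $\eta$ — together with property (ii) forces every nonzero vector of $\cL(\bB_{\rm int})$ to be non-annoying apart from a bounded collection of exceptional vectors; and any non-annoying vector has $\|\cdot\|_p^p\geqs l$, since condition~1 gives this directly, condition~2 gives $2^p\cdot(l/2^p)=l$, and condition~3 gives $l^{10lp}\geqs l$.

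\textbf{Sparsification and bookkeeping.} The last step deletes the bounded set of exceptional vectors by passing to a random sublattice: fix a prime $P=\poly(h,n,q)$ and a uniformly random $\mathbf{r}$, and keep only the lattice vectors $\mathbf{v}$ with $\langle\mathbf{r},\mathbf{v}\rangle\equiv 0\pmod P$; the extra $g$ coordinates and the two extra columns implement this constraint along with the multiplier, yielding $\bB_{\rm svp}$ of the stated dimensions (the blow-up is $l^{O(l)}\poly(n,q)$). A standard first-/second-moment union-bound argument as in~\cite{Khot05} shows that, because $h$ is polynomially large in $l,n,q$, the good population $h^{\Omega(l)}$ exceeds $P$ times the number of exceptional vectors; hence with probability $\geqs 0.9$ no exceptional vector survives (so every nonzero vector of $\cL(\bB_{\rm svp})$ is non-annoying, giving $\lambda_p(\cL(\bB_{\rm svp}))^p\geqs l$), while with probability $\geqs 0.8$ at least one good vector survives (giving $\lambda_p(\cL(\bB_{\rm svp}))^p\leqs\gamma_p^{-1}l$); this is the source of the randomization and of the two-sided constant error. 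Finally the gap is $\gamma_p=(\frac{1}{2}+(2^p+1)/\eta+1/2^p)^{-1}$, which is strictly greater than $1$ precisely because $\frac{1}{2}+1/2^p<1$ when $p>1$ (this is also exactly where a valid choice of $\eta\geqs 1$ exists) — the single point at which the hypothesis $p>1$ enters.

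\textbf{Main obstacle.} The technical heart is property (ii) of the BCH lattice and its interface with Definition~\ref{def:annoying-vectors}: one must verify that stacking a constant number of scaled copies of the BCH lattice really does drive every short nonzero vector of $\cL(\bB_{\rm int})$ into one of the three rigid shapes, using simultaneously that the BCH code corrects $l$ errors \emph{and} that the NO case of \snvp\ supplies a \emph{Hamming}-norm — not merely $\ell_p$-norm — lower bound. Once the two populations are in hand (good $\approx h^{\Omega(l)}$, exceptional of bounded size), the sparsification estimate and the parameter arithmetic are routine, and the reduction is clearly FPT since $l=\eta t$ and the running time is $l^{O(l)}\poly(n,q)$.
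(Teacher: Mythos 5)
Your skeleton matches the paper's (and Khot's) reduction---BCH gadget, intermediate lattice coupling it with the \snvp\ instance, then a random sparsification killing the exceptional vectors---but the gadget property you assert is wrong, and it is exactly the point on which the construction lives or dies. You take a BCH code of distance $\geqs 2l+1$ and claim a suitable syndrome has $h^{\Omega(l)}$ preimages of Hamming weight at most $l$, ``inside the decoding radius''. For a code of distance $2l+1$, any two weight-$\leqs l$ vectors with the same syndrome differ by a codeword of weight $\leqs 2l$, hence coincide: each coset contains at most one such vector, so the claimed $h^{\Omega(l)}$ count is impossible. The gadget has to operate in the covering regime, \emph{beyond} half the minimum distance: the paper (Lemma~\ref{lem:bch-lattice}) uses the parity check of a distance-$l$ BCH code (so $g \leqs \frac{l}{2}\log h$) together with a randomly produced shift $\bs$, and counts coefficient vectors $\bz$ for which $\bB_{\rm BCH}\bz-\bs$ is a $\{0,1\}$-vector of weight \emph{exactly} $r=(\frac12+\frac1{2^p}+\frac1\eta)\eta t > l/2$; the average-coset count $2^{-g}\binom{h}{r}\approx h^{r-l/2}/l^{l}$ is large precisely because $r$ exceeds $l/2$ by $l/2^p+t$, and this same $r$ is what makes the YES-case value $2^pt+r$ equal $\gamma_p^{-1}l$. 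With your parameters neither the good-vector count nor the target value comes out correctly.

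Two further, related slips. You drop the factor-$2$ scaling of the \snvp\ block ($2\bB$, $2\by$): it is needed both for the arithmetic just mentioned (the CVP witness contributes $2^pt$, not $t$) and, more importantly, for the NO-case count of annoying vectors, where evenness of the CVP coordinates forces any odd coordinate to come from the BCH block and hence (by the distance $l$ of that code) forces Hamming weight $\geqs l$. Also, the exceptional population is not ``a bounded collection'' and the sparsifying prime is not $\poly(h,n,q)$: the annoying vectors number up to roughly $(2l)^{10l^2}h^{l/2^p}$ (growing with $h$), and the prime $\rho$ is taken of order $N_g\approx h^{l/2^p+t}$, i.e.\ $h^{\Theta(l)}$, so that $N_a/\rho\leqs 0.1$ while a good vector still survives with constant probability; what makes this work is the ratio $N_g\geqs 10^5N_a$, guaranteed by $h\geqs(100l)^{100\eta l}$, not smallness of $N_a$ in absolute terms. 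Relatedly, your ``property (ii)'' as first stated (every short nonzero vector of the intermediate lattice has one of the three shapes of Definition~\ref{def:annoying-vectors}) is not what is proved and cannot hold---annoying vectors do exist in $\cL(\bB_{\rm int})$; they are merely outnumbered, which is the entire reason the random sublattice step is present.
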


Combining the above lemma with Theorem~\ref{thm:CVPmain} gives us Theorem \ref{thm:SVPmain}. 

The reduction in the above lemma is nearly identical to the one from \cite{Khot05}. We devote the rest of this subsection to describing this reduction and proving Lemma \ref{lem:snvp-to-svp}. The rest of this section is organized as follows. In Section \ref{sec:bch-lattic}, we define the BCH lattice, which is the key gadget used in the reduction. Using the BCH lattice and the $\snvp_{p, \eta}$ instance, we construct the intermediate lattice in Section \ref{sec:int-lattice}. The intermediate lattice serves to blow up the number of ``good vectors'' for the YES case, while controlling the number of annoying vectors for the NO case. In particular, this step ensures that the number of good vectors (Lemma \ref{lem:good-vectors}) far outnumber the number of annoying vectors (Lemma \ref{lem:annoying-vectors}). Finally, in Section \ref{sec:final-lattice} we compose the intermediate lattice with a random homogeneous constraint (sampled from an appropriate distribution), to give the final $\svp_{p,\gamma}$ instance. The additional random constraint is used to annihilate all annoying vectors in the NO case, while retaining at least one good vector in the YES case.

{\noindent{\bf Setting}: For the rest of the section, we fix $(\bB,\by,t)$ to be a $\snvp_{p,\eta}$ instance (as given by Theorem \ref{thm:CVPmain}) and we set $l := \eta \cdot t$.}

\subsubsection{The BCH Lattice gadget}				\label{sec:bch-lattic}

We begin by defining the BCH lattices which is the key gadget used in the reduction. Given parameters $l,h \in \mathbb{N}$ where $h$ is a power of $2$, the BCH lattice is defined by the generator $\bB_{\rm BCH}$ defined as
\begin{align*}
	\bB_{\rm BCH} =
	\begin{bmatrix}
		{\rm Id}_{h} &  \mathbf{0}_{h \times g} \\
		Q\cdot\mathbf{P}_{\rm BCH} & 2Q\cdot{\rm Id}_{g}
	\end{bmatrix}
	\in \mathbb{Z}^{(h + g) \times (h + g)}.
\end{align*}
\noindent where $\mathbf{P}_{\rm BCH} \in \{0,1\}^{g \times h}$ is the parity check matrix of a $[h,h-g,l]_2$-BCH code (i.e., $g$ is the codimension of the code whose parity check matrix is ${\bf P}_{\rm BCH}$), and $Q = l^{10l}$. By our choice of parameters it follows that $g \leqs \frac{l}{2}\log h$ (see Theorem \ref{thm:bch}). The following lemma summarizes the key properties of BCH lattices, as defined above.

\begin{lemma}				\label{lem:bch-lattice}
	Let $\mathbf{B}_{\rm BCH} \in \mathbb{Z}^{(h + g) \times ( h + g)}$ be the BCH lattice constructed as above. Let $ r = \Big(\frac12 + \frac{1}{2^p} + \frac{1}{\eta}\Big)\eta t$. Then there exists a randomized polynomial time algorithm, which with probability at least $0.99$, returns vector $\mathbf{s} \in \mathbb{Z}^{h+g}$ such that the following holds. There exists at least $\frac{1}{100}2^{-g}{h \choose r}$ distinct coefficient vectors $\mathbf{z} \in \mathbb{Z}^{h + g}$ such that the vector $\big(\mathbf{B}_{\rm BCH} \mathbf{z} - \mathbf{s}\big)$ is a $\{0,1\}$-vector of Hamming weight exactly $r$.
\end{lemma}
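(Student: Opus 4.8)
The plan is to follow Khot's argument. First I would describe the structure of the lattice points of $\mathbf{B}_{\rm BCH}$, then choose the center $\mathbf{s}$ so that requiring $\mathbf{B}_{\rm BCH}\mathbf{z}-\mathbf{s}$ to be a $\{0,1\}$-vector of Hamming weight exactly $r$ becomes equivalent to a purely combinatorial statement about weight-$r$ vectors lying in one fixed coset of the BCH code, and finally conclude by an averaging argument over cosets. Concretely: write a generic lattice point as $\mathbf{B}_{\rm BCH}\mathbf{z} = (\mathbf{z}_1,\ Q\mathbf{P}_{\rm BCH}\mathbf{z}_1 + 2Q\mathbf{z}_2)$ for $\mathbf{z}=(\mathbf{z}_1,\mathbf{z}_2)\in\Z^h\times\Z^g$. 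The algorithm samples $\mathbf{w}^\star$ uniformly at random from the set $W_r$ of Hamming-weight-$r$ vectors in $\{0,1\}^h$, computes the syndrome $\mathbf{a} := \mathbf{P}_{\rm BCH}\mathbf{w}^\star \bmod 2 \in \{0,1\}^g$, and outputs $\mathbf{s} := (\mathbf{0}_h,\ Q\mathbf{a})$, the entries of $\mathbf{a}$ read as the integers $0,1$. The key observation is that $Q = l^{10l}\geqs 2$: each of the last $g$ coordinates of $\mathbf{B}_{\rm BCH}\mathbf{z}-\mathbf{s}$ equals $Q\big((\mathbf{P}_{\rm BCH}\mathbf{z}_1)_i + 2(\mathbf{z}_2)_i - a_i\big)$, an integer multiple of $Q$, hence can lie in $\{0,1\}$ only if it is $0$; meanwhile the first $h$ coordinates of $\mathbf{B}_{\rm BCH}\mathbf{z}-\mathbf{s}$ are just $\mathbf{z}_1$, forcing $\mathbf{z}_1\in\{0,1\}^h$.

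Unwinding this, $\mathbf{B}_{\rm BCH}\mathbf{z}-\mathbf{s}\in\{0,1\}^{h+g}$ holds exactly when $\mathbf{z}_1\in\{0,1\}^h$ with $\mathbf{P}_{\rm BCH}\mathbf{z}_1\equiv\mathbf{a}\pmod 2$ and $\mathbf{z}_2$ is the unique integral vector given by $(\mathbf{z}_2)_i = (a_i - (\mathbf{P}_{\rm BCH}\mathbf{z}_1)_i)/2$ (integral precisely because of the syndrome congruence); in that case $\mathbf{B}_{\rm BCH}\mathbf{z}-\mathbf{s}=(\mathbf{z}_1,\mathbf{0}_g)$ has Hamming weight $\|\mathbf{z}_1\|_0$. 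Hence the coefficient vectors $\mathbf{z}$ for which $\mathbf{B}_{\rm BCH}\mathbf{z}-\mathbf{s}$ is a $\{0,1\}$-vector of weight exactly $r$ are in bijection with $W_r^{\mathbf{a}} := \{\mathbf{w}\in W_r : \mathbf{P}_{\rm BCH}\mathbf{w}\equiv\mathbf{a}\pmod 2\}$, i.e.\ the weight-$r$ vectors in the coset of the BCH code with syndrome $\mathbf{a}$. So it suffices to show that $|W_r^{\mathbf{a}}| \geqs \frac{1}{100}2^{-g}\binom{h}{r}$ with probability at least $0.99$ over the random choice of $\mathbf{w}^\star$.

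For this last step, grouping $W_r$ (of size $\binom{h}{r}$) by syndrome gives $\sum_{\mathbf{b}\in\{0,1\}^g}|W_r^{\mathbf{b}}| = \binom{h}{r}$, and sampling $\mathbf{w}^\star\sim{\rm Unif}(W_r)$ and reading off $\mathbf{a}=\mathbf{P}_{\rm BCH}\mathbf{w}^\star\bmod 2$ selects each value $\mathbf{b}$ with probability exactly $|W_r^{\mathbf{b}}|/\binom{h}{r}$. Therefore
\begin{align*}
\Pr_{\mathbf{w}^\star}\!\left[\,|W_r^{\mathbf{a}}| < \tfrac{1}{100}2^{-g}\binom{h}{r}\,\right] \;=\; \sum_{\mathbf{b}\,:\,|W_r^{\mathbf{b}}| < \frac{1}{100}2^{-g}\binom{h}{r}} \frac{|W_r^{\mathbf{b}}|}{\binom{h}{r}} \;<\; 2^g\cdot\frac{1}{100}2^{-g} \;=\; \frac{1}{100},
\end{align*}
so with probability at least $0.99$ the sampled coset has $|W_r^{\mathbf{a}}| \geqs \frac{1}{100}2^{-g}\binom{h}{r}$ distinct members, which is exactly what we need. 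The procedure only samples a size-$r$ subset of $[h]$, performs one matrix--vector product over $\F$, and multiplies by $Q$, all in $\poly(h)$ time, and it outputs an integer vector of $\poly(h)$ bit-length.

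The only genuinely delicate point — the ``main obstacle'' — is arranging the carry structure of $\mathbf{B}_{\rm BCH}$ to our advantage: because the BCH block is scaled by $Q$ while the identity block beneath it is scaled by $2Q$ (not $Q$), setting $\mathbf{s}_2 = Q\mathbf{a}$ with $\mathbf{a}$ a $0/1$ syndrome forces every admissible lattice vector to agree with $\mathbf{s}$ on its last $g$ coordinates, while still leaving a full coset's worth of freedom in $\mathbf{z}_1$; the large value of $Q$ is precisely what rules out every alternative (and is also needed elsewhere to control annoying vectors). Everything else is the averaging bound above, and the only routine checks are that $r$ is a nonnegative integer with $r\leqs h$ and that $l=\eta t\geqs 2$ (so $Q\geqs 2$), both of which hold in the parameter regime fixed at the start of the section.
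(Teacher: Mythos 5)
Your proof is correct, and it is essentially the argument the paper relies on: the paper omits the proof entirely, stating it is identical to Khot's Lemma 4.3, and your construction (center $\mathbf{s}=(\mathbf{0}_h, Q\mathbf{a})$ with $\mathbf{a}$ the syndrome of a uniformly random weight-$r$ vector, the multiple-of-$Q$ argument forcing the last $g$ coordinates to vanish, the bijection with the coset $W_r^{\mathbf{a}}$, and the averaging bound over the $2^g$ cosets) is exactly that argument. The parameter caveats you flag ($r$ integral, $r\leqs h$, $Q\geqs 2$) are the same implicit assumptions made in the paper's parameter regime, so nothing is missing.
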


We skip the proof of the above lemma, since it is identical to proof of Lemma 4.3 in \cite{Khot05}. Note that this is in fact even weaker than Khot's lemma, since we do not impose a bound on $\|\mathbf{z}\|_p$.

\subsubsection{The Intermediate Lattice}			\label{sec:int-lattice}

We now define the intermediate lattice. Let $(\bB,\by,t)$ be an instance of $\snvp_{p,\eta}$, where $\bB \in \Z^{n \times q}$. The intermediate lattice $\bB_{\rm int}$ is constructed as follows. Let $l = \eta t$. Let $h$ be the smallest power of 2 such that $h \geqs \max\{2^{10p+10}n, (100l)^{100\eta l}\}$, and let $\bB_{\rm BCH}$ be constructed as above. Then 
\begin{align*}
	\bB_{\rm int} =
	\begin{bmatrix}
		2\bB &  \mathbf{0}_{ n \times  (h+g)} &  2\by  \\
		\mathbf{0}_{(h + g) \times q} &  \bB_{\rm BCH} &  \bs
	\end{bmatrix}
	\in \Z^{(n + h + g) \times (q + h + g + 1)}.
\end{align*}
where $\mathbf{s} \in \Z^{h + g}$ is the vector given by Lemma \ref{lem:bch-lattice}. The following lemmas bound the good and annoying vectors in the YES and NO cases respectively.

\begin{lemma}				\label{lem:good-vectors}
   Let $(\mathbf{B},\mathbf{y},t)$ be a YES instance, and let $\mathbf{B}_{\rm int}$ be the corresponding intermediate lattice. Then, with probability at least $0.99$, there exists at least distinct ${h^{r}}\Big({100 h^{l/2}l^{l}}\Big)^{-1}$ vectors $\mathbf{x} \in \mathbb{Z}^{q + h + g + 1}$ such that $\mathbf{B}_{\rm int}\mathbf{x}$ are distinct $\{0,1,2\}$-vectors with $\ell_p$ norm at most $(2^p t + r)^{1/p}$.
\end{lemma}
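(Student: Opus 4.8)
The plan is to exhibit an explicit large family of coefficient vectors by combining the single witness of the $\snvp$ YES instance with the many BCH-lattice coefficient vectors furnished by Lemma~\ref{lem:bch-lattice}. Since $(\bB,\by,t)$ is a YES instance (in the sense of Theorem~\ref{thm:CVPmain}), fix $\bx_0 \in \Z^q$ with $\bB\bx_0 - \by \in \{0,1\}^n$ and $\|\bB\bx_0 - \by\|_p^p \leqs t$. First, invoke Lemma~\ref{lem:bch-lattice} with $l = \eta t$, $r = \big(\tfrac12 + \tfrac{1}{2^p} + \tfrac1\eta\big)\eta t$ and $h$ as in the construction of $\bB_{\rm int}$; with probability at least $0.99$ it returns the shift vector $\bs$ used in $\bB_{\rm int}$ together with a set $\mathcal{Z} \subseteq \Z^{h+g}$ of at least $\tfrac{1}{100} 2^{-g}\binom{h}{r}$ distinct coefficient vectors $\bz$ such that $\bB_{\rm BCH}\bz - \bs$ is a $\{0,1\}$-vector of Hamming weight exactly $r$. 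I condition on this event for the rest of the argument.

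For each $\bz \in \mathcal{Z}$, set $\bx_{\bz} := \bx_0 \circ \bz \circ (-1) \in \Z^{q + h + g + 1}$; the choice of last coordinate $-1$ is what makes the offsets cancel correctly. By the block structure of $\bB_{\rm int}$,
\[
\bB_{\rm int}\bx_{\bz} = \big(2\bB\bx_0 - 2\by\big) \circ \big(\bB_{\rm BCH}\bz - \bs\big) = \big(2(\bB\bx_0 - \by)\big) \circ \big(\bB_{\rm BCH}\bz - \bs\big).
\]
The first block lies in $\{0,2\}^n$ and the second in $\{0,1\}^{h+g}$, so $\bB_{\rm int}\bx_{\bz}$ is a $\{0,1,2\}$-vector, and $\|\bB_{\rm int}\bx_{\bz}\|_p^p = 2^p\|\bB\bx_0-\by\|_p^p + \|\bB_{\rm BCH}\bz - \bs\|_p^p \leqs 2^p t + r$, which gives the claimed $\ell_p$ bound after taking $p$-th roots.

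For the count: $\bB_{\rm BCH}$ is block lower-triangular with invertible diagonal blocks $\Id_h$ and $2Q\,\Id_g$, hence nonsingular, so $\bz \mapsto \bB_{\rm BCH}\bz - \bs$ is injective on $\mathcal{Z}$; therefore the vectors $\{\bB_{\rm int}\bx_{\bz}\}_{\bz\in\mathcal{Z}}$ are pairwise distinct, and there are at least $\tfrac{1}{100}2^{-g}\binom{h}{r}$ of them. Finally I verify $\tfrac{1}{100}2^{-g}\binom{h}{r} \geqs h^r\big(100\, h^{l/2} l^l\big)^{-1}$ using the three elementary facts: $g \leqs \tfrac{l}{2}\log h$ (Theorem~\ref{thm:bch}), hence $2^{-g} \geqs h^{-l/2}$; $\binom{h}{r} \geqs (h/r)^r$; and $r \leqs l$, which follows from the hypothesis $\tfrac12 + \tfrac{1}{2^p} + \tfrac{2^p+1}{\eta} < 1$ of Lemma~\ref{lem:snvp-to-svp} and implies $r^r \leqs l^l$. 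Multiplying these yields the stated lower bound.

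This lemma is essentially bookkeeping on top of Lemma~\ref{lem:bch-lattice}, so there is no genuine obstacle; the only points needing attention are the $-1$ in the last coordinate (so that $2\bB\bx_0 - 2\by$ has entries in $\{0,2\}$ rather than, say, $\{0,2,4\}$) and the inequality $r \leqs l$ needed for the final arithmetic estimate.
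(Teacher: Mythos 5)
Your proposal is correct and follows essentially the same route as the paper's proof: take the YES witness $\bx_0$, combine it with the $\tfrac{1}{100}2^{-g}\binom{h}{r}$ coefficient vectors from Lemma~\ref{lem:bch-lattice} via $\bx_0\circ\bz\circ(-1)$, and bound the count using $g\leqs \tfrac{l}{2}\log h$, $\binom{h}{r}\geqs (h/r)^r$ and $r\leqs l$. Your added details — the injectivity of $\bz\mapsto\bB_{\rm BCH}\bz$ from nonsingularity of the block-triangular gadget, and the explicit derivation of $r\leqs l$ from the hypothesis on $\eta$ — are points the paper leaves implicit, and they check out.
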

\begin{proof}
	Let $\tbx \in \mathbb{Z}^{q}$ be vector such that ${\bf B}\tbx - {\bf y}$ is a $\{0,1\}$-vector of Hamming weight at most $t$. From Lemma \ref{lem:bch-lattice}, there exists at least $2^{-g}{h \choose r}/100$ distinct vectors ${\bf z} \in \mathbb{Z}^{h + g}$ such that ${\bf B}_{\rm BCH}{\bf z} - {\bf s}$ is a $\{0,1\}$-vector of Hamming weight exactly $r$. For each such distinct coefficient vector, consider the vector $\bx = \tbx \circ \bz \circ -1$. By construction, it follows that ${\bf B}_{\rm int}\bx = (2{\bf B}\tbx - 2{\bf y}) \circ({\bf B}_{\rm BCH}{\bf z} - {\bf s})$ is a $\{0,1,2\}$-vector and $\|{\bf B}_{\rm int}\bx\|_p^p = 2^p\|{\bf B}\tbx - {\bf y}\|_p^p + \|{\bf B}_{\rm BCH}{\bf z} - {\bf s}\|_p^p\leqs 2^pt + r$. Since the number of such vectors is at least the number of distinct coefficient vectors $\mathbf{z}$, it can be lower bounded by
	\begin{equation*}
	\frac{1}{100} \cdot 2^{-g}{h \choose r} \geqs \frac{1}{100} \cdot 2^{-\frac{l}{2}\log h} {h \choose r} \geqs \frac{1}{100} \cdot \frac{h^r}{r^r h^{l/2}} \geqs \frac{1}{100} \cdot \frac{h^r}{l^l h^{l/2}}.
	\end{equation*}
	\noindent Finally, observe that each $\bz$ produces different $\bB_{\rm BCH}\bz$ and hence all $\bB_{\rm int}\bv$'s are distinct.
\end{proof}	
	
\begin{lemma}				\label{lem:annoying-vectors}
	Let $(\mathbf{B},\mathbf{y},t)$ be a NO instance, and let $\mathbf{B}_{\rm int}$ be the corresponding intermediate lattice. Then the number of annoying vectors for $\mathbf{B}_{\rm int}$ is at most $10^{-5}{h^{r}}\Big({100 h^{l/2}{l}^{l}}\Big)^{-1}$.
\end{lemma}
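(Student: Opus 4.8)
The plan is to show that, once a handful of structurally clean cases are dispatched, every annoying lattice vector of $\mathbf{B}_{\rm int}$ is an all-even, bounded-entry, small-support vector supported on the first $n+h$ coordinates, and then to count such vectors crudely against the astronomically large factor $\binom{h}{r}$ sitting inside the target bound.

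I would first fix notation. Write a lattice vector as $\mathbf{z}=\mathbf{B}_{\rm int}\mathbf{x}$ with $\mathbf{x}=\mathbf{x}_1\circ\mathbf{u}\circ\mathbf{v}\circ w$, where $\mathbf{x}_1\in\mathbb{Z}^q$, $\mathbf{u}\in\mathbb{Z}^h$, $\mathbf{v}\in\mathbb{Z}^g$ and $w\in\mathbb{Z}$. Using the block forms of $\mathbf{B}_{\rm int}$ and of $\mathbf{B}_{\rm BCH}$, and writing the target vector of Lemma~\ref{lem:bch-lattice} in the form $\mathbf{s}=\mathbf{s}'\circ\mathbf{s}''$ where, as in Khot's construction, $\mathbf{s}'\in\{0,1\}^h$ and $\mathbf{s}''$ is $Q$ times a $\{0,1\}$-vector $\mathbf{t}$, one computes
\begin{equation*}
\mathbf{z}=\bigl(2\mathbf{B}\mathbf{x}_1+2w\mathbf{y}\bigr)\;\circ\;\bigl(\mathbf{u}+w\mathbf{s}'\bigr)\;\circ\;Q\bigl(\mathbf{P}_{\rm BCH}\mathbf{u}+2\mathbf{v}+w\mathbf{t}\bigr).
\end{equation*}
Two facts drive the argument: the first block is always even, and every coordinate of the last block is either $0$ or has magnitude at least $Q=l^{10l}$.

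Next I would eliminate $w$. If $w\neq 0$, the Hamming-norm form of the NO guarantee of Theorem~\ref{thm:CVPmain} (invoked with its gap parameter set to $\eta$, so the NO bound exceeds $\eta t=l$) gives $\|\mathbf{z}\|_0\geqs\|2\mathbf{B}\mathbf{x}_1+2w\mathbf{y}\|_0=\|\mathbf{B}\mathbf{x}_1+w\mathbf{y}\|_0\geqs l$, so $\mathbf{z}$ satisfies Condition~1 of Definition~\ref{def:annoying-vectors} and is not annoying; hence every annoying vector has $w=0$ and equals $2\mathbf{B}\mathbf{x}_1\circ\mathbf{u}\circ Q(\mathbf{P}_{\rm BCH}\mathbf{u}+2\mathbf{v})$. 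I would then split on whether $\mathbf{P}_{\rm BCH}\mathbf{u}\equiv\mathbf{0}\pmod 2$. If it is, then either $\mathbf{u}\bmod 2$ is a nonzero codeword of the $[h,h-g,l]_2$ BCH code, forcing $\|\mathbf{u}\|_0\geqs l$ and hence Condition~1 (not annoying); or $\mathbf{u}$ is all-even, whence $\mathbf{z}$ is all-even, and to additionally violate Condition~3 every coordinate of $\mathbf{z}$ must have magnitude $<l^{10l}=Q$, which (the last block being a multiple of $Q$) forces $\mathbf{P}_{\rm BCH}\mathbf{u}+2\mathbf{v}=\mathbf{0}$, i.e.\ the last block is $\mathbf{0}$ --- the ``surviving'' shape below. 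If instead $\mathbf{P}_{\rm BCH}\mathbf{u}\not\equiv\mathbf{0}\pmod 2$, then $\mathbf{u}$ has an odd coordinate (so $\mathbf{z}$ is not all-even and Conditions~2 and~3 are automatically violated) and the last block is a nonzero multiple of $Q$; here I would argue, exactly as in Khot's analysis~\cite{Khot05} and again using the distance-$l$ property of the BCH code, that $\|\mathbf{z}\|_0\geqs l$, so $\mathbf{z}$ is not annoying. I expect this last sub-case --- bounding the support of a lattice vector that is nontrivial modulo $2$ in the BCH block --- to be the only non-routine point; it is where the combinatorics of BCH codes is genuinely needed, and it is the main obstacle to a self-contained proof.

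By the above, an annoying vector must have $\mathbf{u}$ all-even and last block equal to $\mathbf{0}$; being all-even, it must violate Conditions~2 and~3, so it is supported on the first $n+h$ coordinates, has fewer than $l/2^p$ nonzero coordinates, and all its coordinates have magnitude below $l^{10l}$. The number of such vectors is at most
\begin{equation*}
\binom{n+h}{\lceil l/2^p\rceil}\bigl(2l^{10l}\bigr)^{\lceil l/2^p\rceil}\;\leqs\;\bigl(4h\,l^{10l}\bigr)^{\lceil l/2^p\rceil},
\end{equation*}
using $n+h\leqs 2h$. It then remains to check that this is at most $10^{-5}h^{r}\bigl(100\,h^{l/2}l^{l}\bigr)^{-1}=10^{-7}h^{\,r-l/2}\,l^{-l}=10^{-7}h^{(1/2^p+1/\eta)l}\,l^{-l}$ (recall $r=(1/2+1/2^p+1/\eta)l$): cancelling the common factor $h^{l/2^p}$ (and absorbing the $O(1)$ rounding in $\lceil l/2^p\rceil$), this reduces to an inequality of the form $l^{O(l^2)}\leqs h^{l/\eta}$, which holds with enormous room because $h\geqs(100l)^{100\eta l}$ makes $h^{l/\eta}\geqs(100l)^{100l^2}$ dwarf the left-hand side. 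This completes the plan.
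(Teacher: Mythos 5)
Your overall skeleton (dispatch $w\neq 0$ via the Hamming-norm NO guarantee, reduce to all-even vectors with bounded entries and support, then count crudely against $h^{r}$) is the same as the paper's, and your $w\neq 0$ case, your all-even sub-case, and your final comparison using $h\geqs(100l)^{100\eta l}$ all go through essentially as in the paper. The problem is exactly the sub-case you flag and leave open, $\bP_{\rm BCH}\mathbf{u}\not\equiv\vzero\pmod 2$: the claim you hope to import from Khot there, namely $\|\mathbf{z}\|_0\geqs l$, is not what Khot or this paper prove, and it is false in general. The distance-$l$ property of the BCH code says nothing about the Hamming weight of individual columns of its parity-check matrix; the standard BCH parity-check matrix even has a column of weight $(l-1)/2$ (the column indexed by the field element $1$, each of whose $\nicefrac{(l-1)}{2}$ symbols expands to a single bit). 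Taking $\mathbf{x}_1=\vzero$, $w=0$, $\mathbf{v}=\vzero$ and $\mathbf{u}=\mathbf{e}_i$ for such a column produces the lattice vector $\vzero_n\circ\mathbf{e}_i\circ Q\,\bP_{\rm BCH}\mathbf{e}_i$ of Hamming weight about $l/2<l$, with an odd coordinate and a nonzero $Q$-block, so no weight-$\geqs l$ argument can close this branch, and such vectors escape your final enumeration precisely because they carry a coordinate of magnitude $Q=l^{10l}$, which your ``all coordinates below $l^{10l}$'' count does not cover.

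The paper closes this case by a magnitude argument rather than a weight argument: it uses that an annoying vector has every coordinate of magnitude less than $Q$ (i.e.\ it reads the third condition of Definition~\ref{def:annoying-vectors} as an unconditional $\ell_\infty$ bound on annoying vectors, as in Khot's original definition), so the block $Q(\bP_{\rm BCH}\mathbf{u}+2\mathbf{v})$, whose nonzero entries all have magnitude at least $Q$, must vanish. Hence $\bP_{\rm BCH}\mathbf{u}\equiv\vzero\pmod 2$ is forced for every annoying vector, i.e.\ $\mathbf{u}\bmod 2$ is a BCH codeword, and only then does the distance-$l$ property enter, giving $\geqs l$ odd coordinates and a contradiction with $\|\mathbf{z}\|_0<l$. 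In short, the branch you identify as ``the only non-routine point'' is eliminated by the coordinate-magnitude constraint on annoying vectors, not by any BCH combinatorics, and as written your plan cannot be completed along the route you sketch; once you replace your claimed weight bound in that branch by this magnitude argument, the rest of your proposal coincides with the paper's proof.
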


\begin{proof}
	Consider $\bx \in \mathbb{Z}^{q + h + g + 1}$ such that $\bB_{\rm int}\mathbf{x}$ is an annoying vector. We partition the vector and write it as $\bx = \bx_1 \circ \bx_2 \circ x$ where $\bx_1 \in \Z^q$, $\bx_2 \in \Z^{m + h}$ and $x \in \Z$. Using this, we can express $\bB_{\rm int}\bx$ as $\bB_{\rm int}\mathbf{x} = (2\bB\bx_1 - 2x\cdot\by)\circ(\bB_{\rm BCH}\bx_2 - x\cdot \mathbf{s})$.
	
	Suppose $x \neq 0$. Since $(\bB,\by,t)$ is a NO instance, the vector $[2\bB \enskip {\bf 0} \enskip 2\by]\cdot\bx = 2\bB\bx_1 - 2x\cdot\by$ has Hamming weight at least $l$. This would imply that $\bB_{\rm int}\bx$ cannot by annoying, which is a contradiction.  
	
	Henceforth we can assume that $x = 0$; note that we now have $\bB_{\rm int}\mathbf{x} = (2\bB\bx_1) \circ (\bB_{\rm BCH}\bx_2)$. We further claim that, if $\bB_{\rm int}\mathbf{x}$ is annoying, then all of its coordinates must be even. To see that this is the case, let us consider any $\bx$ such that $\bB_{\rm int}\bx$ has at least one odd coordinate. From $\bB_{\rm int}\mathbf{x} = (2\bB\bx_1) \circ (\bB_{\rm BCH}\bx_2)$, $\bB_{\rm BCH}\bx_2$ must have at least one odd coordinate. Let us further write $\bx_2$ as $\bx_2 = \bw_1 \circ \bw_2$ where $\bw_1 \in \Z^m$ and $\bw_2 \in \Z^h$. Notice that $\bB_{\rm BCH} \bx_2 = \bw_1 \circ (Q(\bP_{\rm BCH}\bw_1 - 2\bw_2))$. Since every coordinate of $\bB_{\rm BCH} \bx_2$ must be less than $Q$ in magnitude, it must be the case that $\bP_{\rm BCH}\bw_1 - 2\bw_2 = \vzero$. In other words, $(\bw_1 \mod 2)$ is a codeword of the BCH code. However, since the code has distance $l$, this means that, if $\bw_1$ has at least one odd coordinate, it must have at least $l$ odd (non-zero) coordinates. This contradicts to our assumption that $\bB_{\rm int}\mathbf{x}$ is annoying.
		 
	Therefore, $\bB_{\rm int}\bx$ must have less than $l/2^p$ non-zero coordinates, all of which are bounded in magnitude by ${l}^{10l}$. Hence, we can bound the total number of annoying vectors by
	\begin{eqnarray*}
	\big(2{l}^{10l} + 1\big)^{l/2^p}{{n + h + g}\choose {\lfloor \frac{l}{2^p} \rfloor}} \leqs (2l)^{10l^2} (n + h + g)^{l/2^p}
	\leqs (2l)^{10l^2} ((1 + l/2^p)h)^{l/2^p}
	\leqs e(2l)^{10l^2}h^{l/2^p}
	\end{eqnarray*}
\noindent where the second-to-last step holds since $g \leqs \frac{l}{2}\log h \leqs \frac{l h}{2^{p+1}}$ and $n \leqs \frac{h}{2^{p+1}}$. On the other hand,
	\begin{equation*}
	\frac{h^{r}}{h^{l/2}{l}^{l}} = \frac{h^{\big(\frac12 + \frac1\eta + \frac{1}{2^p}\big)l}}{h^{l/2}{l}^{l}} = h^{l/2^p} (h/l^\eta)^{l/\eta} \geqs h^{l/2^p} (100l)^{99l^2} \geqs 10^5 \left(e(2l)^{10l^2}h^{l/2^p}\right).
	\end{equation*}
\noindent which follows from our choice of $h$. Combining the bounds completes the proof.
\end{proof}

\subsubsection{The $\svp_{p,\gamma}$-instance}					\label{sec:final-lattice}

Finally, we construct $\bB_{\rm svp}$ from $\bB_{\rm int}$ by adding a random homogeneous constraint as in \cite{Khot05}. For ease of notation, let $N_g$ denote the lower bound on the number of distinct coefficient vectors guaranteed by Lemma \ref{lem:good-vectors} in the YES case. Similarly, let $N_a$ denote the upper bound on the number of annoying vectors as given in Lemma \ref{lem:annoying-vectors}. Combining the two Lemmas we have $N_g \geqs 10^5N_a$, which will be used crucially in the construction and analysis of the final lattice.
\paragraph{Construction of the Final Lattice}: Let $\rho$ be any prime number in{\footnote{Note that the density of primes in this range is at least $1/\log N_g = 1/r\log h$. Therefore, a random sample of size $O(r\log h)$ in this range contains a prime with high probability. Since we can test primality for any $\rho \in \Big[10^{-4}N_g, 10^{-2}N_g\Big]$ in FPT time, this gives us an FPT algorithm to sample such a prime number efficiently .}} $\Big[10^{-4}N_g, 10^{-2}N_g\Big]$. Furthermore, let ${\bf r} \overset{\rm unif}{\sim} [0,\rho-1]^{n + h + g}$ be a uniformly sampled lattice point. We construct ${\bf B}_{\rm svp}$ as 
\begin{align*}
	\bB_{\rm svp} =
	\begin{bmatrix}
		\bB_{\rm int} &  {0}  \\
		D\cdot{\bf r}\bB_{\rm int} &  D\cdot \rho 
	\end{bmatrix}
	\in \Z^{(n + h + g + 1) \times (q + h + g + 2)}.
\end{align*}

\noindent where $D = l^{10l}$. In other words, this can be thought of as adding a random linear constraint to the intermediate lattice. The choice of parameters ensures that with good probability, in the YES case, at least one of the good vectors ${\bf x} \in\mathbb{Z}^{q + h + g + 1}$ evaluates to $0$ modulo $\rho$ on the random constraint, and therefore we can pick $u \in \Z$ such that ${\bf B}_{\rm svp}({{\bf x} \circ u}) = (\bB_{\rm int}\bx) \circ 0$ still has small $\ell_p$ norm. On the other hand, since $N_a \ll N _g$, with good probability, all of annoying vectors evaluate to non-zeros, and hence will contribute a coordinate of magnitude $D = l^{10l}$. This intuition is formalized below.


\begin{proof}[Proof of Lemma~\ref{lem:snvp-to-svp}]
	Let $\bB_{\rm svp}$ be the corresponding final lattice of $(\bB,\by,t)$ as described above. Observe that given the $\snvp_{p,\eta}$-instance $(\bB,\by,t)$, we can construct $\bB_{\rm svp}$ in $t^{O(t)}\cdot{\rm poly}(n,q)$-time.

	Suppose that $(\bB,\by,t)$ is a NO instance and fix a vector ${\bf x}\circ u \in \Z^{q + h + g + 2}$. If $\bx$ is not an annoying vector w.r.t. the intermediate lattice $\bB_{\rm int}$, then it follows that $\bx\circ u$ is also not an annoying vector w.r.t. the final lattice $\bB_{\rm svp}$. Therefore, it suffices to look at vectors $\bx \circ u$ such that $\bx$ is annoying w.r.t. $\bB_{\rm int}$. By construction, $(\bB_{\rm svp}(\bx \circ u))_{(n + h + g + 1)} = 0$ implies that $ {\bf r}\bB_{\rm int}\bx = u\cdot \rho$, which in turn is equivalent to ${\bf r}\bB_{\rm int}\bx \equiv 0 \mod \rho$. Since, $\bx_i < \rho$ for every $i \in [q + h + g + 1]$, this event happens with probability exactly $1/\rho$ over the random choice of $\mathbf{r}$ (for any arbitrary $u \in \mathbb{Z}$). Therefore by union bound, for every choice of annoying vector $\bx$ and coefficient $u$, we have ${\bf r}\bB_{\rm int}\bx \neq u\cdot \rho$, with probability at least $1 - N_a/\rho \geqs 0.9$. And therefore, with probability at least $0.9$, for every annoying $\bx$, the last coordinate of $\bB_{\rm svp}(\bx \circ u)$ has magnitude at least $D$.
	
	Next, suppose that $(\bB,\by,t)$ is a YES instance. We condition on the event that there exists at least $N_g$ good vectors as guaranteed by Lemma \ref{lem:good-vectors}. Consider any two distinct vectors $\bx_1,\bx_2 \in \Z^{q + h + g + 1}$ provided by Lemma \ref{lem:good-vectors}. Since $\bB_{\rm int}\bx_1$ and $\bB_{\rm int}\bx_2$ are distinct $\{0, 1, 2\}$-vectors, they are pairwise independent modulo $\rho > 2$. Therefore, instantiating Lemma 5.8 from \cite{Khot05} with the lower bound on the number of good vectors $N_g$, and our choice of $\rho$, it follows that with probability at least $0.9$, there exists a good vector ${\bf x}$ such that ${\bf r}\bB_{\rm int}\bx \equiv 0 \mod \rho$. Therefore, by union bound, with probability at least $0.8$ (over the randomness of Lemma \ref{lem:good-vectors} and the choice of $\mathbf{r},\rho$), there exists $\bx$ such that, for some $u \in \Z$, $\|\bB_{\rm svp}(\bx \circ u)\|_p^p = \|\bB_{\rm int}\bx\|_p^p \leqs 2^p t + r = \gamma_p^{-1}l$, which concludes the proof.
\end{proof}

\subsection{Gap Amplification for $\svp_{2,\gamma}$}	\label{sec:svp-gap-amp}

As in the case of $\mdp$, we can use tensor product of lattices to boost the hardness of $\svp_{2,\gamma}$ to any constant factor. The tensor product of lattices is defined similarly as in the case of linear codes. Given $\bA \in \Z^{m \times m'}$, let $\mathcal{L}(\bA)$ be the lattice generated by $\bA$. Given matrices $\bA \in \Z^{m \times m'}$ and $\bB \in \Z^{n \times n'}$ the tensor product of $\mathcal{L}(\bf A)$ and $\mathcal{L}(\bB)$ is defined as 
\begin{equation}
\mathcal{L}(\bA) \otimes \mathcal{L}(\bB) = \{ \bA\mathbf{X}\bB^\top | \mathbf{X} \in \Z^{m' \times n'}\}
\end{equation}
For brevity, let $\mathcal{L}_1 = \mathcal{L}(\bf A)$ and $\mathcal{L}_2 = \mathcal{L}(\bB)$. The following lemma from \cite{HR07} summarizes the properties of tensor product instances of $\svp_{2,\gamma}$.

\begin{lemma}				\label{lem:lattice-gap-amp}
	Let $\mathcal{L}_1$ and $\mathcal{L}_2$ be as above. Then   we have that $\lambda_2(\mathcal{L}_1 \otimes \mathcal{L}_2) \leqs \lambda_2(\mathcal{L}_1)\cdot\lambda_2 (\mathcal{L}_2)$. Furthermore, suppose every vector $\by \in \mathcal{L}_1$ satisfies at least one of the following conditions:
	\begin{itemize}
		\item The vector $\by$ has Hamming norm at least $l$.
		\item All coordinates of $\by$ are even and $\by$ has Hamming norm at least $l/4$.
		\item All coordinates of $\by$ are even and $\|\by\|_\infty \geqs l^{4l}$.
	\end{itemize}
	Then, $\lambda_2(\mathcal{L}_1 \otimes \mathcal{L}_2) \geqs l^{1/2}\cdot\lambda_2(\mathcal{L}_2)$
\end{lemma}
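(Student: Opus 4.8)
The plan is to prove the two inequalities in the statement separately; the first is a one–line argument and the second is where all the content sits.

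For the upper bound $\lambda_2(\mathcal{L}_1\otimes\mathcal{L}_2)\leqs\lambda_2(\mathcal{L}_1)\cdot\lambda_2(\mathcal{L}_2)$ I would take shortest vectors $\mathbf{v}\in\mathcal{L}_1$ and $\mathbf{w}\in\mathcal{L}_2$, write $\mathbf{v}=\bA\mathbf{a}$ and $\mathbf{w}=\bB\mathbf{b}$ for integral $\mathbf{a},\mathbf{b}$, and observe that $\mathbf{v}\mathbf{w}^{\top}=\bA(\mathbf{a}\mathbf{b}^{\top})\bB^{\top}$ is a point of $\mathcal{L}_1\otimes\mathcal{L}_2$ whose (Frobenius) $\ell_2$ norm obeys $\|\mathbf{v}\mathbf{w}^{\top}\|_2^2=\sum_{i,j}\mathbf{v}_i^2\mathbf{w}_j^2=\|\mathbf{v}\|_2^2\,\|\mathbf{w}\|_2^2$.

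For the lower bound, fix an arbitrary nonzero point and write it as $\bW=\bA\mathbf{X}\bB^{\top}$, viewed as an $m\times n$ integer matrix. Two structural observations drive the proof: every column of $\bW$ equals $\bA$ times an integer vector and hence lies in $\mathcal{L}_1$, and the transpose of every row of $\bW$ equals $\bB$ times an integer vector and hence lies in $\mathcal{L}_2$. Since $\|\bW\|_2^2$ is the sum of the squared $\ell_2$ norms of the rows of $\bW$, and each nonzero row contributes at least $\lambda_2(\mathcal{L}_2)^2$, it suffices to show that, outside one easy case, $\bW$ has at least $l$ nonzero rows. Let $I$ and $J$ index the nonzero rows and columns of $\bW$. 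A column indexed by $J$ is a nonzero vector of $\mathcal{L}_1$ whose support lies inside the rows in $I$, so its Hamming norm is at most $|I|$; if $|I|\geqs l$ we are done. Otherwise every nonzero column has Hamming norm $<l$, so it fails the first alternative of the hypothesis and satisfies one of the other two — in particular every nonzero column, and therefore all of $\bW$, is even. If moreover some column has an entry of magnitude at least $l^{4l}$ then $\|\bW\|_2\geqs l^{4l}$, which already dominates $l^{1/2}\lambda_2(\mathcal{L}_2)$ in the parameter regime in which this lemma is used (where $\mathcal{L}_2$ is a constant-fold tensor power of the lattice produced by Khot's reduction, Lemma~\ref{lem:snvp-to-svp}). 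In the remaining case every nonzero column is even with Hamming norm at least $l/4$, which forces $|I|\geqs l/4$; combining this lower bound on the number of nonzero rows with the fact that each nonzero entry of the all-even matrix $\bW$ contributes at least $4$ to $\|\bW\|_2^2$ yields, after tracking constants, the desired inequality $\|\bW\|_2\geqs l^{1/2}\lambda_2(\mathcal{L}_2)$.

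The main obstacle is precisely this last (all-even) case: a single sparse column does not by itself force many nonzero rows, so one has to play the Hamming-norm lower bound $l/4$ against the $\ell_2^2$ mass carried by the necessarily even — hence magnitude $\geqs 2$ — entries, and keep the constants straight so that the bound lands at the claimed $l^{1/2}$ factor. This is exactly why the hypothesis uses the relaxed thresholds $l/4$ and $l^{4l}$ rather than the sharper $l/2^{p}$ and $l^{10l}$ of Definition~\ref{def:annoying-vectors}. As the statement is a verbatim restatement of a lemma of Haviv and Regev, I would cite~\cite{HR07} for the complete constant bookkeeping; all we need downstream are the two displayed inequalities, which feed the tensor-power gap amplification behind Theorem~\ref{thm:SVPmain-boost}.
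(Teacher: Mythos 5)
You never actually use membership of the candidate short vector in $\mathcal{L}_1\otimes\mathcal{L}_2$: your lower-bound argument only uses that the columns of $\mathbf{W}$ lie in $\mathcal{L}_1$ and the rows in $\mathcal{L}_2$, and that relaxation is provably too weak to give $\lambda_2(\mathcal{L}_1\otimes\mathcal{L}_2)\geq l^{1/2}\lambda_2(\mathcal{L}_2)$. Concretely, take $\mathcal{L}_1=\mathcal{L}_2=(2l^{4l})\mathbb{Z}\subseteq\mathbb{Z}^1$: every nonzero vector of $\mathcal{L}_1$ is even with $\ell_\infty$ norm at least $l^{4l}$, so the hypothesis holds, and the $1\times 1$ matrix $\mathbf{W}=(2l^{4l})$ has its column in $\mathcal{L}_1$ and its row in $\mathcal{L}_2$, yet $\|\mathbf{W}\|_2=\lambda_2(\mathcal{L}_2)<l^{1/2}\lambda_2(\mathcal{L}_2)$. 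This is not a counterexample to the lemma only because $\mathbf{W}\notin\mathcal{L}_1\otimes\mathcal{L}_2$ (it is not of the form $\bA\mathbf{X}\bB^{\top}$ with integral $\mathbf{X}$) --- which is exactly the information your argument discards. The same example shows the third case cannot be dispatched by ``$l^{4l}$ dominates $l^{1/2}\lambda_2(\mathcal{L}_2)$'': the lemma is stated for an arbitrary $\mathcal{L}_2$, and even in the intended application the NO-case lattices come only with \emph{lower} bounds on $\lambda_2$, while their bases contain entries of order $l^{10l}$, so no upper bound of the form $\lambda_2(\mathcal{L}_2)\ll l^{4l}$ is available; invoking ``the parameter regime in which the lemma is used'' is circular.

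The second even case is not closed by the proposed bookkeeping either. From ``every nonzero column is even with Hamming weight at least $l/4$'' you get at least $l/4$ nonzero rows, hence $\|\mathbf{W}\|_F^2\geq (l/4)\lambda_2(\mathcal{L}_2)^2$, and separately $\|\mathbf{W}\|_F^2\geq 4\cdot(\text{number of nonzero entries})$; these two bounds cannot be multiplied, and recovering the missing factor $4$ would require every even nonzero vector of $\mathcal{L}_2$ to have norm at least $2\lambda_2(\mathcal{L}_2)$, which is false in general (e.g.\ $\mathcal{L}_2=2\mathbb{Z}$: an even lattice vector need not be twice a lattice vector). This lost factor is precisely why plain tensoring of lattices is delicate and why the Haviv--Regev proof is more involved than row/column counting: it genuinely exploits the tensor representation of the short vector. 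Note also that the paper itself does not prove this lemma but imports it verbatim from~\cite{HR07}; your upper-bound paragraph is correct and citing~\cite{HR07} for the lower bound is exactly what the paper does, but the intermediate reasoning you give for the two even cases should not be presented as a proof, since as written it does not establish the stated inequality.
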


It is easy to see that given a $\svp_{2,\gamma}$ instance $(\bA,k)$ (as in Lemma \ref{lem:snvp-to-svp}), the tensor product gives an FPT reduction to $\svp_{2,\gamma^2}$. Indeed, if $(\bA,k)$, then $\lambda_2(\mathcal{L}(\bA) \otimes \mathcal{L}(\bA))^2 \leqs k^2$. Conversely, if $(\bA, k)$ is a NO instance, then setting $l = \gamma k$ in Lemma \ref{lem:lattice-gap-amp}, we get $\lambda_2(\mathcal{L}(\bA)  \otimes \mathcal{L}(\bA))^2 \geqs l^2 = \gamma^2k^2$. Therefore, by repeated tensoring, we have proved Theorem~\ref{thm:SVPmain-boost}.

We remark here that, while there is an analogous statement as in Lemma~\ref{lem:lattice-gap-amp} for $p \ne 2$, the third case requires $\|\by\|_\infty$ to not only be large in terms of $l$, but also in terms of the dimension of the lattice. However, to the best of our knowledge, this can only be achieved when $l$ is polynomial in the dimension. This is indeed the reason we fail to amplify the gap to all constants for $p \ne 2$.
\section{Conclusion and Open Questions}
\label{sec:open}
In this work, we have shown the parameterized inapproximability of $k$-Minimum Distance Problem ($k$-\MDP) and $k$-Shortest Vector Problem ($k$-\SVP) in the $\ell_p$ norm for every $p > 1$ based on the Parameterized Inapproximability Hypothesis (PIH), which in turns follows from the Gap Exponential Time Hypothesis (Gap-ETH). While our results give an evidence of intractability of these problems, there are still many questions that remain open. First and foremost, it is still open whether the hardness of both problems can be based on more standard assumptions, such as ETH or $\W[1] \ne \FPT$. On this front, we would like to note that the only reason we need PIH (or Gap-ETH) is to arrive at the inapproximability of the non-homogeneous variants of the problems (Theorems~\ref{thm:MLDbase} and \ref{thm:CVPmain}), which is needed for us even if we want to only rule out exact FPT algorithms for $k$-\MDP\ and $k$-\SVP. Hence, if one could prove the hardness of approximation for these problems under weaker assumptions, then the inapproximability of $k$-\MDP\ and $k$-\SVP\ would still follow.

Another obvious question from our work is whether $k$-\SVP\ in the $\ell_1$ norm is in FPT. Khot's reduction unfortunately does not work for $\ell_1$; indeed, in the work of Haviv and Regev~\cite{HR07}, they arrive at the hardness of approximating \SVP\ in the $\ell_1$ norm by embedding \SVP\ instances in $\ell_2$ to instances in $\ell_1$ using an earlier result of Regev and Rosen~\cite{RR06}. The Regev-Rosen embedding inherently does not work in the FPT regime either, as it produces non-integral lattices. Similar issue applies to an earlier hardness result for \SVP\ on $\ell_1$ of~\cite{Mic00}, whose reduction produces irrational bases.

An additional question regarding $k$-\SVP\ is whether we can prove hardness of approximation for \emph{every} constant factor for $p \ne 2$. As described earlier, the gap amplification techniques of~\cite{Khot05,HR07} require the distance $k$ to be dependent on the input size $nm$, and hence are not applicable for us. To the best of our knowledge, it is unknown whether this dependency is necessary. If they are indeed required, it would also be interesting to see whether other different techniques that work for our settings can be utilized for gap amplification instead of those from~\cite{Khot05,HR07}.

Furthermore, the Minimum Distance Problem can be defined for linear codes in $\mathbb{F}_p$ for any larger field of size $p > 2$ as well. It turns out that our result does not rule out FPT algorithms for $k$-\MDP\ over $\mathbb{F}_p$ with $p > 2$. The issue here is that, in our proof of existence of Sparse Covering Codes (Lemma~\ref{lem:ldc}), we need the co-dimension of the code to be small compared to its distance. In particular, the co-dimension $h - m$ has to be at most $(d/2 + O(1))\log_p h$ where $d$ is the distance. While the BCH code over binary alphabet satisfies this property, we are not aware of any linear codes that satisfy this for larger fields. It is an intriguing open question to determine whether such codes exist, or whether the reduction can be made to work without existence of such codes.

Since the current reductions for both $k$-\MDP\ and $k$-\SVP\ are randomized, it is still an intriguing open question whether we can find deterministic reductions from PIH (or Gap-ETH) to these problems. As stated in the introduction, even in the non-parameterized setting, \NP-hardness of \SVP\ through deterministic reductions is not known. On the other hand, \MDP\ is known to be \NP-hard even to approximate under deterministic reductions; in fact, even the Dumer~\etal's reduction~\cite{DMS03} that we employ can be derandomized, as long as one has a deterministic construction for Locally Dense Codes~\cite{CW12,Mic14}. In our settings, if one can deterministically construct Sparse Covering Codes (i.e. derandomize Lemma~\ref{lem:ldc}), then we would also get a deterministic reduction for $k$-\MDP.

Finally, another interesting research direction is to attempt to prove more concrete running time lower bounds for $k$-\MDP\ and $k$-\SVP. For instance, it is easy to see that $k$-\MDP\ can be solved (exactly) in $N^{O(k)}$ time, where $N = nm$ is the input size. On the other hand, while not stated explicitly above, it is simple to check that our proof implies that $k$-\MDP\  cannot be solved (or even approximated) in time $N^{o(k^c)}$ for some small constant $c > 0$, assuming Gap-ETH. Would it be possible to improve this running time lower bound to the tight $N^{o(k)}$? Similar questions also apply to $k$-\SVP.

\subsection*{Acknowledgment}

We are grateful to Ishay Haviv for providing insights on how the gap amplification for $p \ne 2$ from~\cite{HR07} works. Pasin would like to thank Danupon Nanongkai for introducing him to the $k$-Even Set problem and for subsequent useful discussions.
\bibliographystyle{alpha}
\bibliography{main}

\appendix
\section{A Simple Proof of PIH from Gap-ETH} \label{app:gap-eth}

In this section, we provide a simple proof of PIH assuming Gap-ETH. The proof is folklore and is known in the community, although we are not aware of it being fully written down anywhere. The main technical ingredient is the following subexponential time reduction from 3SAT to 2CSP:

\begin{lemma}
\label{red:gap-eth}
For any $\varepsilon > 0$ and $\Delta, k \in \N$ such that $k \geqs 2$, there exists a randomized reduction that, given any 3SAT instance $\Phi$ on $n$ variables where each variable appears in at most $\Delta$ clauses, produces a 2CSP instance $\Gamma = (G = (V, E), \Sigma, \{C_{uv}\}_{(u, v) \in E})$ such that
\begin{itemize}
\item (YES) If $\sat(\Phi) = 1$, then $\val(\Gamma) = 1$.
\item (NO) If $\sat(\Phi) < 1 - \varepsilon$, then $\val(\Gamma) < 1 - \frac{\varepsilon}{3000 \Delta^4}$ with high probability.
\item (Running Time) The reduction runs in time $2^{O_{\Delta,\varepsilon}(n/k)}$, and,
\item (Parameter) The number of variables of $\Gamma$ is equal to $k$, i.e., $|V| = k$.
\end{itemize}
\end{lemma}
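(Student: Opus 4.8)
I would run a two-stage reduction. \emph{Stage 1} converts the bounded-occurrence Gap-3SAT instance $\Phi$ into an auxiliary 2CSP $\Gamma_0$ on $\Theta_\Delta(n)$ variables of bounded degree, via the standard clause-variable incidence gadget: introduce one vertex per clause (with alphabet the set of $\le 7$ partial assignments satisfying that clause) and one vertex per variable (with alphabet $\{0,1\}$), connect each clause to its $\le 3$ variables, and put a consistency constraint on each edge. After discarding variables occurring in no clause (which affects neither $\sat(\Phi)$ nor anything we care about) we have $n/3\le m\le \Delta n/3$, so $\Gamma_0$ has $N:=n+m=\Theta_\Delta(n)$ vertices, maximum degree $\le\max(\Delta,3)$, and $\le 3m$ edges. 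Completeness is immediate. For soundness, an assignment of $\Gamma_0$ restricts to a Boolean assignment $a$ of $\Phi$; if $\sat(\Phi)<1-\varepsilon$ then $a$ fails $>\varepsilon m$ clauses, and each failed clause $C$ forces a violated edge incident to $C$ (its clause-vertex carries a satisfying assignment of $C$, hence one that disagrees with $a$ on some variable of $C$), yielding $>\varepsilon m$ distinct violated edges out of $\le 3m$; so $\Gamma_0$ has completeness $1$ and soundness $<1-\varepsilon/3$.

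\textbf{Compressing to $k$ variables.} \emph{Stage 2} collapses the $N$ variables of $\Gamma_0$ down to exactly $k$: partition $V(G_0)$ into $k$ nearly-equal blocks $P_1,\dots,P_k$, and let $\Gamma$ have vertex set $[k]$, where vertex $i$ ranges over all assignments to $P_i$ (so the alphabet has size $\le 7^{\lceil N/k\rceil}=2^{O_\Delta(n/k)}$). For each edge of $G_0$ one puts the corresponding projected constraint between the two blocks containing its endpoints, folding an intra-block edge into a constraint incident to that block. Completeness is preserved (re-partition a global satisfying assignment), and conversely any assignment of $\Gamma$ re-expands to an assignment of $\Gamma_0$, so the $>\varepsilon m$ violated $G_0$-edges translate into a fraction $1/\poly(\Delta)$ of violated constraints of $\Gamma$, i.e. $\val(\Gamma)<1-\varepsilon/(3000\Delta^4)$ in the NO case. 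The reduction only writes down $\le\binom{k}{2}$ constraint tables of size $2^{O_\Delta(n/k)}$ each, hence runs in $2^{O_{\Delta,\varepsilon}(n/k)}$ time, and produces an instance with $|V|=k$.

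\textbf{Main obstacle.} The one non-routine point is ensuring that Stage 2 loses only a $k$-independent (hence $\poly(\Delta)$) factor in the gap while producing a genuine 2CSP whose constraint graph is a \emph{simple} graph on $k$ vertices. If parallel constraints are allowed (a weighted/multigraph 2CSP) the gap is preserved exactly; but collapsing the parallel constraints between a pair of blocks into their conjunction can, in principle, let an adversarial assignment — chosen after the partition is revealed — concentrate all of its violations into a few super-constraints. Ruling this out requires choosing the partition so that no pair of blocks carries an abnormally large number of $G_0$-edges relative to the average; this is a balanced-partition/concentration argument (the source of the ``with high probability'' and of the explicit $\poly(\Delta)$ in the statement), and it is exactly here that the bounded-degree guarantee from Stage 1 is used. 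I expect this gap-preservation accounting to be the crux; completeness, the running-time bound, and the bound on the new parameter are all straightforward bookkeeping, and (as noted in the lemma's hypotheses) one takes the bounded-occurrence version of Gap-3SAT as given.
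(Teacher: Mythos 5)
Your route is genuinely different from the paper's. The paper partitions the \emph{clauses} of $\Phi$ directly into $k$ groups, lets the alphabet of group $i$ be the set of partial assignments satisfying all clauses of that group, and places consistency constraints on every pair of groups; since one variable typically occurs in many groups, soundness there needs a decoding step (counting ``globally consistent'' variables and extracting a global assignment), and all the probabilistic content sits in Proposition~\ref{prop:appprob}: w.h.p.\ every two groups share at most $1000 n\Delta^3/k^2$ variables. Your two-stage version (clause--variable incidence 2CSP $\Gamma_0$, then compression by partitioning the \emph{vertices} of $\Gamma_0$ into $k$ blocks) buys a trivial decoding --- each original variable lies in exactly one block, so a $\Gamma$-assignment re-expands unambiguously and every violated inter-block edge of $G_0$ violates the super-constraint containing it --- but shifts all the work onto the claim that no pair of blocks carries more than $O_\Delta(n/k^2)$ edges of $G_0$. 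So the two proofs are powered by the same kind of pairwise load/intersection bound; yours just applies it to edges of a bounded-degree graph rather than to variable sets of clause groups. (Minor wrinkle: rather than ``folding an intra-block edge into a constraint incident to that block,'' it is cleaner to restrict each block's alphabet to assignments satisfying its internal constraints, exactly as the paper restricts $\Sigma_i$ to assignments satisfying all clauses of $S_i$; otherwise the folded edges must also be counted in the load bound.)

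The gap is that you name this balanced-partition claim as the crux but do not prove it, and it is precisely where the ``with high probability'' and the $\Delta^{O(1)}$ loss in the statement live; without it the bound $1-\varepsilon/(3000\Delta^4)$ is unsubstantiated, since an adversarial assignment could otherwise concentrate its $>\varepsilon m$ violated $G_0$-edges into $o(\varepsilon k^2/\Delta^{O(1)})$ block pairs. The claim is true and can be proved by the same device the paper uses for Proposition~\ref{prop:appprob}: for a uniformly random equipartition the expected number of $G_0$-edges between a fixed pair of blocks is $O(\Delta n/k^2)$; to beat the dependencies among edges sharing endpoints, observe that any collection of $\ell$ edges in a graph of maximum degree $\max(\Delta,3)$ contains a matching of size at least $\ell/(2\max(\Delta,3)-1)$, that for matching edges the events ``this edge goes between $P_i$ and $P_j$'' are negatively correlated with probability $O(1/k^2)$ each, and then take a union bound over matchings and over the $\binom{k}{2}$ pairs. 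As in the paper's proof, this only yields $o(1)$ failure probability when $n/k^2$ is sufficiently large, i.e.\ in the regime the lemma is actually used. With that lemma supplied, your accounting does give soundness $1-\varepsilon/(c\,\Delta^{O(1)})$ of the required form; the remaining items (completeness, running time $2^{O_{\Delta,\varepsilon}(n/k)}$, $|V|=k$) are, as you say, routine.
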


Note that, for Gap-ETH, it can be assumed without loss of generality that each variable of the 3CNF formula appears in at most $O(1)$ number of clauses (see e.g.~\cite[page 21]{MR16}). Observe that, since the reduction runs in time $2^{O(n/k)}$; $\Gamma$ is of size at most $2^{O(n/k)}$. This yields the following corollary, which in fact provides a running time lower bound which is even stronger than PIH:

\begin{corollary}
Assuming Gap-ETH, there exists $\varepsilon > 0$ such that, for any computable function $T$, there is no $T(k) \cdot |\Sigma|^{o(k)}$ algorithm for $\csp_{\varepsilon}$.
\end{corollary}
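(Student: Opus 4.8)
The plan is to obtain the corollary as an almost immediate consequence of Lemma~\ref{red:gap-eth}, the only real work being a careful choice of the parameter $k$ as a slowly growing function of the 3SAT input size $n$. First recall the standard fact (used e.g. in~\cite[page~21]{MR16}) that Gap-ETH may be assumed in the following bounded-occurrence form: there are constants $\varepsilon_0,\delta_0>0$ and $\Delta_0\in\N$ such that no randomized algorithm running in time $2^{\delta_0 n}$ can, given a 3CNF formula $\Phi$ on $n$ variables in which every variable occurs in at most $\Delta_0$ clauses, distinguish $\sat(\Phi)=1$ from $\sat(\Phi)<1-\varepsilon_0$. Accordingly, set $\varepsilon:=\varepsilon_0/(3000\Delta_0^4)$, the gap constant produced by Lemma~\ref{red:gap-eth} when invoked with error parameter $\varepsilon_0$ and degree bound $\Delta_0$; since $\Delta_0,\varepsilon_0$ are now fixed constants, that lemma's running time $2^{O_{\Delta,\varepsilon}(n/k)}$ and alphabet size become $2^{O(n/k)}$, say bounded by $2^{Cn/k}$ for a constant $C$.

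Suppose, for contradiction, that for this $\varepsilon$ there is a computable function $T$ and an algorithm $\cA$ solving $\csp_\varepsilon$ in time $T(k)\cdot|\Sigma|^{g(k)}$ for some function $g$ with $g(k)/k\to 0$; without loss of generality $T$ is nondecreasing. For each $n$ let $k(n)$ be any integer with $2\leqs k(n)\leqs n$, $k(n)\to\infty$ as $n\to\infty$, and $T(k(n))\leqs 2^{\delta_0 n/4}$; such a choice exists because $T$ is computable (for every fixed $k_0$ we have $T(k_0)\leqs 2^{\delta_0 n/4}$ once $n$ is large, so we may let $k(n)$ grow). Since $g(k)/k\to 0$, we also have $g(k(n))/k(n)\leqs \delta_0/(4C)$ for all large $n$.

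Now consider the following randomized algorithm $\cB$ for bounded-occurrence gap-3SAT: on input $\Phi$ with $n$ variables, run the reduction of Lemma~\ref{red:gap-eth} with parameters $\varepsilon_0,\Delta_0,k(n)$ to obtain a 2CSP instance $\Gamma$ with $|V|=k(n)$ and $|\Sigma|\leqs 2^{Cn/k(n)}$, then run $\cA(\Gamma)$ and output its answer. Correctness follows from the two cases of the lemma: if $\sat(\Phi)=1$ then $\val(\Gamma)=1$ and $\cA$ accepts, while if $\sat(\Phi)<1-\varepsilon_0$ then with high probability $\val(\Gamma)<1-\varepsilon$ and $\cA$ rejects; thus $\cB$ distinguishes the two cases with one-sided error, which can be amplified to success probability $2/3$ by a constant number of repetitions (matching the form of Hypothesis~\ref{hyp:gap-eth}). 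For the running time, building $\Gamma$ costs $2^{Cn/k(n)}=2^{o(n)}$ since $k(n)\to\infty$, and $\cA(\Gamma)$ costs at most $T(k(n))\cdot|\Sigma|^{g(k(n))}\leqs 2^{\delta_0 n/4}\cdot 2^{(Cn/k(n))\,g(k(n))}\leqs 2^{\delta_0 n/4}\cdot 2^{\delta_0 n/4}$, so $\cB$ runs in time $2^{\delta_0 n/2+o(n)}\leqs 2^{\delta_0 n}$ for all large $n$ --- contradicting the bounded-occurrence form of Gap-ETH.

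I expect essentially all the content to be packaged inside Lemma~\ref{red:gap-eth}; the remaining argument is bookkeeping, and the single point that needs a little care is the simultaneous requirement that $k(n)\to\infty$, that $T(k(n))$ stays subexponential in $n$, and that $(n/k(n))\cdot g(k(n))=o(n)$ --- it is exactly here that one uses the hypothesis $g(k)=o(k)$ rather than merely $g(k)<k$. I would close with the remark that this statement indeed strengthens PIH: any (randomized) FPT algorithm for $\csp_\varepsilon$ runs in time $f(k)\cdot|\Gamma|^{O(1)}$, and since $|\Gamma|\leqs \poly(k,|\Sigma|)$ this is at most $T(k)\cdot|\Sigma|^{o(k)}$ for a suitable computable $T$; and that the bound is essentially tight because $\csp_\varepsilon$ can be decided exactly in time $|\Sigma|^{O(k)}$ by brute force over all assignments.
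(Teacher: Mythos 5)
Your proposal is correct and follows exactly the route the paper intends: the paper derives the corollary directly from Lemma~\ref{red:gap-eth} (after passing to the bounded-occurrence form of Gap-ETH and noting $|\Sigma|\leqs 2^{O(n/k)}$), and your argument just spells out the standard parameter-choice bookkeeping — letting $k(n)\to\infty$ slowly enough that $T(k(n))$ stays subexponential while $(n/k(n))\cdot g(k(n))=o(n)$ — which the paper leaves implicit. No substantive difference in approach.
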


We note here that the running time lower bound in the above lemma is asymptotically optimal, because we can solve 2CSP exactly in $\poly(k) \cdot |\Sigma|^k$ time by just enumerating all possible assignments and outputting the one that satisfies the maximum number of constraints.

We now turn our attention to the proof of Lemma~\ref{red:gap-eth}. 

\begin{proof}[Proof of Lemma~\ref{red:gap-eth}]
The reduction is simple and is in fact a special case of the reduction from~\cite{DM18}. Let $\cX$ and $\cC$ denote the set of variables and the set of clauses of $\Phi$ respectively. We assume w.l.o.g. that each variable appears in at least one clause. For each $C \in \cC$, we use $\var(C)$ to denote the variables appearing in $C$. Furthermore, for each $S \subseteq \cC$, let $\var(S) = \cup_{C \in S} \var(S)$. For convenience, we will also assume that $m$ is divisible by $k$ where $m = |\cC|$ is the number of clauses.

Randomly partition $\cC$ into $k$ parts $S_1, \dots, S_k$ each of size $m/k$. We can then define $\Gamma$ as follows:
\begin{itemize}
\item The vertex set $V$ is $[k]$ and the constraint graph $G$ is the complete graph over $V$.
\item Let $\Sigma = 2^{3(m/k)}$. For each $i \in [k]$, we associate elements of $\Sigma$ with the partial assignments $f_i: \var(S_i) \to \{0, 1\}$ that satisfies all clauses in $S_i$. In other words, we view $\Sigma$ as $\Sigma_i = \{f_i: \var(S_i) \to \{0, 1\} \mid f_i \text{ satisfies all } C \in S_i\}$.
\item For each edge $\{i, j\} \subseteq [k]$, we define the constraint $C_{ij}$ to contain all $f_i \in \Sigma_i$ and $f_j \in \Sigma_j$ that agree with each other, i.e., $C_{ij} = \{(f_i, f_j) \in \Sigma_i \times \Sigma_j \mid \forall x \in \var(S_i) \cap \var(S_j), f_i(x) = f_j(x)\}$.
\end{itemize}

It is clear that the reduction runs in $2^{O(m/k)} = 2^{O(\Delta n / (\varepsilon k))}$ time and $|V| = k$. Further, in the YES case where there exists an assignment $f: \cX \to \{0, 1\}$ that satisfies all the clauses of $\Phi$, we can simply pick $f_i$ to be $f|_{\var(S_i)}$, the restriction of $f$ on $\var(S_i)$. Clearly, $\psi$ is a valid assignment for $\Gamma$ and it satisfies all the constraints. Hence, $\val(\Gamma) = 1$.

To show the NO case of the reduction, we will need an additional property of a random partition $S_1, \cdots, S_k$, which is formally stated below and proved later:

\begin{proposition}\label{prop:appprob}
With high probability, for every $i \ne j$, we have $\var(S_i) \cap \var(S_j) < \frac{1000n\Delta^3}{k^2}$.
\end{proposition}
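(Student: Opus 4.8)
The plan is to fix a pair $i \ne j \in [k]$, bound $\Pr[|\var(S_i) \cap \var(S_j)| \geqs \frac{1000 n \Delta^3}{k^2}]$ by something like $2^{-\Omega(n\Delta^3/k^2)}$ or at least $o(1/k^2)$, and then take a union bound over the at most $k^2$ pairs. The key observation is that a variable $x$ lies in $\var(S_i) \cap \var(S_j)$ only if at least one of the (at most $\Delta$) clauses containing $x$ is placed in $S_i$ \emph{and} at least one other such clause is placed in $S_j$. Since $x$ appears in at most $\Delta$ clauses, the probability (over the random equipartition) that the clauses touching $x$ hit both $S_i$ and $S_j$ is at most $\binom{\Delta}{2} \cdot (m/k)/(m) \cdot (m/k)/(m-1) \leqs \Delta^2/k^2$ (roughly; one should be slightly careful that the partition is into equal-size blocks rather than i.i.d., but the block sizes $m/k$ make the ``a given clause lands in $S_i$'' event have probability $\approx 1/k$, and negative association handles the joint bound). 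Hence $\E[|\var(S_i)\cap\var(S_j)|] \leqs n\Delta^2/k^2$.

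First I would make this expectation bound rigorous, then I would establish concentration. The cleanest route is to expose the randomness as the assignment of each clause to one of the $k$ blocks and apply a bounded-differences / Azuma–Hoeffding inequality, or a Chernoff-type bound for the equipartition model; the indicator random variables $\ind[x \in \var(S_i)\cap\var(S_j)]$ are not independent, but changing the block of a single clause $C$ changes $|\var(S_i)\cap\var(S_j)|$ by at most $|\var(C)| \leqs 3$, and there are $m \leqs \Delta n/\varepsilon$ clauses, so a martingale bound gives a deviation of $\Theta(n\Delta^2/k^2)$ above the mean with probability at most $\exp(-\Omega((n\Delta^2/k^2)^2 / (9m))) = \exp(-\Omega(n \Delta^3 / (\varepsilon k^4)))$ — which unfortunately is not quite small enough to beat the $k^2$ union bound when $k$ is large relative to $n$. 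So instead I would use the fact that the $n$ variable-indicators are \emph{negatively associated} under the equipartition (standard for balls-in-bins with fixed bin sizes), which lets me apply a multiplicative Chernoff bound directly: $\Pr[|\var(S_i)\cap\var(S_j)| \geqs (1+\lambda)\E] \leqs \exp(-\Omega(\lambda \E))$ for $\lambda \geqs 1$, giving failure probability $\exp(-\Omega(n\Delta^2/k^2))$ for the threshold $\frac{1000n\Delta^3}{k^2}$ (here the extra factor $\Delta$ in the threshold supplies the slack $\lambda = \Theta(\Delta)$). Since $\exp(-\Omega(n\Delta^2/k^2)) \cdot k^2 = o(1)$ whenever $k$ is, say, at most $n^{0.9}$ — and for the application $k$ is a fixed parameter while $n \to \infty$ — the union bound over all $\binom{k}{2}$ pairs succeeds.

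The main obstacle I anticipate is purely the concentration step: handling the dependencies introduced by insisting each block $S_\ell$ have \emph{exactly} $m/k$ clauses rather than being formed by independent coin flips. The expectation computation and the union bound are routine; the work is in either (a) citing/proving negative association for the equipartition model and invoking a Chernoff bound for NA random variables, or (b) passing to the easier Poissonized / independent-assignment model, proving concentration there, and transferring back (which costs only a $\poly(k)$ or $e^{O(\sqrt m)}$ factor that is absorbed). I would go with route (a) since it is cleanest and the negative-association toolbox for occupancy problems is standard. One should also double-check the constant $1000$ is comfortably large enough to absorb all the $O(\cdot)$'s in the Chernoff exponent and in the $\binom{\Delta}{2}$ versus $\Delta^2$ slippage — but since the statement only needs \emph{some} absolute constant, this is not a real difficulty.
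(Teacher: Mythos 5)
Your overall shape---fix a pair $i \ne j$, bound the upper tail of $|\var(S_i)\cap\var(S_j)|$ by roughly $2^{-\Omega(n\Delta^2/k^2)}$, and union bound over the $\binom{k}{2}$ pairs---matches the paper, and your bound of $O(n\Delta^2/k^2)$ on the expected intersection size is fine. The gap is in the concentration step, and it is exactly the point where the paper works hardest. Your route (a) asserts that the indicators $\ind[x\in\var(S_i)\cap\var(S_j)]$, $x\in\cX$, are negatively associated, calling this ``standard for balls-in-bins with fixed bin sizes.'' This is false: two variables that appear in a common clause have positively correlated indicators (if $x$ and $x'$ appear in exactly the same clauses, the indicators are identical), so no NA-based Chernoff bound applies to the whole family; the standard negative-association facts concern bin occupancy counts or the events ``ball $b$ lands in bin $j$,'' not events aggregated over overlapping sets of clauses. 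Moreover, even for clause-disjoint variables, your event involves two distinct blocks, and in the exact equipartition model the clause-placement indicators across two different blocks are positively correlated: for $C \ne C'$ one has $\Pr[C'\in S_j \mid C \in S_i] = \frac{m/k}{m-1} > \frac{m/k}{m}$. So the underlying family you would like to invoke negative association for is itself not NA, and your route (b) (Poissonization/independent placement) removes only this second issue, not the shared-clause correlations.

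The paper avoids concentration inequalities altogether. It first extracts, from any set of $\ell = \frac{1000n\Delta^3}{k^2}$ variables, a subset of size $r = \lceil \ell/(2\Delta+1)\rceil$ no two of which share a clause (a greedy argument), so it suffices to rule out a clause-disjoint $r$-subset inside $\var(S_i)\cap\var(S_j)$. It then union bounds over all $\binom{n}{r}$ such subsets $T$, bounding $\Pr[T\subseteq\var(S_i)\cap\var(S_j)] \leq (\Delta/(k-1))^{2r}$ via two explicitly stated negative-correlation facts (between the containment events for blocks $i$ and $j$, and, within a single block, among the events $x\in\var(S_i)$ for clause-disjoint $x$), giving $\binom{n}{r}(\Delta/(k-1))^{2r} \leq (1/2)^r$. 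The independent-subset extraction is precisely the device that neutralizes the positive correlations your NA claim overlooks; to repair your argument you would have to reintroduce it in some form (e.g., partition $\cX$ into $O(\Delta)$ clause-disjoint classes, or bound an $r$-th moment restricted to clause-disjoint tuples), and you would still need a separate argument, as the paper supplies, for why the $S_i$- and $S_j$-containment events do not reinforce each other. As written, the key step of your proposal does not go through.
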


Now, conditioned on $\var(S_i) \cap \var(S_j) < \frac{1000n\Delta^3}{k^2}$ for every $i \ne j \in [k]$, we will prove the NO case by showing the contrapositive of the statement. Suppose that there exists an assignment $\psi: V \to \Sigma$ such that $\val(\psi) \geqs 1 - \varepsilon'$ where $\varepsilon' = \frac{\varepsilon}{3000 \Delta^4}$. For notational convenience, let $f_i = \psi(i)$. Observe that $\val(\psi) \geqs 1 - \varepsilon'$ is simply equivalent to $\Pr_{\{i, j\} \subseteq [k]}[f_i \nsim f_j] \leqs \varepsilon'$ where $f_i \nsim f_j$ denote $f_i(x) \ne f_j(x)$ for some $x \in \var(S_i) \cap \var(S_j)$.

We call a variable $x \in \cX$ \emph{globally consistent} if $f_i(x)$ is equal for every $i \in [k]$ such that $\var(S_i)$ contains $x$. We claim that the number of globally consistent variables is at least $(1 - 1000\varepsilon'\Delta^3)n$. To see this, let us count the number of $(x, \{i, j\}) \in \cX \times \binom{[k]}{2}$ such that $x \in \var(S_i) \cap \var(S_j)$ and $f_i(x) \ne f_j(x)$. Notice that such tuple exists only for $\{i, j\}$ such that $f_i \nsim f_j$, and there are only $\varepsilon' \binom{k}{2}$ such pairs $\{i, j\}$'s. Recall also that $\var(S_i) \cap \var(S_j) \leqs \frac{1000n\Delta^3}{k^2}$ for any $i \ne j \in [k]$. As a result, the number of such $(x, \{i, j\})$'s is at most $\varepsilon' \binom{k}{2} \cdot \frac{1000n\Delta^3}{k^2} \leqs 1000 \varepsilon' \Delta^3 n$. On the other hand, each variable $x \in \cX$ that is not globally consistent must contribute to at least one such pair; hence, at most $1000\varepsilon'\Delta^3 n$ variables are not globally consistent.

Let us define a global assignment $f: \cX \to \{0, 1\}$ as follows: for each $x \in \cX$, pick an arbitrary partition $S_i$ such that $x \in \var(S_i)$ and let $f(x) = f_i(x)$. Observe that, for every clause $C$ such that all its variables are globally consistent, $C$ is satisfied by $f$; this is simply because, if $C$ is in partition $S_j$, then $f(x) = f_j(x)$ for all $x \in \var(C)$ due to global consistency of $x$ and $f_j$ must satisfy $C$ from our definition of $\Gamma$. As a result, $f$ satisfies all but at most $1000\varepsilon'\Delta^4 n \leqs 3000\varepsilon'\Delta^4 m$ clauses, which means that $\val(\Phi) \geqs 1 - 3000\varepsilon'\Delta^4 = 1 - \varepsilon$ as desired.
\end{proof}

\begin{proof}[Proof of Proposition~\ref{prop:appprob}]
For convenience, let $\ell = \frac{1000n\Delta^3}{k^2}$.
Let us fix a pair $i \ne j$. We will show that $\var(S_i) \cap \var(S_j) < \ell$ with high probability; taking union bound over all $\{i, j\}$'s yields the desired statement.

To bound $\Pr[\var(S_i) \cap \var(S_j) < \ell]$, let us define one more notation: we call a set $T \subseteq \cX$ of variables \emph{independent} if no two variables in $T$ appears in the same clause. We claim that, for any subset $T'$ of at least $\ell$ variables, there exists an independent subset $T \subseteq T'$ of at least $\lceil \ell / (2\Delta + 1) \rceil$ variables. This is true because we can use the following algorithm to find $T$: as long as $T'$ is not empty, pick an arbitrary variable $x \in T'$, put $x$ into $T$, and remove $x$ and all variables that share at least one clause with $x$ from $T'$. Since each time we add an element to $T$, we remove at most $2\Delta + 1$ elements from $T'$; we can conclude that $|T'| \geqs \lceil \ell / (2\Delta + 1) \rceil$ as desired.

Hence, to show that $\var(S_i) \cap \var(S_j) < \ell$ w.h.p., it suffices to show that $\var(S_i) \cap \var(S_j)$ does not contain an independent set of variable of size $\lceil \ell / (2\Delta + 1) \rceil$ w.h.p.

Let $r = \lceil \ell / (2\Delta + 1) \rceil$; notice that $r \geqs \frac{100 n \Delta^2}{k^2}$. Consider any independent subset of variables $T \subseteq \cX$ of size $r$. We will upper bound the probability that $T$ is contained in $\var(S_i) \cap \var(S_j)$. To do so, first observe that the events $T \subseteq \var(S_i)$ and $T \subseteq \var(S_j)$ are negatively correlated, i.e., 
\begin{align} \label{eq:tmp1}
\Pr[T \subseteq \var(S_i) \cap \var(S_j)] \leqs \Pr[T \subseteq \var(S_i)]\Pr[T \subseteq \var(S_j)] = \left(\Pr[T \subseteq \var(S_i)]\right)^2.
\end{align}

Observe that since two distinct $x, x' \in T$ do not appear in the same clause, the events $x \in \var(S_i)$ and $x' \in \var(S_i)$ are negatively correlated. That is, if $T = \{x_{p_1}, \dots, x_{p_r}\}$, then we can bound $\Pr[T \subseteq \var(S_i)]$ as follows:
\begin{align} \label{eq:tmp2}
\Pr[T \subseteq \var(S_i)] = \prod_{j \in [r]} \Pr[x_{p_j} \in S_i \mid x_{p_1}, \dots, x_{p_{j - 1}} \in S_i] \leqs \prod_{j \in [r]} \Pr[x_{p_j} \in S_i].
\end{align}

Now, consider any $x \in T$. Since $x$ is contained in at most $\Delta$ clauses, we have 
\begin{align} \label{eq:tmp3}
\Pr[x \in S_i] \leqs 1 - \frac{\binom{m - \Delta}{m/k}}{\binom{m}{m/k}} \leqs 1 - \left(1 - \frac{\Delta}{m - m/k}\right)^{m/k} \leqs \frac{\Delta(m/k)}{m - m/k} = \frac{\Delta}{k - 1}.
\end{align}
where the last inequality follows from Bernoulli's inequality.

Combining (\ref{eq:tmp1}), (\ref{eq:tmp2}) and (\ref{eq:tmp3}), we have
\begin{align*}
\Pr[T \subseteq \var(S_i) \cap \var(S_j)] \leqs \left(\frac{\Delta}{k - 1}\right)^{2r}.
\end{align*}

Finally, by taking union bound over all $r$-size independent sets of variables, the probability that $\var(S_i) \cap \var(S_j)$ contains an $r$-size independent sets of variables is at most
\begin{align*}
\binom{n}{r} \left(\frac{\Delta}{k - 1}\right)^{2r} \leqs \left(\frac{e n}{r}\right)^{r} \left(\frac{\Delta}{k - 1}\right)^{2r} = \left(\frac{e n \Delta^2}{r(k - 1)^2}\right)^{r} \leqs (1/2)^r = o(1),
\end{align*}  
which concludes our proof.
\end{proof}
\newcommand{\bC}{\mathbf{C}}
\section{FPT Inapproximability of $\snvp$}			\label{sec:csp-to-snvp}

In this section, we will prove the inapproximability of $\snvp$ (Theorem~\ref{thm:CVPmain}). The reduction goes through the following intermediate problem $\lvec$ (analogous to the $\gapvec$):

\begin{framed}
	$\eta$-Gap Lattice Vector Sum ($\lvec_{\eta}$)
	
	{\bf Input: } A matrix $\bA \in \Z^{m \times n}$, a vector $\by \in \{0,1\}^m$ and a parameter $k \in \N$.
	
	{\bf Parameter: } $k$
	
	{\bf Question: } Distinguish between the following two cases:
	\begin{itemize}
		\item (YES) There exists $\bx \in \{0,1\}^{n}$ such that $\|\bx\|_0 \leqs k$ and $\bA \bx = \by$.
		\item (NO) For all choices of $\bx \in \Z^n$ such that $\|\bx\|_0 \leqs \eta\cdot k$ and $r \in \Z \setminus\{0\}$, we have $\bA\bx \neq r\cdot\by$.
	\end{itemize}
\end{framed}

The above problem is well defined for any $\eta \geqs 1$. The reduction for Theorem~\ref{thm:CVPmain} goes through the following steps. First, we use Lemma \ref{lem:csp-to-lvec} to give an FPT reduction from $\csp_\epsilon$ to $\lvec_{1 + \epsilon/3}$. Next, we use the gap amplification operation (Lemma \ref{lem:lvec-gap-amp}) to give a FPT reduction from $\lvec_{1 + \epsilon/3}$ to all $\lvec_{\eta}$ for all $\eta \ge 1$. Finally, we use Lemma \ref{lem:lvec-to-snvp} to reduce $\lvec_\eta$ to $\snvp_{p,\eta}$ (for any $p \geqs 1$).

\begin{lemma} \label{lem:csp-to-lvec}
	For all $\epsilon > 0$, there is an FPT reduction from $\csp_\epsilon$ to $\lvec_{1 + \epsilon/3}$.
\end{lemma}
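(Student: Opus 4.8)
\textbf{Proof plan for Lemma~\ref{lem:csp-to-lvec}.} The plan is to essentially reuse the reduction from $\csp_\varepsilon$ to $\gapvec_{1+\varepsilon/3}$ given in Lemma~\ref{lem:csp-to-gapvec}, but now interpreting the same matrix $\bA$ over $\Z$ rather than over $\F_2$, and checking that the additional ``scaling'' clause in the NO case of $\lvec$ (i.e.\ that $\bA\bx \ne r\cdot\by$ for all $r \in \Z \setminus \{0\}$, not merely $r = 1$) still holds. So, first I would take the input 2CSP instance $\Gamma = (G = (V,E), \Sigma, \{C_{uv}\})$ and build exactly the same $(0,1)$-matrix $\bA$ and vector $\by = \vone_{|V|+|E|} \circ \vzero_{2|E||\Sigma|}$ with $k = |V| + |E|$ as in the proof of Lemma~\ref{lem:csp-to-gapvec}; the rows are again indexed by $V$, by $E$, and by $E \times \Sigma \times \{0,1\}$, the columns by $(u,\sigma_u)$ and $(e,\sigma_0,\sigma_1)$, and the entries are the same $0/1$ entries as before. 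The analogues of equations~\eqref{eq:row-1}, \eqref{eq:row-2}, \eqref{eq:row-3} now hold over $\Z$ (with honest integer sums, no mod~$2$).

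For \textbf{completeness}, the same $\{0,1\}$-vector $\bx$ coming from a satisfying assignment $\psi$ works: each $u$-row and each $e$-row sums to exactly $1$ over $\Z$, and each $(e,\sigma,b)$-row sums to $1 + (-1)\cdot$(nothing)... wait — here one must be slightly careful, since over $\Z$ the entry contributed by column $(u_b,\sigma)$ and the entry contributed by column $(e,\sigma_0,\sigma_1)$ are both $+1$, so the $(e,\sigma,b)$-row evaluates to $2$, not $0$. I would fix this exactly as Khot-style reductions do: make the entry in the $(u_b,\sigma)$-column equal to $+1$ and the entry in the $(e,\sigma_0,\sigma_1)$-column (for matching $\sigma_b = \sigma$) equal to $-1$, so that the two cancel over $\Z$ and the $(e,\sigma,b)$-rows of $\by$ are $0$; over $\F_2$ this is the same matrix, so nothing changes in the $\gapvec$ analysis. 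With this sign convention, $\bA\bx = \by$ and $\|\bx\|_0 = |V|+|E| = k$, giving the YES case.

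For \textbf{soundness}, I argue the contrapositive. Suppose $\bx \in \Z^n$ with $\|\bx\|_0 \le \eta k$ (where $\eta = 1 + \varepsilon/3$) and $\bA\bx = r\cdot\by$ for some $r \in \Z \setminus \{0\}$. Looking at a $u$-row, $\sum_\sigma \bx_{(u,\sigma)} = r$, and looking at a $(e,\sigma,b)$-row, $\bx_{(u_b,\sigma)} = \sum_{\sigma_b = \sigma} \bx_{(e,\sigma_0,\sigma_1)}$. The key point is that the support-size budget forces $r \in \{-1,+1\}$ and in fact forces the solution to look like a genuine assignment: since for every $u \in V$ the $u$-row gives $\sum_\sigma \bx_{(u,\sigma)} = r$ we have at least one nonzero entry among the $(u,\cdot)$-columns, and similarly for every $e \in E$; this already accounts for $\ge |V| + |E| = k$ nonzero coordinates, so the remaining ``slack'' is at most $(\eta-1)k = \varepsilon k / 3$ extra nonzero coordinates. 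Then, following the $E_{\uni}$ argument in the proof of Lemma~\ref{lem:csp-to-gapvec} (now over $\Z$: call an edge $e$ ``uniform'' if exactly one column $(e,\cdot,\cdot)$ is nonzero and that entry is $\pm 1$, and similarly a vertex is ``clean'' if exactly one $(u,\cdot)$-column is nonzero with entry $r$), a counting argument shows all but $\le \varepsilon|E|$ edges are uniform with clean endpoints, and each such edge is satisfied by the induced assignment $\psi(u) := $ the unique $\sigma$ with $\bx_{(u,\sigma)} \ne 0$; hence $\val(\Gamma) \ge 1 - \varepsilon$. The one genuinely new bookkeeping point compared to Lemma~\ref{lem:csp-to-gapvec} — and the step I expect to need the most care — is ruling out $|r| \ge 2$ and ruling out large-magnitude integer entries: but $|r| \ge 2$ would force every $u$-row to have $\ge 2$ nonzero $(u,\cdot)$-entries or a single entry of magnitude $\ge 2$ (contributing $\ge 2$ to the support-weighted count per vertex, i.e.\ $\ge 2|V| + |E| > \eta k$ nonzero coordinates once $\varepsilon$ is small, a contradiction), and large entries are eliminated identically. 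I would package this as: if $\bA\bx = r\by$ with $r \ne 0$ and $\|\bx\|_0 \le \eta k$, then necessarily $r = \pm 1$ and the $\gapvec$-soundness analysis (with the sign-adjusted matrix) applies verbatim. Finally, the parameter is $k = |V| + |E| = O(|V|^2)$ and the reduction is clearly polynomial-time, so it is an FPT reduction.
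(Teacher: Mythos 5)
Your construction is the paper's: the same matrix as in Lemma~\ref{lem:csp-to-gapvec} reinterpreted over $\Z$ with the $(e,\sigma_b,b)$-row entries in the $(e,\sigma_0,\sigma_1)$-columns flipped to $-1$, the same witness for completeness, and the same $S_u$/$T_e$/$E_{\uni}$ counting skeleton for soundness. However, the one step you single out as the ``genuinely new bookkeeping point'' --- forcing $r=\pm 1$ and ruling out large-magnitude entries --- is handled incorrectly, and in fact the claim itself is false. The $\ell_0$ norm counts nonzero coordinates, not magnitudes, so a single entry $\bx_{(u,\sigma)}=r$ with $|r|\geqs 2$ contributes only $1$ to $\|\bx\|_0$; for instance, doubling the YES-case witness gives $\bA\bx = 2\by$ with support exactly $k$, so no support budget can force $r\in\{-1,+1\}$. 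Your fallback inequality $2|V|+|E| > (1+\epsilon/3)(|V|+|E|)$ also fails for dense constraint graphs at fixed $\epsilon$, since it requires $|V| > (\epsilon/3)(|V|+|E|)$.

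Fortunately the claim is unnecessary, and this is exactly how the paper proceeds: the soundness analysis never uses the value of $r$, only $r\neq 0$. From the $u$-rows and $e$-rows one gets $(\bA\bx)_u=(\bA\bx)_e=r\neq 0$, hence $|S_u|,|T_e|\geqs 1$; the counting argument (which uses only $\|\bx\|_0\leqs(1+\epsilon/3)k$) gives $|E_{\uni}|\geqs(1-\epsilon)|E|$; and for a uniform edge $e=(u,v)$ with $T_e=\{(\sigma^*_0,\sigma^*_1)\}$, the $e$-row gives $\bx_{(e,\sigma^*_0,\sigma^*_1)}=r$, while the sign-adjusted consistency rows give $\bx_{(u,\sigma)}=\sum_{\sigma_0=\sigma}\bx_{(e,\sigma_0,\sigma_1)}$, which equals $r\neq 0$ for $\sigma=\sigma^*_0$ and $0$ otherwise, so $S_u=\{\sigma^*_0\}$ (and symmetrically $S_v=\{\sigma^*_1\}$), regardless of the sign, the magnitude of $r$, or the magnitudes of any entries of $\bx$. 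With that, any labeling $\psi$ choosing $\psi(u)\in S_u$ satisfies every edge of $E_{\uni}$, giving $\val(\Gamma)\geqs 1-\epsilon$. So: drop the $r=\pm1$ step and the ``clean vertex / entry is $\pm1$'' conditions (which you cannot establish and do not need), and the rest of your argument is the paper's proof.
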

\begin{proof}
	Let $\Gamma = (G := (V,E),\Sigma,\{C_{uv}\}_{\{(u,v) \in E\}})$ be the $\csp_\epsilon$ instance. For clarity, we assume that $G$ is directed. From $\Gamma$, we construct $\bA \in \{-1,0,1\}^{m \times n}, \by \in \{0, 1\}^m$ where $m = |V| + |E| + 2|E||\Sigma|$ and $n = |V||\Sigma| + \sum_{(u,v) \in E}|C_{uv}|$ exactly the same as in the proof of Lemma \ref{lem:csp-to-gapvec} with only one difference: we set $\bA_{(e, \sigma_0, 0), (e, \sigma_0, \sigma_1)}$ and $\bA_{(e, \sigma_1, 1), (e, \sigma_0, \sigma_1)}$ to be -1 instead of 1 as in Lemma~\ref{lem:csp-to-gapvec}.

	We set the sparsity parameter to be $k = |V| + |E|$. Note that the matrix $\bA$ and the vector $\by$ here are $\{-1, 0,1\}$-objects over $\Z$ instead of $\F$. Additionally, by the construction of $\bA$, for every  $\bx \in \Z^n$, the vector $\bA\bx$ must satisfy equations (\ref{eq:row-1}) and (\ref{eq:row-2}) from Section \ref{sec:csp-to-gapvec} whereas, due to our change, equation (\ref{eq:row-3}) now becomes 
\begin{align*}
(\bA\bx)_{(e, \sigma, b)} = \bx_{(u_b, \sigma)} - \sum_{(\sigma_0, \sigma_1) \in C_e \atop \sigma_b = \sigma} \bx_{(e, \sigma_0, \sigma_1)}.
\end{align*}
	 We now argue completeness and soundness of the reduction.
	
	\paragraph{Completeness.} If $\Gamma$ is a YES instance, let $\psi:V \mapsto \Sigma$ be the satisfying labeling. We construct $\bx \in \{0,1\}^n$ as follows: we set $\bx_{(u,\psi(u))} = 1$ for every vertex $u \in V$. Additionally, for every $e = (u,v) \in E$, we set $\bx_{((u,v),\psi(u),\psi(v))} = 1$, and all the remaining coordinates of $\bx$ are set to zero. It is easy to see that $\bx$ as constructed is $(|V| + |E|)$-sparse, and satisfies $\bA\bx = \by$. 
	
	\paragraph{Soundness.} Let $\bx \in \Z^n$ be such that $\|\bx\|_0 \leqs \big(1 + \epsilon/3\big)k$ and $\bA \bx = r\cdot\by$ for some $r \neq 0$. As before, for every vertex $u \in V$, we construct the set $S_u :{=}\{\sigma | \bx_{(u,\sigma)} \neq 0\}$. Similarly, for every edge $e \in E$, we construct the set $T_e := \{(\sigma_0,\sigma_1) | \bx_{(e,\sigma_0,\sigma_1)} \neq 0\}$. By construction it follows that for every $u \in V$, $(\bA\bx)_u = y_u = r \neq 0$ and therefore $|S_u| \geqs 1$. Similarly, for every edge $e \in E$ we have $|T_e| \geqs 1$. Furthermore, we define $E_\uni := \{e \in E| |T_e| = 1\}$. Since the arguments used in Equations (\ref{eq:start})-(\ref{eq:end}) (from proof of Lemma \ref{lem:csp-to-gapvec}) depend only on the fact that $\|\bx\|_0 \leqs (1 + \epsilon/3)k$, again we have $|E_\uni| \geqs (1 - \epsilon)|E|$. 
	
	Now, consider any labeling $\psi:V \mapsto \Sigma$ such that $\psi(u) \in S_u$ for every $u \in V$. We shall show that any such labeling must satisfy every edge $e \in E_\uni$. Towards that end, we fix an edge $e = (u,v) \in E_\uni$ and let $T_e = \{(\sigma^*_0,\sigma^*_1)\}$. Since $|T_e| = 1$, using equation \eqref{eq:row-2} we have 
	\begin{equation*}
	r = \by_e = \sum_{(\sigma_0,\sigma_1) \in T_e} \bx_{(e,\sigma_0,\sigma_1)} = \bx_{(e,\sigma^*_0,\sigma^*_1)} 
	\end{equation*}	

	 Observe that $S_u = \{\sigma^*_0\}$; this follows immediately from substituting (the new) equation \eqref{eq:row-3} with $\sigma = \sigma^*_0, b = 0$ and from $\by_{(e,\sigma^*_0,0)} = 0$. Therefore, $\psi(u) = \sigma^*_0$ and, similarly, $\psi(v) = \sigma^*_1$. Furthermore, since $(\sigma^*_0,\sigma^*_1) \in C_{(u,v)}$, $\psi$ satisfies the edge $e$. Since, this happens for every $e \in E_\uni$, the labeling $\psi$ satisfies every constraint in $E_\uni$. This completes the proof.
\end{proof}

We now give the lemma for the gap amplification step.

\begin{lemma}					\label{lem:lvec-gap-amp}
	For any $c \in \N$ and $\eta \geqs 1$, there exists an FPT reduction from $\lvec_{\eta}$ to $\lvec_{\eta^c} $.
\end{lemma}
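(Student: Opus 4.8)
I would prove Lemma~\ref{lem:lvec-gap-amp} by transporting the composition-operator method of Section~\ref{sec:gap-amplification} from $\F$ to $\Z$. Concretely: define an operator $\oplus$ on $\lvec$ instances that is the integer analogue of the one used for $\gapvec$, prove a single-step amplification lemma in the spirit of Lemma~\ref{lem:gap_step}, and then iterate the self-composition a constant number of times (as in the proof of Theorem~\ref{thm:MLDmain}). The case $\eta = 1$ is trivial via the identity reduction, so assume $\eta > 1$.

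\textbf{The composition operator.} Given two $\lvec_\eta$ instances $(\bA,\bz,k_1)$ with $\bA\in\Z^{u\times v}$ and $(\bB,\bz',k_2)$ with $\bB\in\Z^{u'\times v'}$, I would define $(\bC,\bw,k_2(k_1+1)) = (\bA,\bz,k_1)\oplus(\bB,\bz',k_2)$ by exactly the block structure of Section~\ref{sec:gap-amplification} (partitioning rows into $S_0,\dots,S_{v'}$ and columns into $T_0,\dots,T_{v'}$), with one change forced by working over $\Z$: in the row block $S_i$ ($i\in[v']$) the sub-matrix $\bC_{S_i,T_0}$ carries $-\bz$ in its $i$-th column (rather than $\bz$), so that block $S_i$ imposes the constraint $\bA\bx^i = x^0_i\bz$. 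The rest is unchanged: $\bC_{S_0,T_0}=\bB$ with $\bw_{S_0}=\bz'$, $\bC_{S_i,T_i}=\bA$, all other sub-blocks zero, and $\bw_{S_i}=\mathbf{0}$ for $i\geqs1$. Note $\bw\in\{0,1\}^{\,\cdot}$ and $\bC$ has entries from $\{-1,0,1\}$-combinations of the input matrices, so $(\bC,\bw,k_2(k_1+1))$ is again a valid $\lvec$ instance.

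\textbf{The one-step lemma.} I would show: (i) if both inputs are YES instances of $\lvec_\eta$, with $\{0,1\}$-witnesses $\mathbf{a},\mathbf{b}$ satisfying $\bA\mathbf{a}=\bz$, $\bB\mathbf{b}=\bz'$, $\|\mathbf{a}\|_0\leqs k_1$, $\|\mathbf{b}\|_0\leqs k_2$, then $\bx$ defined by $\bx^0=\mathbf{b}$ and $\bx^i = b_i\mathbf{a}$ is a $\{0,1\}$-vector with $\bC\bx=\bw$ and $\|\bx\|_0\leqs k_2(k_1+1)$ — the sign flip makes every $S_i$-block evaluate to $\mathbf{0}$. (ii) If both inputs are NO instances and $\bC\bx = r\bw$ for some $r\in\Z\setminus\{0\}$, then block $S_0$ gives $\bB\bx^0 = r\bz'$, so $\|\bx^0\|_0 > \eta k_2$ by the NO guarantee for $(\bB,\bz',k_2)$; and for each of the $>\eta k_2$ indices $i$ with $x^0_i\neq0$, block $S_i$ gives $\bA\bx^i = x^0_i\bz$ with $x^0_i\neq0$, so $\|\bx^i\|_0 > \eta k_1$ by the NO guarantee for $(\bA,\bz,k_1)$. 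Summing, $\|\bx\|_0 > \eta k_2 + \eta^2 k_1 k_2$. Specializing to $(\bA,\bz,k)\oplus(\bA,\bz,k)$, and writing $\eta' := \tfrac{\eta k + \eta^2 k^2}{k^2+k} = \eta^2\bigl(1-\tfrac{1-\eta^{-1}}{k+1}\bigr)$, this is the integer counterpart of Corollary~\ref{corr:gap-amplify}: an FPT reduction from $k$-$\lvec_\eta$ to $(k^2+k)$-$\lvec_{\eta'}$, valid for every $k\geqs1$. One checks $\eta' \geqs 1/2$ times $\eta^2$ always, and $\eta' > \eta$ whenever $\eta > 1$, so the gap strictly grows.

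\textbf{Iteration and uniform bound on the number of steps.} Set $k_0=k$, $k_{j+1}=k_j(k_j+1)$ and $\eta_0=\eta$, $\eta_{j+1}=\eta_j^2\bigl(1-\tfrac{1-\eta_j^{-1}}{k_j+1}\bigr)$; composing $j$ times gives an FPT reduction from $k$-$\lvec_\eta$ to $k_j$-$\lvec_{\eta_j}$. A short computation gives $\eta_{j+1}-\eta_j = \eta_j(\eta_j-1)\tfrac{k_j}{k_j+1} \geqs \tfrac{\eta_j(\eta_j-1)}{2} \geqs \tfrac{\eta(\eta-1)}{2}$ (using $k_j\geqs1$ and that $\eta_j$ is non-decreasing, since $1-\tfrac{1-\eta_j^{-1}}{k_j+1}\geqs\eta_j^{-1}$). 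Hence after $J := \lceil 2(\eta^c-\eta)/(\eta(\eta-1))\rceil$ steps — a constant depending only on $\eta$ and $c$, uniform over all $k\geqs1$ — we have $\eta_J\geqs\eta^c$, and truncating to $k_J$-$\lvec_{\eta^c}$ finishes. Since $J$ is a fixed constant, the final instance has size polynomial in the original and $k_J$ is a computable function of $k$, so this is a genuine FPT reduction. (If desired, for the finitely many $k$ below any needed threshold one can instead answer directly: $k$-$\lvec_\eta$ is in $\mathsf{P}$ for bounded $k$ since one only has to test for a $\{0,1\}$-witness of weight $\leqs k$.)

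\textbf{Main obstacle.} The only point that differs substantively from the $\F$ case is the passage to $\Z$: over $\F_2$ the coefficient $x^0_i$ is automatically $0$ or $1$, whereas over $\Z$ it is an arbitrary nonzero integer, so block $S_i$ imposes $\bA\bx^i = x^0_i\bz$ rather than $\bA\bx^i=\bz$. This is precisely why the NO case of $\lvec$ was phrased with ``$\bA\bx = r\cdot\by$ for some $r\in\Z\setminus\{0\}$'' instead of ``$\bA\bx\neq\by$'', and the care needed is in checking that this form of the NO condition propagates correctly through $\oplus$ — in particular that the composed NO condition (quantifying over solutions to $\bC\bx=r\bw$, $r\neq0$) is itself of this form — and in making the sign flip so that the YES witness remains $\{0,1\}$-valued. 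Everything else is a routine adaptation of Section~\ref{sec:gap-amplification}.
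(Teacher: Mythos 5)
Your proposal is correct and takes essentially the same route as the paper: the same block composition operator adapted from Section~\ref{sec:gap-amplification} to $\mathbb{Z}$, the same one-step YES/NO analysis (with the NO guarantee quantified over all nonzero multiples $r\cdot\mathbf{w}$), iterated a constant number of times depending only on $\eta$ and $c$. The differences are minor — you self-compose rather than compose with the original instance, and your explicit sign flip in $\mathbf{C}_{S_i,T_0}$ (keeping the YES witness $\{0,1\}$-valued) together with your uniform-in-$k$ bookkeeping of the gap growth is, if anything, more careful than the paper's terse treatment of these points.
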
	

\begin{proof}
	As in Section \ref{sec:gap-amplification}, we define a composition operation $\oplus_{\cL}$ on $\lvec$ instances. Given $\lvec_{\eta_1}$ instance $(\bA,\by_1,k_1)$ (where $\bA \in \Z^{u \times v}, \by_1 \in \Z^u$) and $\lvec_{\eta_2}$ instance $(\bB,\by_2,k_2)$ (where $\bB \in \Z^{u' \times v'}, \by_2 \in \Z^{u'}$), we denote the composed instance as $(\bC,\by,k') = (\bA,\by_1,k_1)  \oplus_{\cL}  (\bB,\by_2,k_2)$, with $k' = k_2 + k_1k_2$. The matrix $\bC \in \Z^{(u' + uv') \times (v' + vv')}$ and the vector $\by \in \Z^{u' + uv'}$ is constructed exactly as in Section \ref{sec:gap-amplification}. We argue the completeness and soundness properties of the operator.
	
	\paragraph{Completeness}: Let $(\bA,\by_1,k_1)$ and $(\bB,\by_2,k_2)$ be YES instances. Then there exists vector $\bx_1 \in \{0,1\}^v$ such that $\|\bx_1\|_0 \leqs k_1$ and $\bA\bx_1 = \by_1$. Similarly there exists vector $\bx_2 \in \{0,1\}^{v'}$ such that $\|\bx_2\|_0 \leqs k_2$ and $\bB\bx_2 = \by_2$. Borrowing notation from Section \ref{sec:gap-amplification}, for any $\bx \in \Z^n$, we define the sub-vectors $\bx^0,\bx^1,\ldots ,\bx^{v'}$. We construct the vector $\bx \in \{0,1\}^{v' + vv'}$ as follows. We set $\bx^0 = \bx_2$ and for every $i \in [v']$ we set $ \bx^i = x^0_i\cdot\bx_1$. It is easy to check that $\bx$ as constructed satisfies $\|\bx\|_0 \leqs k_2 + k_1k_2$ and $\bC\bx = \by$. 
	
	\paragraph{Soundness}: Let $(\bA,\by_1,k_2)$ and $(\bB,\by_2,k_2)$ be NO instances. As in the the proof of Lemma \ref{lem:gap_step}, we consider the row-blocks $S_0,S_1,\ldots,S_{v'}$ which form an identical partition of $[u' + uv']$. Let $\bx \in \Z^{v' + vv'}$ be any vector such that $\bC \bx = r\cdot\by$ for some $r \in \Z \setminus \{0\}$. Since $\bx^0$ satisfies the constraints along $S_0$, we have $\bB \bx^0 = r\cdot\by_2$, and therefore $\|\bx^0\|_0 > \eta_2\cdot k_2$. Fix any $i \in [v']$ such that $x^0_i \neq 0$. Then the sub-vector $\bx^i$ must satisfy the constraints along row-block $S_i$ i.e., $\bA \bx^i = rx^0_i.\by_1$ and therefore, we have $\|\bx^i\|_0 > \eta_1k_1$. Since, this happens for every such choice of $i \in [v']$ such that $x^0_i \neq 0$, we have $\|\bx\|_0 > \eta_2k_2 + \eta_1\eta_2k_1k_2$.

	Therefore, given $\lvec_\eta$ instance $(\bA,\by,k)$, we can construct the $i$-wise composed instance $(\bA^{(i)},\by^{(i)},k(k+1)^{i-1}) = (\bA,\by,k) \oplus_\cL (\bA^{(i-1)},\by^{(i-1)},k(k + 1)^{i-2})$. Using the above arguments, it is easy to check that $(\bA^{(i)},\by^{(i)},k(k+1)^{i-1})$ is a $\lvec_{\eta^{3i/2}}$ instance. Setting $i \geqs \lceil 3c/2 \rceil$ gives us the required gap. 
\end{proof}

Finally, the following lemma completes the reduction to $\snvp$. 

\begin{lemma}					\label{lem:lvec-to-snvp}	
	For all choices of $\eta \geqs 1$ and $p \geqs 1$, there is an FPT reduction from $k$-$\lvec_\eta$ to $k$-$\snvp_{p,\eta}$. Furthermore, the resulting $\snvp_{p,\eta}$ instance $(\bA',\by',k)$ (where $\bA' \in \Z^{n' \times m}$ and $\by' \in \Z^{n'}$) satisfies the following properties:
	\begin{itemize}
		\item For the YES case, there exists $\bx \in \Z^m$ such that $(\bA' \bx - \by')$ is a $\{0,1\}$-vector and $\|\bA' \bx - r\cdot\by' \|_0 \leqs k$.
		\item For the NO case, for all $\bx \in \Z^m$ and $r \in \Z \setminus \{0\}$, we have $\|\bA' \bx - r\cdot\by' \|_0 > \eta \cdot k$
	\end{itemize}
\end{lemma}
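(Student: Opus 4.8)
The plan is to give an essentially trivial FPT reduction from $\lvec_\eta$ to $\snvp_{p,\eta}$ that works for all $p \geqs 1$, by simply reusing the \emph{same} matrix and target vector and observing that over $\Z$ the $\ell_p^p$ cost of a $\{0,1\}$-vector equals its Hamming weight, and that the Hamming norm lower bounds $\ell_p^p$ in general (since each nonzero integer coordinate contributes at least $1$ to $\|\cdot\|_p^p$). So first I would set $\bA' = \bA$, $\by' = \by$, $n' = m$ (in the lemma's notation for $\lvec$, the matrix is $\bA \in \Z^{m \times n}$ with target $\by \in \{0,1\}^m$, so $\bA'$ has $m$ rows), and $k' = k$. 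The claimed ``furthermore'' properties are then just restatements of the YES/NO guarantees of $\lvec_\eta$ phrased in terms of $\bA'\bx - r\by'$.

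For completeness: if $(\bA,\by,k)$ is a YES instance of $\lvec_\eta$, there is $\bx \in \{0,1\}^n$ with $\|\bx\|_0 \leqs k$ and $\bA\bx = \by$. Then $\bA'\bx - \by' = \bzero$, which is trivially a $\{0,1\}$-vector with $\ell_p^p$ norm $0 \leqs k$; taking $r = 1$ gives $\|\bA'\bx - r\by'\|_0 = 0 \leqs k$. (Actually the statement we must produce only requires $\bA'\bx - \by'$ to be a $\{0,1\}$-vector of Hamming weight $\leqs k$, which the zero vector certainly satisfies; but to match Theorem~\ref{thm:CVPmain}'s YES case we keep the $\{0,1\}$-vector formulation.) Hence the resulting instance lands in the YES case of $\snvp_{p,\eta}$ since $\|\bA'\bx - \by'\|_p^p = 0 \leqs k$.

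For soundness: suppose $(\bA,\by,k)$ is a NO instance of $\lvec_\eta$, i.e.\ for all $\bx \in \Z^n$ with $\|\bx\|_0 \leqs \eta k$ and all $r \in \Z \setminus \{0\}$ we have $\bA\bx \neq r\by$. Take any $\bx \in \Z^n$ and any $r \in \Z\setminus\{0\}$; I want $\|\bA'\bx - r\by'\|_0 > \eta k$, which also gives $\|\bA'\bx - r\by'\|_p^p \geqs \|\bA'\bx - r\by'\|_0 > \eta k > \eta \cdot k$ (using that for integer vectors the $\ell_p^p$ norm is at least the number of nonzero coordinates). Suppose toward a contradiction that $\|\bA\bx - r\by\|_0 \leqs \eta k$. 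Let $\bx'$ be obtained from $\bx$ by zeroing out the coordinates of $\bx$ whose corresponding column of $\bA$ contributes nothing; more carefully, we cannot directly bound $\|\bx\|_0$, so instead I argue at the level of $\bA\bx$: write $\bw = \bA\bx - r\by$, a vector with $\|\bw\|_0 \leqs \eta k$. Then $\bA\bx = r\by + \bw$. This does not immediately contradict the NO hypothesis because that hypothesis constrains sparse $\bx$, not sparse $\bA\bx$. The cleaner fix is to note that the NO case of $\lvec_\eta$, as reduced through Lemma~\ref{lem:csp-to-lvec} and Lemma~\ref{lem:lvec-gap-amp}, can be strengthened: by inspecting those proofs, the NO guarantee actually rules out $\bA\bx = r\by$ with $\|\bA\bx - r\by\|_0$ small as well (since the soundness arguments there only use that the \emph{image} coordinates are controlled). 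I expect this to be the main subtlety: making sure the ``sparsity of $\bx$'' hypothesis in $\lvec$ translates to the ``Hamming norm of $\bA'\bx - r\by'$'' conclusion needed for $\snvp$. If a direct translation fails, the remedy is to modify the reduction by appending an identity block, $\bA' = \begin{bmatrix} \bA \\ \Id_n \end{bmatrix}$ and $\by' = \begin{bmatrix} \by \\ \bzero \end{bmatrix}$ scaled by $\eta k + 1$ copies of the $\bA$-block, exactly as in the reduction from $\gapvec$ to $\sncp$ (Lemma~\ref{lem:gapvec-to-sncp}), so that $\|\bA'\bx - r\by'\|_0 \geqs (\eta k + 1)\|\bA\bx - r\by\|_0 + \|\bx\|_0$; then whenever $\|\bx\|_0 \leqs \eta k$ and $\bA\bx = r\by$ is impossible (NO case), one of the two terms forces $\|\bA'\bx - r\by'\|_0 > \eta k$, and in the YES case $\bx \in \{0,1\}^n$ with $\bA\bx = \by$ gives $\bA'\bx - \by'$ equal to $\bx$ stacked below zeros, a $\{0,1\}$-vector of weight $\leqs k$. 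Either way the parameter stays $k$ up to a constant factor and the reduction is clearly FPT, so the lemma follows.
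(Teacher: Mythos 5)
Your fallback construction is exactly the paper's proof: the paper reduces $\lvec_\eta$ to $\snvp_{p,\eta}$ by reusing the construction of Lemma~\ref{lem:gapvec-to-sncp} --- $\lceil \eta k + 1\rceil$ stacked copies of $\bA$ (resp.\ $\by$) above an identity block (resp.\ $\bzero$) --- and argues the NO case by the same two-case split on $\|\bx\|_0$, with the closing observation that a $\{0,1\}$-witness in the YES case and Hamming-norm soundness in the NO case give the $\ell_p^p$ guarantees for every $p \geqs 1$. Be aware, though, that the ``direct translation'' you first propose ($\bA'=\bA$, $\by'=\by$) genuinely fails --- the NO hypothesis of $\lvec_\eta$ says nothing about dense $\bx$, and even for sparse $\bx$ it only yields $\|\bA\bx - r\by\|_0 \geqs 1$, not $> \eta k$ --- and the speculated strengthening of the NO guarantee of Lemmas~\ref{lem:csp-to-lvec} and~\ref{lem:lvec-gap-amp} to vectors with sparse \emph{image} is not supported by those proofs, whose counting arguments bound $\|\bx\|_0$ directly. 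So the identity-block reduction is the proof itself, not an optional remedy.
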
	
	 \begin{proof}
	 	Let $(\bA,\by,k)$ be a $\lvec_\eta$ instance where $\bA \in \Z^{n \times m}$ and $\by \in \Z^{n}$. We construct the matrix $\bA' \in \Z^{(\lceil \eta k + 1\rceil n + m) \times m}$ and $\by' \in \Z^{\lceil \eta k + 1\rceil n + m}$ identical to the proof of Lemma \ref{lem:gapvec-to-sncp}, and claim that $(\bA',\bx',k)$ is a $\snvp_\eta$ instance with the additional properties as guaranteed by the lemma.
	 	
	 	Indeed, if $(\bA,\by,k)$ is a YES instance, then there exists $\bx \in \{0,1\}^m$ such that $\|\bx\|_0 \leqs k$ such that $\bA\bx = \by$. Consequently, $\bA'\bx - \by'$ is $0$ on the first $\lceil \eta k + 1\rceil n$ coordinates, and it is just $\bx$ on the last $m$ coordinates. Conversely, if $(\bA,\by,k)$ is a NO instance, then as before we consider two cases. If $\bx \in \Z^m$ such that $\|\bx\|_0 > \eta k$, then it is easy to see that $\bA'\bx - \by'$ has Hamming weight greater than $\eta k$ on the last $m$-coordinates. On the other hand if $\|\bx\|_0 \leqs \eta k$, then we know that $\bA \bx \neq r\cdot\by$ for any $r \in \Z \setminus \{0\}$ Therefore $\|\bA\bx - r \cdot \by\|_0 \geqs 1$ which implies that $\|\bA'\bx - r \cdot \by'\|_0 \geqs \eta k$. 
	 	
	 	Finally, since in the YES case the witness is a $\{0,1\}$-vector and in the NO case the guarantees are in the Hamming norm, the completeness and soundness properties are satisfied by $(\bA',\by',k)$ for any value of $p \geqs 1$.
	\end{proof}

	\noindent{\bf Putting Things Together}: Given a $\csp_\epsilon$-instance $\Gamma$, using Lemma \ref{lem:csp-to-lvec} we construct a $\lvec_{1 + \epsilon/3}$-instance. Using Lemma \ref{lem:lvec-gap-amp}, we reduce $\lvec_{1 + \epsilon/3}$ to $\lvec_\eta$ for any choice of $\eta \geqs 1 + \epsilon/3$. Finally, using Lemma \ref{lem:lvec-to-snvp}, we reduce the $\lvec_\eta$ instance to $\snvp_{p,\eta}$ instance. Since all the steps in the reduction are FPT, this completes the proof.

\end{document}